\newcommand{\nc}{\newcommand}
\nc{\nn}{\nonumber}
\nc{\txt}{\textrm}
\nc{\txtsup}{\textsuperscript}
\nc{\txtsub}{\textsubscript}
\nc{\calL}{\mathcal{L}}
\nc{\U}{\mathcal{U}}
\nc{\T}{\mathcal{T}}
\nc{\E}{\mathcal{E}}
\nc{\calH}{\mathcal{H}}
\newcommand{\CC}{\mathbb{C}}
\newcommand{\ii}{ {\rm i} }
\def\bra#1{\mathinner{\langle{#1}|}}
\def\ket#1{\mathinner{|{#1}\rangle}}
\newcommand{\ave}[1]{{\langle #1\rangle}}
\newcommand{\ZZ}{\mathbb{Z}}
\newcommand{\tr}{\rm{tr}}
\newcommand{\Aa}{\mathcal{A}}
\newcommand{\A}{\hat{\mathcal{A}}}
\newcommand{\sA}{\mathfrak{A}}
\def\one{\mathbbm{1}}
\def\bra#1{\mathinner{\langle{#1}|}}
\def\ket#1{\mathinner{|{#1}\rangle}}
\newtheorem{corollary}{Corollary}
\newtheorem{definition}{Definition}
\newtheorem{remark}{Remark}
\newcommand{\LL}{{{\cal L}}}
\def\tr{{{\rm tr}}}
\def\one{\mathbbm{1}}
\def\Re{{\,{\rm Re}\,}}
\def\Im{{\,{\rm Im}\,}}
\def\Re{{\,{\rm Re}\,}}
\newcommand{\RR}{\mathbb{R}}
\newtheorem{theorem}{Theorem}
\newtheorem{ass}{Assumption}
\begin{document}

\title{Unified theory of local quantum many-body dynamics: Eigenoperator thermalization theorems}

\author{Berislav Bu\v{c}a}
\email{berislav.buca@nbi.ku.dk}
\affiliation{Niels Bohr International Academy, Niels Bohr Institute, Copenhagen University, Universitetsparken 5, 2100 Copenhagen, Denmark}
\affiliation{Clarendon Laboratory, University of Oxford, Parks Road, Oxford OX1 3PU, United Kingdom}


\date{\today}
\begin{abstract}
Explaining quantum many-body dynamics is a long-held goal of physics.  A rigorous operator algebraic theory of dynamics in locally interacting systems in any dimension is provided here in terms of time-dependent equilibrium (Gibbs) ensembles. The theory explains dynamics in closed, open and time-dependent systems, provided that relevant pseudolocal quantities can be identified, and time-dependent Gibbs ensembles unify wide classes of quantum non-ergodic and ergodic systems. The theory is applied to quantum many-body scars, continuous, discrete and dissipative time crystals, Hilbert space fragmentation, lattice gauge theories, and disorder-free localization, among other cases. Novel pseudolocal classes of operators are introduced in the process: \emph{projected}-local, which are local only for some states, \emph{crypto}-local, whose locality is not manifest in terms of any finite number of local densities and \emph{transient} ones, that dictate finite-time relaxation dynamics. An immediate corollary is proving saturation of the Mazur bound for the Drude weight. This proven theory is intuitively the rigorous algebraic counterpart of the weak eigenstate thermalization hypothesis and has deep implications for thermodynamics: quantum many-body systems 'out-of-equilibrium' are actually \emph{always} in a time-dependent equilibrium state for any natural initial state. The work opens the possibility of designing novel out-of-equilibrium phases, with the newly identified \emph{scarring} and \emph{fragmentation} phase transitions being examples. 
\end{abstract}

\maketitle

\section{Introduction}
In recent decades the eigenstate thermalization hypothesis (ETH) \cite{ETHReview} has become the cornerstone of our understanding of non-equilibrium quantum many-body dynamics. It deals with how an isolated quantum many-body system when prepared in a far-from-equilibrium initial state can relax to a state that is effectively in equilibrium for the purpose of determining expectation values of local observables. The \emph{weak} ETH in canonical form concerns the time-averaged dynamics of observables, and states, \cite{ETHReview},
\begin{align}
&\lim_{T \to \infty} \frac{1}{T}\int_0^T dt \bra{\psi(t)}O\ket{\psi(t)}=\frac{1}{N} \sum_j \bra{j}O\ket{j} \nonumber\\
&=\frac{1}{Z}\tr\left[O\exp(-\beta H+\sum_\alpha \mu_\alpha Q_\alpha)\right], \label{ETH}
\end{align}
where the sum in the first line goes over an appropriate \emph{microcanonical} window of the joint eigenstates $\ket{j}$ of the systems's Hamiltonian $H$ and a set of conservation laws $Q_\alpha$ $[Q_\alpha,H]=0$, and $Z$ and $N$ are normalization constants. The expression is usually written as a sum over the eigenstates, and should be replaced with an integral in the thermodynamic limit provided that the eigenstates are well defined there. Physically, the hypothesis states that the long-time expectation values of all observables is set only by the initial expectation values of $H$ and $Q_\alpha$ in terms of a (generalized) Gibbs ensemble via the Lagrange multiplies $\beta$ and $\mu_\alpha$. In this form the ETH also accounts for quantum integrable models that have an infinite set of conservation laws \cite{thermoreview}. These principles of local equilibriation have also been fundamental in the recent successful approach of generalized hydrodynamics \cite{castroalvaredo2016emergent,bertini2016transport,Doyon1}. The strong version of ETH \cite{ETHReview} is postulated as holding also without the time averaging in \eqref{ETH}.

However, since its formulation there has been an unmet desire to prove ETH. Only in recent years have systems been identified, beyond quantum integrable ones, that seemingly violate ETH. These include quantum many-body scars \cite{scars,scars2,SerbynScars,HoshoScars,JamirScars,ZhengScars1,ZhengScars2,ArnabScars1,ArnabScars2}, Hilbert space fragmented models \cite{Fragmentation1,Fragmentation2,fragmentationSanjay}, time crystals \cite{Wilczek2012,VedikaLazarides,liang2020time} and others \cite{Zhou_2022,JadDis,Dima}. Naturally, they have drawn lots of attention due to their non-ergodic and non-mixing dynamics seemingly defying ETH. 

Indeed, even though ETH may possibly be provable for certain quantum systems, there is no reason to expect that it will hold in full generality. The reason for this is that since ETH is a statement about thermodynamically large systems, eigenstates may be singular, or they may not exist \cite{operator2,spectraltheory}. Moreover, even eigenvalues and the spectra of $H$ and $Q_\alpha$ may display singularities. Such non-analyticies are important and cannot be ignored, e.g. they are the reason behind phase transitions \cite{operator1,spinsystsemsbook}. 

In order to therefore "prove" ETH an equivalent formulation in terms of a rigorous theory is needed. The standard choice is to describe local observables by $C^*$ algebras \cite{operator1,operator2}. In other words, one needs to move from eigenstates to eigenoperators. Here I will focus on locally interacting lattice models on hypercubic graphs in arbitrary dimension. I prove that for sufficiently low-entangled initial state, i.e. clustering (including physically realistic equilibrium states), the general long-time dynamics of all local observables in all such systems is described by a time-dependent generalized Gibbs ensemble determined by pseudolocal quantities (even without time-averaging in \eqref{ETH}). In particular, two immediate corollaries are the operator equivalent of \eqref{ETH} for the time average and saturation of the Mazur bound for the Drude weight. Moreover, it will be shown that a time-dependent Gibbs ensemble describes dynamics for all times (not just in the long-time limit). Hence, remarkably, a quantum many-body system is always in a state similar to an equilibrium state that contains exponentially decaying non-conserved quantities. In other words, the time-dependent generalized Gibbs ensemble is the effective state describing the dynamics of all local observables (the full state of the system, if defined, may be pure provided that it was initally pure). The reason for this is the bijection (one-to-one correspondance) between pseudolocal quantities and local observables \cite{Doyon}. It will turn out that knowing the dynamics of pseudolocal quantities gives all the information about the dynamics of local quantities, but the dynamics of pseudolocal quantities is much easier to find analytically. For instance, the Hamiltonian itself $H$ is a pseudolocal quantity and has trivial dynamics for an isolated system $\frac{d}{dt}H=\ii[H,H]=0$. Indeed, a priori to find dynamics of local observables one needs to find all the eigenstates of $H$. Here we will rather show how to find solutions for the dynamics of \emph{all} local observables provided one knows a \emph{much} smaller (usually finite) set of special pseudolocal quantities.

Later on, the theory will be explicitly used for recent topical non-ergodic and non-mixing cases, including quantum many-body scars, time crystals, projected Hamiltonians, Hilbert space fragmented models, disorder free localized models, lattice gauge theories and others. In the process I will introduce several kinds of pseudolocal quantities, including projected-local (local only for some states) and crypto-local whose locality is not manifest in terms of sums of local quantities. This eigenoperator thermalization theory holds for open (dissipative), time-dependent and independent quantum systems and does not rely on integrability, holds in any dimension and relies only on locality of interactions and clustering of the initial state. Note that the theory for open systems is non-trivial. One might assume that if we have a $D$-dimensional system we may treat any environment as a "fictive" enlargement of the system to $D+k$, and then studying only observables in the original system i.e. perform a Stinespring dilation \cite{paulsen_2003}, but such dilations are in general non-local and hence our theory would not apply without explicit generalizations. 

{\bf{Outline of the article--}} In Sec.~\ref{previouswork} I discuss relation between this article and the previous work done on pseudolocality and thermalization, provide the main definitions needed, and demonstrate the importance of pseudolocality with a specific example often overlooked in the literature. The overview of the main results and discussion of the proof of weak ETH, together with the descriptions of the long-time dynamics of several non-ergodic models is given in Sec.~\ref{overview}. Sec.~\ref{mainsection} contains the main theorems, which are proved in the Appendix. These are used to study various physical examples, including scars and fragmentation in Sec.~\ref{applications}. Finally, the Conclusion (Sec.~\ref{conclusion}) contains a list of possible immediate research directions stemming from the present work.

\section{Preliminaries and previous work}
\label{previouswork}
In order to rigorously study dynamics of quantum many-body systems in the thermodynamic limit we need to move to the framework of $C^*$ algebras. This is to be contrasted with standard ETH 
where one focuses on the eigenstates of the system's Hamiltonian $H$. However, eigenstates may not even exist in the thermodynamic limit and more broadly an inner product on the corresponding Hilbert space $\mathcal{H}$ is not well-defined \cite{operator2}. This is more than just a mathematical curiosity because related discontinuities are responsible for phase transitions \cite{operator1,operator2}. Indeed, operator algebraic approaches to dynamics and thermalization of quantum many-body systems have a long history - from the work of von Neumann on the ergodic theorem \cite{vonNeumann} and Robinson, Emch, Hume, Narnhofer and others, e.g. \cite{Emch1,Robinson1,Robinson2,Robinson3}. 

The physical systems studied here will be locally interacting $D$-dimensional lattice models on infinite hypercubic lattices, with a set of sites $\Gamma=\mathcal{Z}^D$, with each \emph{site} $x$ having a finite dimensional space of matrices and finite subsets of \emph{balls} $\Lambda$ of the full lattice of size $V:=|\Lambda|$. Correspondingly, local operators $O,P,Q\ldots$ will form a $C^*$-algebra $\mathfrak{U}_{loc}:=\bigotimes_{x\in \Lambda} M_d{(\CC)}$, where $d$ is the dimension of the local matrix of operators on site $x$. The algebra $\mathfrak{U}_{loc}$ is equipped with a norm that may be Cauchy completed to the full quasi-local algebra $\mathfrak{U}$ \cite{operator1,spinsystsemsbook}. More specifically, defining a standard state $\omega$ as a positive linear functional on the algebra $\mathfrak{U}_{loc}$ (with $\omega(\one)=1$), with the finite case having the familiar density matrix representation $\rho$, $\omega(O)=\tr(\rho O)$.  The standard Gelfand-Naimark-Segal (GNS) construction allows for Cauchy completion with respect to the norm induced by the (symmetrized) connected correlator inner product \cite{Doyon}, 
\begin{equation}
\ave{O,Q}^c_\omega:=\sum_{x \in \Gamma}\frac{1}{2}\omega(\{O_x^\dagger,Q\})-\omega(O_x^\dagger)\omega(Q) \label{cc}
\end{equation}
where $O_x$ is the displacement of $O$ by $x$, $\dagger$ denotes the conjugate, and $\{x,y\}:=xy+yx$. 

A crucial notion will be that of pseudolocal quantities, introduced in \cite{Tomaz_Prosen_1998,ProsenQL2} and \cite{QL4}, defined rigorously by Doyon and this is the framework we will use in the article. Under Doyon's framework the \eqref{cc} exists provided that the state $\omega$ is \emph{p-clustering}, i.e. essentially $|\omega(O Q)-\omega(O)\omega(Q)| \le C {\rm{dist}}(O,Q)^{-p}$, where $C$ is a constant that does not depend on distance and the operators $O,Q$. This defines a Hilbert space of local observables $\mathcal{H}_\omega$. Pseudolocal quantities \cite{footnote1} are linear functionals defined as limits of sequences of local operators $A_V \in \mathfrak{U}_{loc}$ satisfying the following conditions (A) $\omega(A_V^\dagger A_V) \le \gamma V$, for some $\gamma$ and $\forall V$, (B) the limit $\Aa_\omega(O):=\lim_{V \to \infty} \omega(A_V^\dagger O)$ exists for all $O \in \mathfrak{U}_{loc}$. Without loss of generality we take $\omega(A_V)=0$. One may also define two-sided pseudolocal quantities $\A_\omega(O):=\lim_{V \to \infty}\frac{1}{2} \omega(\{A_V, O\})$ and right pseudolocal quantities $\Aa^\dagger_\omega(O):=\lim_{V \to \infty} \omega(O A_V)$. The results generalize directly for all these three types of quantities and we consider them interchangeably.  Doyon demands that the whole construction is translationally invariant. We will relax this requirement partially and later fully (to be defined precisely later). It is important to note that pseudolocality is state dependent, a quantity may be pseudolocal for one state and not another. This is to be contrasted with the more standard and restrictive notion of (quasi-)locality for which one is used to thinking of pseudolocal quantities as sums of local terms. The present work will, apart from pseudolocal conserved charges that were studied extensively before (e.g. \cite{QL4}), also rely on recently introduced pseudolocal dynamical symmetries (e.g. \cite{BucaJaksch2019,Marko1}). We will see here that pseudolocal dynamical symmetries are enough to characterize essentially all non-ergodic and ergodic dynamics. 

One of Doyon's crucial results is that there is a bijection $\mathfrak{D}$ between the Hilbert space of local operators $\mathcal{H}_\omega$ and the set of $\Aa_\omega$ (denoted as $\mathfrak{A}_\omega$). More specifically, $\forall \Aa_\omega, O$ there exists a $Q \in \mathcal{H}_\omega$ such that
\begin{equation}
\Aa_\omega(O)=\ave{Q^\dagger,O}^c_\omega, \label{quantudef}
\end{equation}
and similarly for the two-sided and right pseudolocal quantities. 
Phyiscally, $Q$ is the local density of the quantity $\Aa_\omega$. Indeed, one may physically think of the limit $A=\lim_{V \to \infty} \sum_x Q_{x,V}$ as the pseudolocal quantity, as well as mathematically provided that $Q_{x,V}$ is a Cauchy sequence, which means intuitively that as $V \to \infty$ the "support" of $Q_{x,V}$ grows with $V$ in a well-defined way. 

Another important notion is that of a \emph{pseudolocal state} $\omega:=\omega_1$, defined via a set of $\{\Aa_s]\}$ and its corresponding flow over \emph{p-clustering} states,
\begin{equation}
\omega_{s_+(O)}-\omega_{s_-}(O)=\int_{s_-}^{s_+} ds \Aa_s(O), \label{pseudolocalstate}
\end{equation}
$\forall 0\le s_- < s_+\le 1$ and $\forall O$. In the very useful case when $\omega_s(O)$ is analytic we have simply,
\begin{equation}
\frac{d}{ds}\omega_s(O)=\Aa_s(O).
\end{equation}
The corresponding solution then may be thought of as a path-ordered exponential of $A_s$ provided that it exists. 

The rest of Doyon's framework concerns the long-time limit of closed many-body systems (with time propagator $\tau_t$) and deals with cases when the limit $\lim_{t\to \infty}\omega(\tau_t(O))$ exists for all local $O$, showing that the system relaxes to a linear functional counterpart of a generalized Gibbs ensemble \cite{thermoreview}. Here we will drop this requirement allowing for the study of the general dynamics for $\forall t$. This will allow us to give analytical solutions to general non-ergodic dynamics, provided that relevant pseudolocal quantities can be identified for a given model.

In \cite{Muller_2015,ETH1,ETH2,ETH3} weak ETH in the sense of typical eigenstates of the Hamiltonian being equal to the canonical ensemble has been proven. However, this does not immediately imply weak ETH in the canonical dynamical sense of Eq. \eqref{ETH} because atypical eigenstates may play an important role in the dynamics (as discussed in \cite{ETH2}) leading to violations of \eqref{ETH}. Furthermore, in the thermodynamic limit, eigenstates and eigenvalues do not provide full information about the physics of a system because eigenstates may not even exist and the spectrum of an operator need not equal its eigenavalues. 

A rigorous theory for open quantum systems (quantum Markovian semigroups) and their long-time properties has been established \cite{Frigerio,Evans,Frigerio2,Fagnola2} even for unbounded generators of the time evolution \cite{Fagnola1}. Much work has been devoted to showing that $w-\lim_{t \to \infty} \tau_t(O)=e^{\ii H t}Oe^{-\ii H t}$, where $H$ is the closed system Hamiltonian. This by itself does not imply anything for local observables or thermalization because for thermodynamically large $H$ local observables may thermalize independently of any external bath, like in closed systems.   

\subsection{Lack of relation between (restricted) spectrum generating algebras and weak ergodicity breaking}
\label{norelation}
Before moving on to the main results of the article, I wish to discuss relations between (restricted) spectrum generating algebras (projectors to eigenstates of a Hamiltonian $H$) and dynamical symmetries (conservation laws). 

A spectrum generating algebra (SGA) \cite{SGA} is defined as the existence of an operator $R$ satisfying,
\begin{equation}
[H,R]=\lambda R. \label{SGA}
\end{equation}
It is clear that for any Hamiltonian we have such operators $H\ket{E_\alpha}=E_\alpha \ket{E_\alpha}$, i.e. $R=\ket{E_\alpha}\bra{E_\beta}$ with $\lambda=E_\alpha-E_\beta$. These $R$ are \emph{not} pseudolocal in the sense from the previous subsection and play no direct role in the dynamics of physically relevant observables. In the literature \cite{SanjayReview} sometimes one extends the requirement demanding that $R^k \ket{E_0} \neq 0, \forall k<V$ (or specializes to \emph{commutant} algebras \cite{SanjayNew,SanjaySymCom}). This is then equivalent to the \emph{restricted} SGA (\cite{SanjayReview}) and has been linked with a phenomenom in \emph{quantum many-body scarred} models, called weak ergodicity breaking \cite{scars}, wherein local observables show persistent oscillations from special (but non-equilibrium and clustering) initial states. More precisely, this is a form of non-stationary dynamics, because weak ergodicity breaking can also occur for large random fluctuations around a thermal expectation value. However, the restricted SGA requirement is neither sufficient nor necessary for $R$ to have implications for dynamics of observables. This is easiest to observe in the $V \to \infty$ limit. A generic many-body $H$ will have a dense and extensive spectrum. Thus we may always find such an $R$ for any $\lambda$. More specifically, 
\begin{equation}
R=\lim_{V \to \infty} \int_{|E_\alpha-E_\beta-\lambda|<\varepsilon(V)} d\mu_\alpha d\mu_\beta \ket{E_\alpha}\bra{E_\beta}, 
\end{equation}
with $\lim_{V \to \infty } \varepsilon(V) \to 0$ being a suitable function used in taking the thermodynamic limit to avoid issues with existence of eigenstates corresponding to the continuous spectrum, and where the integral is taken over some suitable spectral measure $d\mu$. Then $R^k \ket{E_0} \neq 0$ will be fullfilled for some $\ket{E_0}$ and $\forall k$.

The fact that this requirement is not strictly necessary can be observed by considering a $D=1$ quadratic fermionic lattice model with periodic boundary conditions,
\begin{equation}
H= \sum_{j=0}^n J c^\dagger_j c_{j+1} + \mu c^\dagger_j c_{j}+h.c.,
\end{equation}
in the $J \to 0$ limit. Clearly, we have a SGA $\lim_{J \to 0}[H,c^\dagger_k]=\mu c^\dagger_k$ with $c_k=\sum_j e^{\ii k j} c_j$, with $(c^\dagger_k )^2 \ket{\psi}=0$. However, trivially, in the $J \to 0$ limit all local observables (in the fermion picture) that are off-diagonal in the number basis $[O,c^\dagger_j c_{j}] \neq 0$ persistently oscillate for arbitrarily long times for all (clustering) initial states that are not eigenstates of the total particle number operator. 

The restricted SGA, i.e. with $R^k \ket{E_0} \neq 0$  does \emph{not} imply weak ergodicity breaking, nor the converse, rather we will see that what is needed is a SGA with a pseudolocal $R$ at select frequencies $\lambda$, i.e. a pseudolocal \emph{dynamical symmetry} \cite{Buca,Marko1}.

This should not be so surprising. Indeed, setting $\lambda=0$ in \eqref{SGA} implies that $R$ commutes with $H$. Formally, every thermodynamically large Hamiltonian has a infinite number of such $R$ (projectors to its eigenstates), but this does not mean that $H$ is integrable, or even that $R$ is physically relevant conservation law. What is actually needed to have physically relevant conservation laws is locality. 

\section{Overview and main results}
\label{overview}
The assumptions of the article are listed here:

   \begin{ass}The system under question is either an open (Markovian) or closed, time-independent or time-dependent system with local (finite-range) interactions on some $D$-dimensional hypercubic lattice. \label{ass1} \end{ass}
    \begin{ass}
    The whole construction is space-translation invariant in some generalized sense, i.e. there exists an automorphism on the lattice $\mathcal{Z}^D$ denoted as $\iota_x$ for which we have $\iota_x \circ \iota_y=\iota_{x+y}$. Note that this does \emph{not} necessarily mean that the system is directly translationally invariant, and includes cases with (Bloch) translation invariance at finite momentum. This allows us to treat, e.g. modulated pseudolocal quantities \cite{Pollmann}. In any case, later on, we fully drop this requirement and allow that no such automorphism exists, i.e. we allow for arbitrary disorder. In this setup the set of pseudolocal quantities must be reduced to the set of \emph{pseudolocalized} quantities that have subextensive growth.  \label{ass2}
    \end{ass}
 \begin{ass}
    The expectation values $\ave{O}_t=\omega(\tau_t(O))$, $\forall O \in \mathfrak{U}_{loc}$ is bounded $\forall t$ and in the $t\to \infty$ limit. This is a physically reasonable assumption for most lattice models, except for perhaps bosonic ones (with infinite dimensional local Hilbert spaces) at infinite density, but such systems can be treated with standard semi-classical approaches \cite{haake1991quantum}.  \label{ass3}
    \end{ass}
   \begin{ass} The system is initially prepared in a \emph{pseudolocal} state, which essentially means that it does not have correlations that are \emph{too} long-ranged (thermal states in $D>1$ at high temperatures, ground states of gapped chains, etc. satisfy even stronger exponential clustering).  \label{ass4} \end{ass}

The main technical contribution of this article compared to the framework in \cite{Doyon} is dropping the requirement of the existence of the long-time limit and closed quantum many-body dynamics. In fact, we will discuss dynamics for \emph{general} times. Dropping this seemingly innocuous requirement will allow us to give \emph{analytical} solutions for the long-time limit of many non-ergodic and \emph{chaotic} systems (provided that all the pseudolocal quantities can be identified). This includes, but is not limited to, quantum many-body scars, Hilbert space fragmented models, time crystals and lattice gauge theories. I emphasize that even though quantum integrable systems are covered under the presented theory, the theory in no way relies on integrability. 

First, let us overview the results for closed time-independent systems with Hamiltonian $H$.

\subsection{Far-from-equilibrium states are \emph{always} in equilibrium for local observables}
\label{overview1a}
 Assume that the system is initially $t=0$ in a pseudolocal state given by the pseudolocal flow $\omega_s:=\omega_{s,t=0}$. These kinds of states can be written as exponentials of local extensive operators, for instance $\omega_{s,t=0}(O)=\tr(\exp(-s H_{t=0})O)$ for some initial Hamiltonian $H_{t=0}$ that we quench from. As we will see later, 
This implies that, essentially barring issues with path-ordering and existence that the system is \emph{always} (from $t=0$ to any other $t$) in a \emph{time dependent Gibbs ensemble} given by,
\begin{equation}
\rho(t)=\frac{ \exp(-\beta H +\int_0^1 du \mu_u e^{\lambda(u)t } A_u)}{\tr( \exp(-\beta H +\int_0^1 du \mu_u e^{\lambda(u) t} A_u))}, \label{tGE}
\end{equation}
where $\Re(\lambda(u))\le 0$ and we explicitly wrote the thermal part of the state. 
Note that the only time dependence is in the $e^{\lambda(u)t}$ term inside the exponential ($\lambda(u)$ does not depend on time). Although this result is rather formal, it gives us physical insight into the nature of equilibriation as illustrated in Fig.~\ref{illustration1}.

\begin{figure}[ht]
   \centering
    \includegraphics[width=\columnwidth]{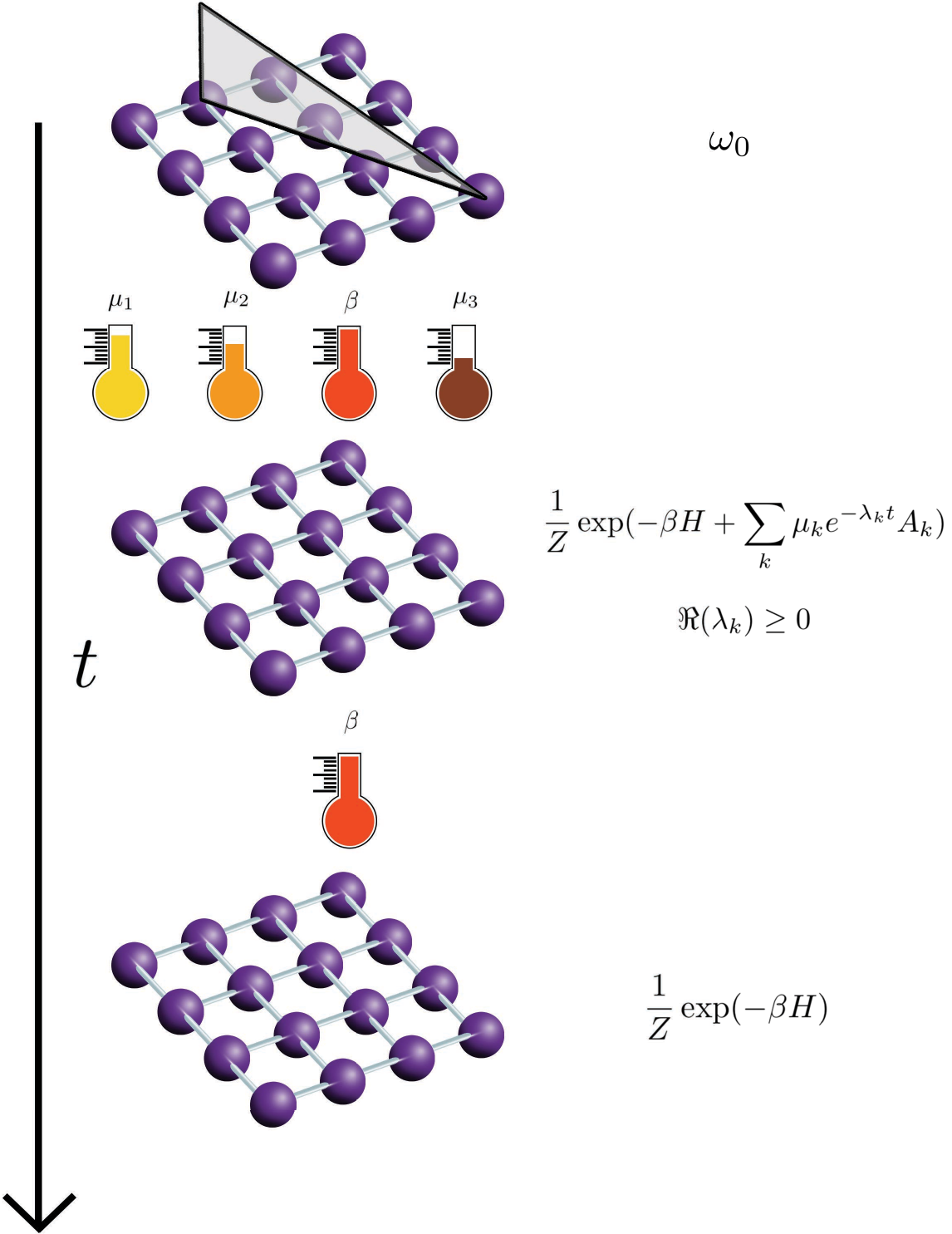}
    \caption{~\label{illustration1} An illustration of generic non-equilibrium quantum many-body dynamics. The system starts of in a clustering state (with finite power-law correlations at most) and then proceeds through an Gibbs-like state with transient pseudolocal quantities and corresponding temperatures (or chemical potentials) before thermalizing to a Gibbs ensemble.}
\end{figure}

A quantum many-body systems relaxes by starting and remaining in a Gibbs state, given by time-dependent, possibly exponentially decaying, chemical potentials $\mu_u e^{\lambda(u) t}$ with corresponding pseudolocal quantities $A_u=\sum_x Q_x(u)$. The values of $\mu_u $ are set by the initial state with flow $\omega_s$. Thus even a "far-from-equilibrium" state is a time-dependent equilibrium-like state as far as local observables are concerned. 

I also conjecture and provide evidence later, that the \emph{transient} pseudolocal quantities with $\Re(\lambda(u))<0$ are responsible for local diffusive relaxation. Note that these quantities are not the only relaxation part of \eqref{tGE}. In particular, the system may also relax algebraically in $t$ by dephasing of the purely imaginary $\Re(\lambda(u))=0$ as happens for quadratic models \cite{eigenstatedephasing}. 

\subsection{Long-time dynamics} 
\label{overview1b}
Assuming that clustering holds for long enough times and defining the Fourier transform of the expectation value $\ave{O}_{s,t}$ in the time evolved state $\omega_{s,t}$, 
\begin{equation}
\ave{O}_{s,\lambda}=\lim_{T \to \infty} \frac{1}{T} \int_0^T dt e^{\ii \lambda t} \ave{O}_{s,t}
\end{equation} 
the systems is in a time-dependent \emph{generalized} Gibbs ensemble, which is a pseudolocal state, defined via pseudolocal quantities $\A_{s,\lambda}$ that oscillate at frequencies $\lambda$, i.e. satisfying, 
\begin{equation}
\A_{s,\lambda}([H,O])=\lambda \A_{s,\lambda}(O), \quad \forall O \in \mathfrak{U}_{loc}, \quad \lambda \in \mathfrak{R}, \label{ogge} 
\end{equation}
where we define $\A_{s,\lambda}(O):=\lim_{T \to \infty}\frac{1}{T}\int_0^T e^{\ii \lambda t} \A_{s,t}(O)$ and recall that $\A_{s,t}(O)=\sum_x \ave{A_x^\dagger O}_{s,t}$ where we can subtract a constant from $A_x$ always such that $\ave{A^\dagger_x}_{s,t}=0$. Recall also that the subscript $x$ denotes translation across the lattice. 
In particular this implies the existence of corresponding (non-unique) pseudolocal sequences,
\begin{equation}
[H_V,A_V]=\lambda_V A_V, \qquad \lambda=\lim_{V \to \infty }\lambda_V \label{dynsym}
\end{equation}
where we dropped the explicit dependence on $s,\lambda$. I call these sequences \emph{dynamical symmetries} in analogy with previous work on such operators \cite{Buca,Marko1}. 

Let us assume that the following representation of the state is well defined in the $\Lambda \to \infty$ limit,
\begin{equation}
\rho^\Lambda(t)= \frac{\exp(-\beta (H_\Lambda+\sum_k e^{\ii \lambda_k t}\mu_k A^\Lambda_{k}))}{\tr_\Lambda(\exp(-\beta (H_\Lambda+\sum_k e^{\ii \lambda_k t}\mu_k A^\Lambda_{k})))}, \label{tgge1}
\end{equation}
and the set \eqref{dynsym} is countable. We will discuss when this is true later, but, essentially, the state is well defined in the thermodynamic limit provided that $|\beta|$ is large enough in $D>1$ and always in $D=1$. Then the following \emph{weak eigenoperator thermalization} result holds,
\begin{equation}
\lim_{T \to \infty} \frac{1}{T}\int_0^T dt \ave{O}_t=\lim_{\Lambda \to \infty }\tr_\Lambda(\rho^\Lambda_{\lambda=0} O),
\end{equation}
where $\rho^\Lambda_{\lambda=0}$ is the zero-frequency component, defined as before, i.e. as in \eqref{ETH}. 
This proves weak ETH in the canonical form \eqref{ETH}.

Likewise, it will be shown that the Mazur bound \cite{Mazur} for the Drude weight is saturated at all frequencies. 

\subsubsection{No translation invariance}
\label{transinv}
If we drop translation invariance even in the generalized sense, we may apply the theory provided that instead of pseudolocal quantities, we focus on pseudolocalized ones. We replace Doyon's sesquilinear form with displaced operators with the more typical inner product \cite{spinsystsemsbook},
\begin{equation}
\ave{O,Q}^{loc}_\omega:=\ave{O^\dagger Q}_\omega \label{ses},
\end{equation}
with quasi-local $O,Q \in \mathfrak{U}$. The algebra may be extended to a Hilbert space $\mathcal{H}^{loc}_{\omega}$ and its dual $(\mathcal{H}^{loc}_{\omega})^\dagger$. Pseudolocalized quantities $\Aa^{loc}_\omega(O)$ are now formed as $\lim_{\Lambda \to \infty}$ limits of $A^\dagger \in (\mathcal{H}^{loc}_{\omega})^\dagger$, which always exist according to Assumption~\ref{ass3}. The space of all $\Aa^{loc}_\omega \in \sA^{loc}_\omega$ is trivially in bijection with $\mathcal{H}^{loc}_{\omega}$ and the entire construction proceeds as before. 

This will allow us to treat Hilbert space fragmented systems that may not be translationally invariant in any sense (e.g. \cite{Abanin_MBL}). 

\subsubsection{Open and time-dependent systems}

Curiously, the preceding discussions work with very little modification for open quantum systems described by continuous quantum Markovian semigroups, i.e. the Lindblad master equation. Analogously to the Hamiltonian case, we look at \emph{local} quantum Liouvillians on $D$-dimensional hypercube lattices described by a generator of time evolution,
\begin{align} \label{lindbladeq}
&\LL O=\ii[H,O] \nonumber \\
&+\sum_{x} \int d\eta \left[2L^\dagger_x(\eta)OL_x(\eta)-\{L_x(\eta)L^\dagger_x(\eta),O\}\right],
\end{align}

where $L_x(\eta),O \in \mathfrak{U}_{loc}$ and the operator $\LL:\mathfrak{U}_{loc}^\dagger \otimes \mathfrak{U}_{loc} \to \mathfrak{U}_{loc}^\dagger \otimes \mathfrak{U}_{loc}$. The corresponding time evolution is given by a power series defined by the exponential map $\tau_t(O):=\exp(\LL t)(O)$. Physically, the Lindblad jump operators model the action of some external (memory-less) environment on the system and is applicable to a wide range of physical systems that have bath degrees of freedom that are much faster than the systems ones, e.g. quantum optics, cold atoms, etc. 

As we will see, the results in Sec.~\ref{overview1a} and~\ref{overview1b} extend directly to the quantum Markovian semigroup case if we replace $ad_H$ with $\LL$. More remarkably, we will see that, under very mild assumptions, the long-time dynamics is described by \emph{open} time-dependent (generalized) Gibbs ensembles (t-GGEs) generated by a flow given by pseudolocal quantities satisfying, 
\begin{align}
&\A_{s,\lambda}([H,O])=\lambda \A_{s,\lambda}(O), \quad \lambda \in \mathfrak{R}, \nonumber \\
&\A_{s,\lambda}([L_x(\eta),O])=\A_{s,\lambda}([L^\dagger_x(\eta),O])=0, \quad \forall O \in \mathfrak{U}_{loc}. \label{opentgge}   
\end{align}
In other words, the long-time dynamics of open quantum systems is determined exclusively by the Hamiltonian \emph{even for baths that drive the system inherently out-of-equilibrium}. The role of the Lindblad operators is mainly to select a smaller subset of pseudolocal dynamical symmetries that do so (i.e. those that commute with them in the sense of \eqref{opentgge}). 

For time dependent systems we likewise have an equivalent extension with $H \to H_\lambda$, where $H_\lambda$ is defined in the standard extended space framework \cite{howland1974stationary,pre-therm}. 

\subsection{Physics: Scars, Fragmentatation, etc.} 

It will turn out that, to the best of my knowledge, for all known cases of quantum non-ergodic dynamics, excluding integrability, it will be sufficient to consider a \emph{finite} set of dynamical symmetries \eqref{dynsym} $A^\Lambda_\mu$ that close a finite algebra with $H_\Lambda$ under commutation. This also includes the widely topical cases of quantum many-body scarred models \cite{scars,scars2,scars5,scarsdynsym2,scarsdynsym6,SerbynScars,HoshoScars,HoshoScars3}, Hilbert space fragmented models \cite{fragmentation3,Fragmentation1,fragmentationSanjay,strictlylocalfrag,ArnabFragmentation,ZanardiFRAG}, disorder-free localization models \cite{Dima}, lattice gauge theories \cite{JadDis,LGT1,LGT2,LGT3,Jad4,Jad5,Magnifico2020realtimedynamics,Michele1,Michele2} and all types of time crystals \cite{timecrystal2,Seibold,dissipativeTCobs,UedaTC,BucaJaksch2019,buca2021algebraic,Buca2,WangWang}. This allows us to use the representation \eqref{ogge} to describe the long-time dynamics of these models provided $|\beta|$ is small enough. Moreover, \eqref{ogge} defines the unique $\omega_{s,\lambda}$.  Specifically, the $\beta$ and $\mu_k$ are the only unknown parameters and these are set by the initial state. Thus, \eqref{ogge} provides an \emph{exact} solution to all such (chaotic) dynamics. 

In order to find the t-GGE for such models we need to, however, introduce two new classes of pseudolocal quantities. 

{\emph{Quantum many-body scars and the scarring phase transition.}} It will be shown that quantum many-body scarred models are defined by a novel class of pseudolocal dynamical symmetries that I call \emph{projected}-local. More specifically, in \eqref{tgge1} $A^\Lambda_{k}=\mathcal{P}_\Lambda(\sum_{x \in \Lambda} Q_x)$, where $Q_x$ is a translation of a local operator and $\mathcal{P}_\Lambda$ is a map $\mathcal{P}_\Lambda:\mathfrak{U}_\Lambda \to \mathfrak{U}_\Lambda$. In all the cases studied this will be simply a projector $\mathcal{P}_\Lambda x=PxP$, $P^2=P$. These dynamical symmetries, as sequences, satisfy the requirements of pseudolocalty, i.e. they are only pseudolocal for certain initial states $\omega_0$ and therefore lead to persistent oscillations and non-stationary dynamics only when the system is prepared in these initial states. This provides likewise an unambiguous definition of quantum many-body scars in line with the single-body definition \cite{haake1991quantum}. Crucially, the representation \eqref{tgge1} will not have clustering of correlations (and will not be a pseudolocal state) for certain values of $|\mu_k|$. In fact, there exists a critical value when this happens. This indicates the presence of a novel type of \emph{scarring} phase transition, distinct from standard thermodynamic phase transitions: in standard thermodynamic phase transitions the Gibbs ensemble no longer admits a unique representation above some value of $|\beta|$ indicating symmetry breaking. By contrast the scarring phase transitions happens because a pseudolocal quantity stops being pseudolocal. These quantities are also responsible for non-ergodicity in embedded Hamiltonians. 

{\emph{Hilbert space fragmentation, induced localization and fragmentation phase transitions ---}} Models with true Hilbert space fragmentation will turn out to be described by pseudolocalized quantities, like the ones we define for models without assuming translation invariance $\Aa_\omega^{loc}$ (Sec.~\ref{transinv}). These will lead to memory of the initial conditions \emph{locally}, i.e. a form of localization. They may be derived from the statistically localized integrals of motions \cite{SLIOM} or non-local commutant algebras of the models \cite{fragmentationSanjay}, but are distinct from them. Curiously, they cannot be written in a manifestly local manner as sums of translated local operators. Hence, we call them \emph{crypto-}local. Similar pseudolocalized charges are responsible for non-ergodic behaviour in disorder-free localization and lattice gauge theories. It can happen, in contrast to both thermodynamic phase transitions and the scarring phase transition above, that, for certain continuous changes of the chemical potentials corresponding to crypto-local quantities, the t-GGE state abruptly stops being clustered. The crypto-local quantities remain pseudolocal, however, unlike in the scarring phase transition. This still signals a change of phase and emergence of long-range order. This should be contrasted with phase transitions between thermalization (weak fragmentation) and non-ergodicity (strong fragmentation) \cite{Morningstar,Pozderac} because in the present work the phase transition is between two distinct non-ergodic phases. We will study an explicit example later.

\section{Dynamics of quantum many-body systems}
\label{mainsection}
We now turn to stating the main theorems and lemmas beginning with the ones we will need to prove the main results from the previous section. The proofs are in the Appendix. Assume that the dynamics is provided by a time-dependent Markovian \emph{closed and dense} generator,
\begin{equation}
\LL_{\Lambda,t}=\sum_{x \in \Lambda} \LL_{x,t},
\end{equation}
where $\LL_{x,t}$ is the time-dependent local Hamiltonian density (local Lindblad jump operator) translated by $x$ as in \eqref{lindbladeq}. The dynamics for $O \in \mathfrak{U}_{loc}$ is given as ($\LL_t:=\lim_{\Lambda \to \infty} \LL_{\Lambda,t}$), 
\begin{equation}
\tau_t(O):=\mathcal{T}\int_0^t dp \exp(dp \LL_p )(O), \label{propagator}
\end{equation}
where $\mathcal{T}$ is the time ordering operator. 
We implicitly defined a doubled $C^*$-algebra $\LL:\mathfrak{U}_{loc}^\dagger \otimes \mathfrak{U}_{loc} \to \mathfrak{U}_{loc}^\dagger \otimes \mathfrak{U}_{loc}$. This defines the equation of motion,
\begin{equation}
\frac{d\omega_t}{dt}(O)=\omega_t(\LL_t (O)), \label{eqmotion}
\end{equation}
where the time evolved state is $\omega_t:=\omega_0\circ \tau_t$. Note that the map is generally contractive and we have $||\tau_t(O)||\le ||O||$ and dissipative $||\tau_t(O^\dagger Q)-\tau_t(O^\dagger)\tau_t(Q)|| \ge 0$ \cite{Lindblad,Evans}. For closed systems, however, the equality in the relations holds, i.e. the map is an isometry and preserves the algebra. 

The convergence properties of the series in \eqref{propagator} has been extensively studied \cite{Frigerio}, but for our purposes what will be relevant is that $\lim_{\Lambda \to \infty}\tau^\Lambda_t(O)$ exists in the Hilbert space of local observables $\mathcal{H}_\omega$ defined previously in Sec.~\ref{previouswork}. 

Locality of time evolution for time-depedent quantum Liouvillians has been established \cite{KastoryanoEisert,Kliesch_2014,NachtergaeleOpen} using Lieb-Robinson bounds \cite{LiebRobinson} generalized for such dynamics \cite{Poulin,BarthelLR}. Define $()_\Lambda$ to be the projection to some sublattice (ball) $\Lambda$. There exists some $\varphi>0,v>0$ such that for $\Delta>2D-1$,
\begin{equation}
||\tau_t(O)-(\tau_t(O))_\Lambda||\le \varphi ||O_\Lambda|| \Delta^{D-1}\exp(-\Delta+v |t|), \label{locality}
\end{equation}
where $\Delta={\rm dist}({\rm supp}(O),\mathcal{Z}^D \backslash \Lambda)$, ${\rm dist}$ is the metric (distance) on the lattice $\mathcal{Z}^D$ and ${\rm supp}$ is the support of the operator $O$. The value $v$ is called the Lieb-Robinson velocity. Note that in contrast to the result for the isolated system \cite{BravyiLR} there is an extra polynomial dependence on distance $\Delta^{D-1}$.

Despite this extra dependence it is possible to generalize a theorem by Doyon to driven-dissipative dynamics. 

\begin{theorem}
\label{theorem1}
Let $\tau_t$ be the time evolution and $\omega$ be a pseudolocal state with flow $\omega_s$, then for $\forall O,Q \in \mathfrak{U}_{loc}$ and $t \in \RR$,
\begin{itemize} 
\item The limit $\lim_{\Lambda \to \infty} (\tau_t(O))_\Lambda$ exists in $\mathcal{H}_\omega$ and $\lim_{\Lambda \to \infty} ||(\tau_t(O))_\Lambda||_{\mathcal{H}_\omega}$ exists and is uniformly bounded with an induced form $\ave{O,Q}^c_{\omega \circ \tau_t}$
\item The state $\omega_{t}=\omega \circ \tau_t $ is pseudolocal. 
\end{itemize}
\end{theorem}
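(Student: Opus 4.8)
The strategy is to generalize Doyon's closed‑system theorem by leaning on three properties of the (possibly dissipative, time‑dependent) propagator $\tau_t$: quasi‑locality, i.e.\ the Lieb–Robinson bound \eqref{locality}; approximate factorization of $\tau_t$ on far‑apart supports; and the Kadison–Schwarz/dissipativity inequality $\|\tau_t(O^\dagger Q)-\tau_t(O^\dagger)\tau_t(Q)\|\ge 0$. Write $\Lambda_r$ for the ball of radius $r$ about $\mathrm{supp}(O)$ and $\epsilon_r(t):=\varphi\|O\|\,r^{D-1}e^{-r+v|t|}$ for the Lieb–Robinson tail; the whole point of working in $\mathcal{H}_\omega$ is that this tail decays exponentially in $r$ in spite of the extra polynomial $r^{D-1}$ absent in the closed case. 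The time‑ordering in \eqref{propagator} plays no role beyond bookkeeping, since $\mathcal{L}_{\Lambda,t}=\sum_{x\in\Lambda}\mathcal{L}_{x,t}$ remains a sum of finite‑range terms at each instant.

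For the first bullet the key elementary estimate is that for a \emph{local} operator $X$ supported on a region of volume $|R|$ one has $\|X\|_{\mathcal{H}_\omega}^2\lesssim(|R|+c_p)\|X\|^2$, where $c_p$ comes from summing the $p$‑clustering tails of $\omega$ over the translations $X_x$ whose support does not meet $R$ (convergent for $p>D$), while the $\lesssim|R|$ overlapping translations each contribute $O(\|X\|^2)$. Applying this to the telescoping differences $\delta_r:=(\tau_t(O))_{\Lambda_{r+1}}-(\tau_t(O))_{\Lambda_r}$, which by \eqref{locality} satisfy $\|\delta_r\|\le\epsilon_r(t)+\epsilon_{r+1}(t)$ and are supported on a ball of radius $\sim r$, gives $\|\delta_r\|_{\mathcal{H}_\omega}\lesssim r^{D/2}\epsilon_r(t)$, which is summable in $r$. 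Hence $(\tau_t(O))_{\Lambda_r}=(\tau_t(O))_{\Lambda_{r_0}}+\sum_{k\ge r_0}\delta_k$ is Cauchy in $\mathcal{H}_\omega$; its limit is the asserted $\tau_t(O)\in\mathcal{H}_\omega$, and $\|\tau_t(O)\|_{\mathcal{H}_\omega}\le\|(\tau_t(O))_{\Lambda_{r_0}}\|_{\mathcal{H}_\omega}+\sum_k\|\delta_k\|_{\mathcal{H}_\omega}$ is finite, bounded uniformly in $\Lambda$ and locally uniformly in $t$ (the $t$‑dependence is only the harmless $e^{v|t|}$). Continuity of the norm then yields convergence of $\|(\tau_t(O))_\Lambda\|_{\mathcal{H}_\omega}$.

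To produce the induced form $\langle\,\cdot\,,\cdot\,\rangle^c_{\omega\circ\tau_t}$ I would show $\omega_t:=\omega\circ\tau_t$ is again $p$‑clustering for each fixed $t$. Here I use approximate factorization: restricting the generator to disjoint balls $\Lambda_1\supset\mathrm{supp}(O_x)$, $\Lambda_2\supset\mathrm{supp}(Q)$ (legitimate since, once $\mathrm{dist}(\Lambda_1,\Lambda_2)$ exceeds the interaction range, no term of $\mathcal{L}_{\Lambda_1,t}$ touches $\Lambda_2$ and conversely, so the two restricted dynamics act on a tensor product) gives $\tau_t^{\Lambda_1\cup\Lambda_2}(O_xQ)=\tau_t^{\Lambda_1}(O_x)\,\tau_t^{\Lambda_2}(Q)$ exactly, and \eqref{locality} then yields $\|\tau_t(O_xQ)-\tau_t(O_x)\tau_t(Q)\|\le C(t)e^{-c\,\mathrm{dist}(O_x,Q)+v|t|}$. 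Feeding the local approximants of $\tau_t(O_x)$ and $\tau_t(Q)$ into the $p$‑clustering of $\omega$ then gives $|\omega_t(O_xQ)-\omega_t(O_x)\omega_t(Q)|\lesssim C'(t)\,\mathrm{dist}(O_x,Q)^{-p}$, the exponential corrections being absorbed into $C'(t)$ at fixed $t$. Thus $\mathcal{H}_{\omega_t}$ and the form $\langle\,\cdot\,,\cdot\,\rangle^c_{\omega_t}$ exist, and $\omega_t$ is already a pseudolocal state in the sense of \eqref{pseudolocalstate} (trivially with $\mathcal{A}_s\equiv0$).

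The content of the second bullet that is actually used downstream is that the pseudolocal flow $\omega_s$ of $\omega$ \emph{transports}: $\omega_{s,t}:=\omega_s\circ\tau_t$ is a bona fide pseudolocal flow of $\omega_t$, i.e.\ the pulled‑back functionals $\mathcal{A}_{s,t}:=\mathcal{A}_s\circ\tau_t$ are pseudolocal for $\omega_{s,t}$. For closed systems this is immediate: $\tau_t$ is a $*$‑automorphism commuting with translations, so with $Q_s\in\mathcal{H}_{\omega_s}$ the density of $\mathcal{A}_s$ one has $\langle Q_s^\dagger,\tau_t(O)\rangle^c_{\omega_s}=\langle\tau_{-t}(Q_s)^\dagger,O\rangle^c_{\omega_s\circ\tau_t}$, and $\tau_{-t}(Q_s)\in\mathcal{H}_{\omega_s\circ\tau_t}$ by the first bullet, so the bijection $\mathfrak{D}$ identifies $\mathcal{A}_{s,t}$ as pseudolocal. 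For \emph{open} systems $\tau_t$ is neither invertible nor multiplicative, and this is where the real work sits — I expect it to be the main obstacle. The route I would take is to show $\tau_t$ induces a \emph{bounded} operator $\mathcal{H}_{\omega_s\circ\tau_t}\to\mathcal{H}_{\omega_s}$: expanding $\|O\|^2_{\mathcal{H}_{\omega_s\circ\tau_t}}-\|\tau_t(O)\|^2_{\mathcal{H}_{\omega_s}}$ over translations, the diagonal ($x=0$) term equals $\tfrac12\omega_s\big(\tau_t(O^\dagger O)-\tau_t(O^\dagger)\tau_t(O)+\tau_t(OO^\dagger)-\tau_t(O)\tau_t(O^\dagger)\big)\ge 0$ by Kadison–Schwarz, while the $x\neq0$ terms are summably small by the factorization estimate of the previous paragraph; the Hilbert‑space adjoint of the resulting bounded map then supplies the density in $\mathcal{H}_{\omega_s\circ\tau_t}$ of $\mathcal{A}_{s,t}$, and $\mathfrak{D}$ again certifies pseudolocality. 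The delicate point is controlling the residual correction in this inequality — equivalently, showing that pulling a pseudolocal quantity back along a contractive, quasi‑local but non‑multiplicative completely positive map preserves the $\gamma V$ growth rate in condition (A) — and this is the one place where the finite‑range/Lieb–Robinson structure of $\mathcal{L}$, rather than mere positivity of $\tau_t$, must be used in full.
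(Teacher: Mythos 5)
Your proposal is correct and follows essentially the same route as the paper's proof: existence of the limit in $\mathcal{H}_\omega$ via the Lieb--Robinson bound and $p$-clustering (the paper leans on Doyon's Theorem 6.3, which is precisely your telescoping argument), clustering of $\omega\circ\tau_t$ by splitting the connected correlator into $\omega\bigl(\tau_t(O_xQ)-\tau_t(O_x)\tau_t(Q)\bigr)$ plus $(\tau_t(O_x),\tau_t(Q))_\omega$ and controlling the first term by an exponentially decaying factorization bound (the paper cites Corollary 1 of \cite{KastoryanoEisert}, which you rederive from the finite-range structure), and pseudolocality of the evolved flow via a boundedness/dissipativity inequality feeding into Doyon's Theorem 6.5. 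The only cosmetic difference is that the paper uses the cruder bound $\|O\|^2_{\mathcal{H}_{\omega\circ\tau_t}}\ge|(\tau_t(O^\dagger),\tau_t(O))_\omega|-2\|O\|^2$ where you invoke Kadison--Schwarz positivity of the diagonal term plus summable off-diagonal corrections.
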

If the Lindblad jump operators terms are not zero, then it is important to note the following. 
\begin{remark}
In general we have for the form $\ave{O,Q}^c_{\omega \circ \tau_t} \neq \ave{\tau_t(O),\tau_t(Q)}^c_{\omega}.$
\end{remark}

\subsection{Eternal equilibrium}

In this abstract section we describe the finite time non-equilibrium dynamics. Although the construction is involved, it provides physical insight. for the sake of presentation, for the moment we specialize to time-independent cases and discuss generalizations in later sections. 

\begin{theorem}
If we initialize the system in a pseudolocal (non-equilibrium state) $\omega$ with flow $\omega_s$, then the state of the system $\forall t \in \RR^+$ is given by the pseudolocal state $\omega_t$ with flow,
\begin{equation}
\omega_{s,t}(O)=\omega_{t,0}(O)+\int_0^s du \Aa_{u,t}(O),
\end{equation}
with $\Aa_{u,t}:=\Aa_{\omega_u\circ\tau_t}$. The pseudolocal quantities solve for almost all $u$ the following well-defined Cauchy problem,
\begin{equation}
\frac{d}{dt}\Aa_{u,t}=\mathfrak{L}(\Aa_{u,t})=\Aa_{u,t} \circ \mathcal{L}, \label{cauchy}
\end{equation}
There exists a $M \ge 0$ such that $\mathfrak{L} :\mathcal{H}^\dagger_{u,t} \to \mathcal{H}^\dagger_{u,t}$ generates an strongly continuous contracting semigroup $\mathfrak{T}_t:=e^{(\mathfrak{L} ) t}$ solving \eqref{cauchy},
\begin{equation}
 \Aa_{u,t}=e^{M t}\mathfrak{T}_t \Aa_{u,0}    
\end{equation}admitting a resolvent and solving \eqref{cauchy}
\begin{equation}
\mathfrak{T}_t= \int_{\Gamma} d\lambda e^{\lambda t} (\lambda-\mathfrak{L})^{-1},
\end{equation}
where $\Re(\lambda)\le 0$ and $\Gamma$ is an appropriate path. \label{theorem2}
\end{theorem}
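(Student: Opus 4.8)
The plan is to establish Theorem~\ref{theorem2} in three stages: first construct the flow $\omega_{s,t}$, then derive the Cauchy problem \eqref{cauchy} for the associated pseudolocal quantities, and finally invoke semigroup theory to produce the resolvent representation of the solution. For the first stage, I would combine Theorem~\ref{theorem1} with the definition \eqref{pseudolocalstate} of a pseudolocal state. Since $\omega$ has flow $\omega_s$, each $\omega_s$ is $p$-clustering, and by Theorem~\ref{theorem1} each $\omega_s \circ \tau_t$ is again pseudolocal with a well-defined induced form $\ave{\cdot,\cdot}^c_{\omega_s\circ\tau_t}$. Applying $\tau_t$ to \eqref{pseudolocalstate} and using linearity of $\tau_t$ on $\mathfrak{U}_{loc}$ gives $\omega_{s_+}(\tau_t(O)) - \omega_{s_-}(\tau_t(O)) = \int_{s_-}^{s_+} du\, \Aa_u(\tau_t(O))$; identifying $\Aa_{u,t}(O) := \Aa_u(\tau_t(O)) = \Aa_{\omega_u\circ\tau_t}(O)$ (which is a bona fide pseudolocal quantity for the state $\omega_u\circ\tau_t$ by the bijection $\mathfrak{D}$ of \eqref{quantudef}) yields the claimed integral form of $\omega_{s,t}$, with the base point $\omega_{t,0} := \omega_0\circ\tau_t$.

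For the second stage, I would differentiate $\Aa_{u,t}(O)$ in $t$ using the equation of motion \eqref{eqmotion}: formally $\frac{d}{dt}\Aa_{u,t}(O) = \frac{d}{dt}\omega_u(\tau_t(O)) $-type reasoning combined with $\frac{d}{dt}\tau_t(O) = \tau_t(\LL(O)) = \LL(\tau_t(O))$ in the time-independent case, giving $\frac{d}{dt}\Aa_{u,t}(O) = \Aa_{u,t}(\LL(O))$, i.e. $\frac{d}{dt}\Aa_{u,t} = \Aa_{u,t}\circ\LL =: \mathfrak{L}(\Aa_{u,t})$. The phrase ``for almost all $u$'' is needed because the flow $\omega_s$ is only required to be absolutely continuous in $s$ (not everywhere differentiable), so $\Aa_u$ exists for a.e.\ $u$; for those $u$ the derivative identity holds. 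One must check that $\Aa_{u,t}(\LL(O))$ is well-defined, which follows since $\LL(O)\in\mathfrak{U}_{loc}$ (the generator is local and completely bounded, mapping $\mathfrak{U}_{loc}$ into itself) and $\Aa_{u,t}$ is defined on all of $\mathfrak{U}_{loc}$.

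For the third stage, I would view $\mathfrak{L}$ acting on the predual $\mathcal{H}^\dagger_{u,t}$ and show it generates a contraction semigroup up to an exponential factor. The key input is the locality bound \eqref{locality}: it implies $\tau_t$ extends to a bounded operator on $\mathcal{H}_\omega$ with $\|\tau_t\|_{\mathcal{H}_\omega}\le e^{Mt}$ for some $M\ge 0$ (the polynomial prefactor $\Delta^{D-1}$ is absorbed into the exponential $e^{v|t|}$, and $M$ is controlled by $v$ together with the norm growth from the connected-correlator sum over $\Gamma$). Hence $e^{-Mt}\tau_t$ is a contraction semigroup on $\mathcal{H}_\omega$, its dual $e^{-Mt}\mathfrak{T}_t$ is a contraction semigroup on $\mathcal{H}^\dagger_{u,t}$ with generator $\mathfrak{L} - M\one$, so $\mathfrak{L} - M\one$ is dissipative and by Lumer--Phillips / Hille--Yosida generates a $C_0$-contraction semigroup; since the generator is bounded (completely bounded $\LL$ on $\mathfrak{U}_{loc}$ passes to a bounded operator) the semigroup is uniformly continuous. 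Writing $\mathfrak{T}_t := e^{(\mathfrak{L}-M\one)t}$ one gets $\Aa_{u,t} = e^{Mt}\mathfrak{T}_t\Aa_{u,0}$, and the spectral/Laplace-inversion (Bromwich-type contour) formula $\mathfrak{T}_t = \int_\Gamma d\lambda\, e^{\lambda t}(\lambda-\mathfrak{L})^{-1}$ follows from the resolvent of a bounded generator with $\Re\sigma(\mathfrak{L})\le M$ (and after the shift, $\Re\lambda\le 0$ on the spectrum of the shifted generator), with $\Gamma$ a suitable contour enclosing the spectrum. The main obstacle I anticipate is the ``for almost all $u$'' regularity issue together with rigorously transferring the Lieb-Robinson bound \eqref{locality}, which carries the extra polynomial factor $\Delta^{D-1}$ absent in the closed-system case, into a clean exponential bound $\|\tau_t\|_{\mathcal{H}_\omega}\le e^{Mt}$ on the connected-correlator Hilbert space — one must verify that summing the bound over all lattice displacements in the definition \eqref{cc} of $\ave{\cdot,\cdot}^c_\omega$ converges and produces only an exponential-in-$t$, not worse, growth.
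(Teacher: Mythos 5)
Your proposal follows essentially the same route as the paper's proof: Theorem~\ref{theorem1} supplies pseudolocality of $\omega_s\circ\tau_t$ and hence the flow decomposition, differentiation under the flow integral yields the Cauchy problem \eqref{cauchy} for almost all $u$, and the dual of $\tau_t$ on the GNS Hilbert space is promoted to a contraction semigroup after extracting the factor $e^{Mt}$, with the contour formula coming from the resolvent. The only substantive divergences are in the bookkeeping: the paper gets the a.e.-$u$ statement from the vanishing of $\int_I du\,(\Aa_{u,t}(O)-\tfrac{d}{dt}\Aa_{u,t}(\mathcal{L}(O)))$ over every open interval (citing Doyon), and it establishes strong continuity and analyticity of $\tau_t$ directly before invoking Hille--Yosida, rather than your Lieb--Robinson bound $\|\tau_t\|_{\mathcal{H}_\omega}\le e^{Mt}$ plus Lumer--Phillips; be aware that your claim that the generator is bounded (hence the semigroup uniformly continuous) is not justified for the infinite-volume $\mathcal{L}=\sum_x\mathcal{L}_x$ --- though the paper's own proof likewise only delivers strong continuity despite the theorem's wording.
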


Note that $\mathfrak{L}$ is not self-adjoint in general even in the purely Hamiltonian case $L_{x}(\mu)=0$. This shows that the dynamics of local operators in even an isolated many-body system has a natural arrow of time induced by the semigroup $\mathfrak{T}_t$, i.e. $t\ge0$. Intutively, the generator $\mathcal{L}$ "intertwines" infinitesimally between $\mathcal{H}_{u,t}$ and $\mathcal{H}_{u,t+dt}$ 

\begin{remark}
The requirement about $M \ge 0$ in the generator of the semigroup is purely technical and may be dropped (i.e. set $M=0$) provided the infinite time limit $\lim_{t \to \infty}||\tau_t O||_{\mathcal{H}_\omega}$ exists $\forall O \in \mathfrak{U}_{loc}$. This should be the case in all physically reasonable examples. Otherwise, we could have unbounded values of local observables. Alternatively, we may set $M=0$ provided we are only interested in dynamics for all finite $t \in \RR$ 
\end{remark}

The result as given looks complicated even for the purely Hamiltonian case. However, if we assume that $\omega_{s,t}$ is analytic,
\begin{equation}
\frac{d}{ds}\omega_{s,t}(O)=\Aa_{s,t}(O)
\end{equation}
$\forall s \in [0,1]$ and that it admits a well-defined matrix representation, it directly follows from the linearity of the functional $\Aa_{u,t}$, that the thermodynamic state of the system is the limit of a time-dependent Cauchy sequence in $\Lambda$, which closely resembles a Gibbs ensemble, i.e. the state is of the form,
\begin{equation}
\rho^\Lambda(t)= \mathcal{P}\frac{\exp(-\beta (H_\Lambda+\int_0^1 du e^{\lambda(u) t}\mu_u A^\Lambda_{u}))}{\tr_\Lambda(\mathcal{P} \exp(-\beta (H_\Lambda+\int_0^1 du e^{\lambda(u) t}\mu_u A^\Lambda_{u})))}, \label{tge2}
\end{equation}
where $\mathcal{P}$ is a suitable path-ordering along the flow of the pseudolocal state.

Physically, the system proceeds from a state $\omega_0$ which is an eigenstate or thermal state of a local Hamiltonian that is distinct from the Hamiltonian driving the time-evolution of the system. This state admits a decomposition in terms of pseudolocal quantities $A_u$ for the given generator $\mathfrak{L}$ that may be dived into two main classes, 
\begin{enumerate}
    \item Those for which $\Re(\lambda(u))<0$, i.e. \emph{exponentially decaying} transient ones that disappear from $\rho(t)$ exponentially quickly and correspond to exponential decay of expectation values of local observables. \label{1}
    \item Those for which $\Re(\lambda(u))=0$, which may be further subdivided,
    \begin{enumerate}
        \item Those for which the spectrum is continuous around $\lambda(u)$. These may correspond to polynomial decay of expectation values of local observables, which may be seen by e.g. invoking the stationary phase approximation. \label{2a}
        \item Those $\lambda(u)$ around which the spectrum of $\mathfrak{L}$ is isolated. These are either $\lambda(u)=0$, i.e. these are the pseudolocal conservation laws and must include the Hamiltonian $H$ itself, or $\Im(\lambda(u))\neq0$ and these are pseudolocal dynamical symmetries that will be studied in the next section. The latter correpond to persistent oscillations at fundamental frequencies $\Im(\lambda(u))$.\label{2b}
    \end{enumerate}
\end{enumerate}

 Further intuition about the Th.~\ref{theorem2} can be gained for the closed infinitely large system $\Lambda=\infty$ case, by observing that the main statement of the Theorem in Eq.~\eqref{cauchy} can be formally and unrigorously written as $[H_\infty,A^\infty_u]=\lambda(u) A^\infty_u$, where $\lambda_u$ need not be real. This is because $H_\infty$ need not be self-adjoint on the entire Hilbert space. By contrast $H_\Lambda$ for finite system $\Lambda$ always has real and countable eigenstates $H_\Lambda \ket{E^\Lambda_n}=E^\Lambda_n \ket{E^\Lambda_n}$. However, most of the corresponding $A^\Lambda_{n,m}=\ket{E^\Lambda_n}\bra{E^\Lambda_m}$, and $[H_\Lambda,A^\Lambda_{n,m}]=(E^\Lambda_n-E^\Lambda_m) A^\Lambda_{n,m}$ become thermodynamically irrelevant for \emph{all} local observables $O$, i.e. $\lim_{\Lambda \to \infty}\ave{A^\Lambda_{n,m} O}_t=0, \forall O$ (similary to a well-known assumption of ETH \cite{ETHReview}). The pseudolocal $A^\Lambda_u$ are precisely the linear combinations of such $A^\Lambda_{n,m}$ (and the only linear combinations) that have finite overlap with (at least some) local observables in the thermodynamic limit. Moreover, they are such that they grow at most extensively with system size \emph{for the given state} and hence are well-defined in the thermodynamic limit. In other words, a naive solution of the dynamics of local observables requires knowning all the eigenstates and energies of $H$, but most of these are not thermodynamically relevant. The same information can be gained about dynamics of local observables if one knows a much smaller subset in terms of $A^\Lambda_u$. 

The physical relevance, if any, of the residual spectrum of $\mathfrak{L}$ is not immediately clear. Cases~\ref{2a} and \ref{2b} have been studied previously and will be discussed in the next section. To the best of my knowledge there are no known constructions of the transient pseudolocal quantities in~\ref{1}.  

However, here I give a physically motivated conjecture that they are responsible for diffusive relaxation that may be studied by quantum hydrodynamics \cite{quantumhydro}. 

To see this, consider a 1D lattice with a (unrigorous) local conservation law $[H,Z_0]=0$ at finite momentum $Z_k=\sum_x e^{\ii k x} z(x)$, where $z(x)$ is the translated local density. The diffusion equation for this charge is,
\begin{equation}
\frac{\partial}{\partial t}z(x,t)=\kappa \frac{\partial^2}{\partial x^2} z(x,t),
\end{equation}
where $z(x,t):=\tau_t(z(x))$.
The finite momentum solution is $Z(k,t)=Z(k,0) e^{-k^2 \kappa t}$. Studying $Z(k,t)=\tr{\rho(t)Z_k}$ and comparing with \eqref{tge2} indicates that the exponential decay of $Z_k$ at finite momentum proceeds because the overlap between the transient pseudolocal quantities $\tr(A_u Z_k)\neq 0$ for $k\neq 0$. Hence, it is reasonable to assume that the transient pseudolocal quantities are responsible for difussion. 

The relation between the pseudolocal quantities and transport is more complicated for other types of relaxation. Consider the 1D convection equation, solving for, e.g. ballistic transport in integrable models \cite{castroalvaredo2016emergent,bertini2016transport} for the simplest linear case,
\begin{equation}
\frac{\partial }{\partial t} z(x,t)=v \frac{\partial }{\partial x} z(x,t).
\end{equation}
For an initial condition $z(x,0)=z_0(x)$, the equation is solved by any $z(x,t)=z_0(x+vt)$. So depending on the initial condition it could correspond to a local faster-than-exponential relaxation (e.g. for a Gaussian wave packet $z_0(x)=C \exp(-a x^2)$) to persistent oscillations for an initial condition with finite momentum $k$.

\subsection{Asymptotic dynamics}

In this section we develop a general theory in terms of pseudolocal quantities for the long-time dynamics. 

\begin{theorem}[General eigenoperator thermalization]
Assume that the system in (non-equilibrium) pseudolocal state $\omega_0$ with flow $\omega_s$. In the long-time limit the state of the system is a pseudolocal (open) \emph{time-dependent} generalized Gibbs ensemble (t-GGE) $\omega_t$ with flow $\omega_{s,t}$,
\begin{enumerate} \item \label{case1} The t-GGE satisfies, 
\begin{equation} \label{fullcond}
\omega_{s,\lambda}(\LL_t(O))=-\ii\lambda \omega_{s,\lambda}(O) \qquad \forall O \in \mathfrak{U}_{loc}
\end{equation}
where $\omega_{s,\lambda}$ is the component of the state at frequency $\lambda$, i.e. $\omega_{s,\lambda}:=\lim_{T \to \infty}\frac{1}{T}\int^T_0 dt e^{\ii\lambda t} \omega_{s,t}$ with $\lambda \in \RR$. 

\item \label{case2} Provided that the dual map $\tau_t^\dagger$ has a faithful stationary state and a time-independent generator then we also have, 
\begin{align}
&\omega_{s,\lambda}([H,O])=-\lambda \omega_{s,\lambda}(O) \label{hamcond}\\
&\omega_{s,\lambda}([L_x(\eta),O])=\omega_{s,\lambda}([L^\dagger_x(\eta),O])=0, \label{lindcond}\qquad \forall O \in \mathfrak{U}_{loc}
\end{align}
\end{enumerate}
In both case~\ref{case1} and \ref{case2} the corresponding quantities $\A_{s,\lambda}(O):=\lim_{T \to \infty}\frac{1}{T}\int_0^T e^{\ii \lambda t} \A_{s,t}(O)$ satisfy the same relations as the state (with $\omega_{s,\lambda} \to \A_{s,\lambda}$ in \eqref{fullcond}, resp. \eqref{hamcond} and \eqref{lindcond}) for almost all $s$, i.e. in case~\ref{case2} they satisfy \eqref{opentgge}. The quantities $\A_{s,t}(O)$ are called \emph{pseudolocal dynamical symmetries}. 
\label{theorem3}
\end{theorem}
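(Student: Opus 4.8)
The plan is to extract the asymptotic (time-averaged, frequency-resolved) state from the finite-time description already established in Theorem~\ref{theorem2}, and then identify the algebraic constraints that characterize it. First I would start from the pseudolocal flow $\omega_{s,t}$ of Theorem~\ref{theorem2} and the representation $\Aa_{u,t}=e^{Mt}\mathfrak{T}_t\Aa_{u,0}$ of its generating pseudolocal quantities. By Assumption~\ref{ass3}, $\omega_t(\tau_t(O))$ is uniformly bounded in $t$, so the Cesàro–Fourier average $\omega_{\lambda,s}:=\lim_{T\to\infty}\frac{1}{T}\int_0^T dt\, e^{\ii\lambda t}\omega_{s,t}$ exists for each $\lambda\in\RR$ (along a subsequence if necessary; one then argues uniqueness from the fact that $\mathfrak{T}_t$ has a resolvent with spectrum in $\Re\lambda\le 0$, so the only contributions to the time average come from the peripheral spectrum $\Re\lambda=0$, which is the content of the decomposition into cases \ref{1}, \ref{2a}, \ref{2b}). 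The same averaging applied to $\Aa_{t,s}$ produces $\A_{\lambda,s}$, and because $\Aa_{t,s}=\mathfrak{D}(\omega\text{-density})$ under Doyon's bijection \eqref{quantudef}, the frequency components inherit the bijection for almost every $s$ (the exceptional $s$ being where the flow fails to be differentiable).

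Next I would derive \eqref{fullcond}. Start from the equation of motion \eqref{eqmotion}, $\frac{d}{dt}\omega_{s,t}(O)=\omega_{s,t}(\LL_t(O))$. Multiply by $e^{\ii\lambda t}$, integrate over $[0,T]$, divide by $T$, and integrate by parts: the boundary term $\frac{1}{T}[e^{\ii\lambda t}\omega_{s,t}(O)]_0^T$ vanishes as $T\to\infty$ by boundedness, and the remaining term gives $-\ii\lambda\,\omega_{\lambda,s}(O)=\lim_T\frac{1}{T}\int_0^T dt\,e^{\ii\lambda t}\omega_{s,t}(\LL_t(O))$. In the time-independent case $\LL_t=\LL$ this is exactly $\omega_{\lambda,s}(\LL(O))$, giving case~\ref{case1}. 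The only subtlety is that $\LL$ maps $\mathfrak{U}_{loc}$ into $\mathfrak{U}_{loc}$ by Assumption~\ref{ass1} (finite-range), so the frequency average commutes with applying $\LL$.

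For case~\ref{case2}, the extra hypothesis is a faithful stationary state $\omega_\infty$ of the dual dynamics and a time-independent generator. I would split $\LL = \ii\,\mathrm{ad}_H + \DD$ where $\DD$ is the dissipator. Evaluate \eqref{fullcond} at $O$ and at $O^\dagger O$ and use the stationarity $\omega_\infty\circ\LL=0$ together with the dissipativity/complete positivity inequality $\tau_t(O^\dagger O)-\tau_t(O^\dagger)\tau_t(O)\ge 0$ mentioned after \eqref{propagator}. The standard argument (as in the characterization of the commutant of a Lindbladian with faithful stationary state — Frigerio–Verri type) forces the frequency-$\lambda$ component to lie in the kernel of the dissipator acting by commutator, i.e. $\A_{\lambda,s}([L_x(\lambda),O])=\A_{\lambda,s}([L_x^\dagger(\lambda),O])=0$, which is \eqref{lindcond}; feeding this back into \eqref{fullcond} collapses $\LL$ to $\ii\,\mathrm{ad}_H$ and yields \eqref{hamcond}. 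Faithfulness of $\omega_\infty$ is what upgrades "the dissipator annihilates this component" to "this component commutes with every jump operator," rather than merely with their combination in $\DD$.

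The main obstacle I expect is the second bullet — specifically justifying the passage from $\DD$-annihilation to individual commutation with each $L_x(\lambda)$, and doing so at the level of pseudolocal functionals rather than bounded operators on a finite-dimensional Hilbert space. Two points need care: (i) the frequency decomposition of the $L_x(\lambda)$ must be compatible with the frequency decomposition of the state (this is why the integral over $\lambda$ in the Lindbladian is written with the same label), so one must verify that cross terms between different frequencies average to zero — this follows from $\lim_T\frac{1}{T}\int_0^T e^{\ii(\lambda-\lambda')t}dt=\delta_{\lambda,\lambda'}$ but requires the peripheral spectrum to be discrete, which is the countability hypothesis invoked around \eqref{tgge1}; and (ii) the Frigerio–Verri-style argument uses a Cauchy–Schwarz/positivity step that on an infinite lattice must be run inside the GNS Hilbert space $\mathcal{H}_\omega$ with the connected-correlator form \eqref{cc}, using Theorem~\ref{theorem1} to guarantee $\tau_t(O)$ stays in $\mathcal{H}_\omega$ with uniformly bounded norm. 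Everything else — existence of the averages, the bijection for a.e.\ $s$, the integration-by-parts identity — is routine given Theorems~\ref{theorem1} and \ref{theorem2}.
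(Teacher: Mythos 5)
Your proposal follows essentially the same route as the paper: case~\ref{case1} by integrating the equation of motion \eqref{eqmotion} against $e^{\ii\lambda t}$, integrating by parts and killing the boundary term via boundedness, and case~\ref{case2} by the Frigerio--Verri-type machinery (positivity of the dissipation function $D_t$, Cauchy--Schwarz, faithfulness of the stationary state) forcing the asymptotic components into the decoherence-free subalgebra $\mathcal{N}$, whose elements evolve unitarily under $H$ and commute with every $L_x(\lambda)$ --- the paper cites Frigerio and Dhahri--Fagnola for precisely the step you flag as the main obstacle. The only cosmetic difference is that the paper runs that argument on the operator side (first establishing $w^{*}\text{-}\lim_{t\to\infty}\tau_t(O)\in\mathcal{N}$ and then splitting the time integral at $C=2\pi n/\lambda$), whereas you phrase it on the state side via the frequency component $\omega_{\lambda,s}$; the mathematical content is the same.
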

Naturally, for the Hamiltonian case the requirement of the faithful stationary state is trivial (e.g. the tracial state is always a faithful stationary state) and $L_x(\eta)=0$ so only the Hamiltonian condition is relevant. 

Note that the functional $\omega_{\lambda}$ is a pseudolocal functional (in the sense of \eqref{pseudolocalstate}) with the flow $\omega_{s,t}$ from above, but it is not necessarily positive for $\lambda \neq 0$. But we will abuse terminology and still refer to it as a "state". The flow $\omega_{s,t}$ is two-dimensional and we may reduce it to a single parameter flow $\omega_{\lambda,s}$ with flow along the time direction being infinite and deformed by $e^{\ii \lambda t}$. The pseudolocal quantities $\A_{s,\lambda}(O)$ are defined across a family of pseudolocal states parameterized by $t$. The equations \eqref{hamcond} and \eqref{lindcond} need not be finite. Indeed, if there is no frequency $\lambda$ in the dynamics, they give $0=0$ identically.  

These results may appear to be daunting to apply to any given Hamiltonian, but as we will see, they have necessary and sufficient interpretations in terms of standard theoretical physics concepts - eigenoperators and equilibrium states. Physical intuition about Th.~\ref{theorem3} may be gained by observing that the criteria of the theorem essentially state that for finite system $\Lambda$ there exist $A^\Lambda_\lambda$ such that $[H_\Lambda,A^\Lambda_\lambda]=\lambda A^\Lambda_\lambda$ where $A^\Lambda_\lambda$ are pseudolocal, in the sense discussed previously, for the finite frequency state $\omega_{s,\lambda}$, i.e. such that $\lim_{\Lambda \to \infty} \ave{A^\Lambda_\lambda O}$ exists and is non-zero for at least some local $O$. Moreover, the Lindblad part states that these $A^\Lambda_\lambda$ are invisible to the dissipation (unaffected by it) when the dynamics of local observables at frequency $\lambda$ is described by the same $\omega_{s,\lambda}$ state. 

Indeed, any operator sequence $A_\Lambda$ that remains pseudolocal during the time evolution, i.e. there exists $\gamma>0$ such that it satisfies $\omega_{t}(A^\dagger_\Lambda A_\Lambda) \le \gamma |\Lambda|$ (i.e. extensive) and $\lim_{\Lambda \to \infty}\omega_{t}(A_\Lambda^\dagger O)$ exists $\forall O \in \mathfrak{U}_{loc}$, then $A_\Lambda-\omega_t(A_\Lambda)$ defines a pseudolocal dynamical symmetry. A very convenient and general case is the following. 

\begin{corollary}
Assume that there exists a pseudolocal sequence $A_\Lambda$ and a clustering initial state $\omega$ in the sense defined previously and assume that they satisfy under time evolution (where $\omega_t=\omega \circ \tau_t$),
\begin{align}
&(\tau_t(A_\Lambda))_{\Lambda}=e^{-\ii \lambda t} A_\Lambda+Z_\Lambda(t), \qquad \lambda \in \RR\\
&\exists \gamma>0, \quad |\omega_t(A_\Lambda^\dagger A_\Lambda)| \le \gamma |\Lambda|, \quad \forall t,\Lambda, \label{volumegrowth} \\
& \lim_{\Lambda \to \infty} \omega_t(A^\dagger_\Lambda O) \in \CC, \qquad \forall t, O \in \mathfrak{U}_{loc}
\end{align}
This defines a (left) pseudolocal dynamical symmetry if $\omega(e^{-\ii \lambda t} A_\Lambda+Z_\Lambda(t))=0$, if not then  we may use the zero-average sequence $e^{-\ii \lambda t} A_\Lambda+Z_\Lambda(t)-\omega(e^{-\ii \lambda t} A_\Lambda+Z_\Lambda(t))$ to define a (left) pseudolocal dynamical symmetry. 
\label{pseudolocaldyn}
\end{corollary}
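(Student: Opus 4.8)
The plan is to verify directly that the hypotheses of Corollary~\ref{pseudolocaldyn} are exactly those needed for the sequence $A_\Lambda$ (resp. its zero-average modification) to "remain pseudolocal under the time evolution" in the sense used just before the statement, and then to identify $\lambda$ as its oscillation frequency so that it is a pseudolocal dynamical symmetry of the type produced in Theorem~\ref{theorem3}.

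First I would fix the functional framework at each time. By Theorem~\ref{theorem1} the evolved state $\omega_t=\omega\circ\tau_t$ is again pseudolocal, so the Hilbert space $\mathcal{H}_{\omega_t}$, its predual, and Doyon's bijection \eqref{quantudef} are available at every $t$. Set $\A_t(O):=\lim_{\Lambda\to\infty}\omega_t(A_\Lambda^\dagger O)$; this limit exists by the third hypothesis, and condition (A) of pseudolocality is precisely the uniform-in-$t$ extensivity bound \eqref{volumegrowth}. If $\omega_t(A_\Lambda)\not\to 0$ one replaces $A_\Lambda$ by the zero-average sequence of the statement; since $(\tau_t(A_\Lambda))_\Lambda=e^{-\ii\lambda t}A_\Lambda+Z_\Lambda(t)$ differs from $\tau_t(A_\Lambda)$ by a Lieb--Robinson tail controlled by \eqref{locality}, subtracting the $\omega$-average of $(\tau_t(A_\Lambda))_\Lambda$ coincides, up to that vanishing tail, with subtracting $\omega_t(A_\Lambda)$, and the extensivity bound survives because $|\omega_t(A_\Lambda)|^2\le\omega_t(A_\Lambda^\dagger A_\Lambda)\le\gamma|\Lambda|$ by Cauchy--Schwarz. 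Hence $\A_t\in\mathfrak{A}_{\omega_t}$ with a well-defined local density in $\mathcal{H}_{\omega_t}$, i.e. the sequence stays pseudolocal for all $t$.

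Second I would extract the eigenoperator relation. Feeding $(\tau_t(A_\Lambda))_\Lambda=e^{-\ii\lambda t}A_\Lambda+Z_\Lambda(t)$ into $\A_t(O)=\omega_0(\tau_t(A_\Lambda^\dagger O))$ (for open, non-multiplicative $\tau_t$ one keeps this form and uses the two-sided pseudolocal version rather than factoring) and using \eqref{locality} on a nested family of balls to replace the spread-out operator by its projection up to an exponentially small error, the thermodynamic limit splits $\A_t$ into a part rotating at the single frequency $\lambda$ and a part sourced by $Z_\Lambda(t)$. Passing to the frequency components $\A_{\lambda,s}(O):=\lim_{T\to\infty}\tfrac1T\int_0^T e^{\ii\lambda t}\A_{t}(O)\,dt$ as in Theorem~\ref{theorem3}, the Cesàro average isolates the $\lambda$-component, which is therefore nonzero and, being a pseudolocal quantity attached to the family $\{\omega_t\}$, solves the Cauchy problem of Theorem~\ref{theorem2}; combined with the explicit phase $e^{-\ii\lambda t}$ this forces the eigenoperator identities \eqref{hamcond}--\eqref{lindcond} (equivalently \eqref{opentgge} in case~\ref{case2}) with eigenvalue $\lambda$. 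Thus $A_\Lambda$, or its zero-average version, is a left pseudolocal dynamical symmetry at frequency $\lambda$, which is the claim.

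The main obstacle is controlling the correction $Z_\Lambda(t)$: it is constrained only implicitly, through the extensivity bound on the whole sequence and the existence of the limit against local operators, so one must argue that it neither destroys pseudolocality nor feeds into the frequency-$\lambda$ component of $\A_t$. The clean way to see this is that $Z_\Lambda(t)=(\tau_t(A_\Lambda))_\Lambda-e^{-\ii\lambda t}A_\Lambda$ is, up to the Lieb--Robinson tail, the genuine Heisenberg/Liouville evolution of $A_\Lambda$ with its $\lambda$-eigenmode removed, hence asymptotically orthogonal to that mode in $\mathcal{H}_{\omega_t}$ and washed out by the time average at frequency $\lambda$. A subsidiary technical point is interchanging $\lim_{\Lambda\to\infty}$ with $\lim_{T\to\infty}\tfrac1T\int_0^T$, which is licensed by the uniform-in-$t$ bound \eqref{volumegrowth} together with the uniform boundedness of $\|(\tau_t(O))_\Lambda\|_{\mathcal{H}_\omega}$ from Theorem~\ref{theorem1}, allowing a dominated-convergence argument.
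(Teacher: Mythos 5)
Your proposal is correct and follows essentially the same route as the paper, which justifies the corollary in one line as following "directly from Theorem~\ref{theorem3} and the definition of pseudolocal quantities": you verify that hypotheses \eqref{volumegrowth} and the existence of $\lim_{\Lambda\to\infty}\omega_t(A_\Lambda^\dagger O)$ are exactly conditions (A) and (B) of pseudolocality with respect to each $\omega_t$, handle the zero-average normalization, and identify the frequency $\lambda$ through the Cesàro/Fourier component of Theorem~\ref{theorem3}. Your additional technical elaborations (Cauchy--Schwarz for the subtracted average, Lieb--Robinson control of the projection, dominated convergence for the limit interchange) are consistent with, and merely flesh out, the paper's terse argument.
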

This follows directly from Theorem~\ref{theorem3} and the definition of pseudolocal quantities. We may analogously construct right and two-sided pseudolocal dynamical symmetries. 

This type of pseudolocal dynamical symmetry is actually more general than needed to study all topical examples of quantum many-body non-ergodicity from the literature, so we will further specialize. 

\begin{definition}[Simple pseudolocality] If a pseudolocal sequence in Cor.~\ref{pseudolocaldyn} has $Z_\Lambda(t)=0$, then we will call such a sequence \emph{simple}. \label{definition1}
\end{definition}
 Several well-known and newly introduced (in later sections) examples are given in Fig.~\ref{figureillustration2}.

\begin{figure}[ht]
   \centering
    \includegraphics[width=\columnwidth]{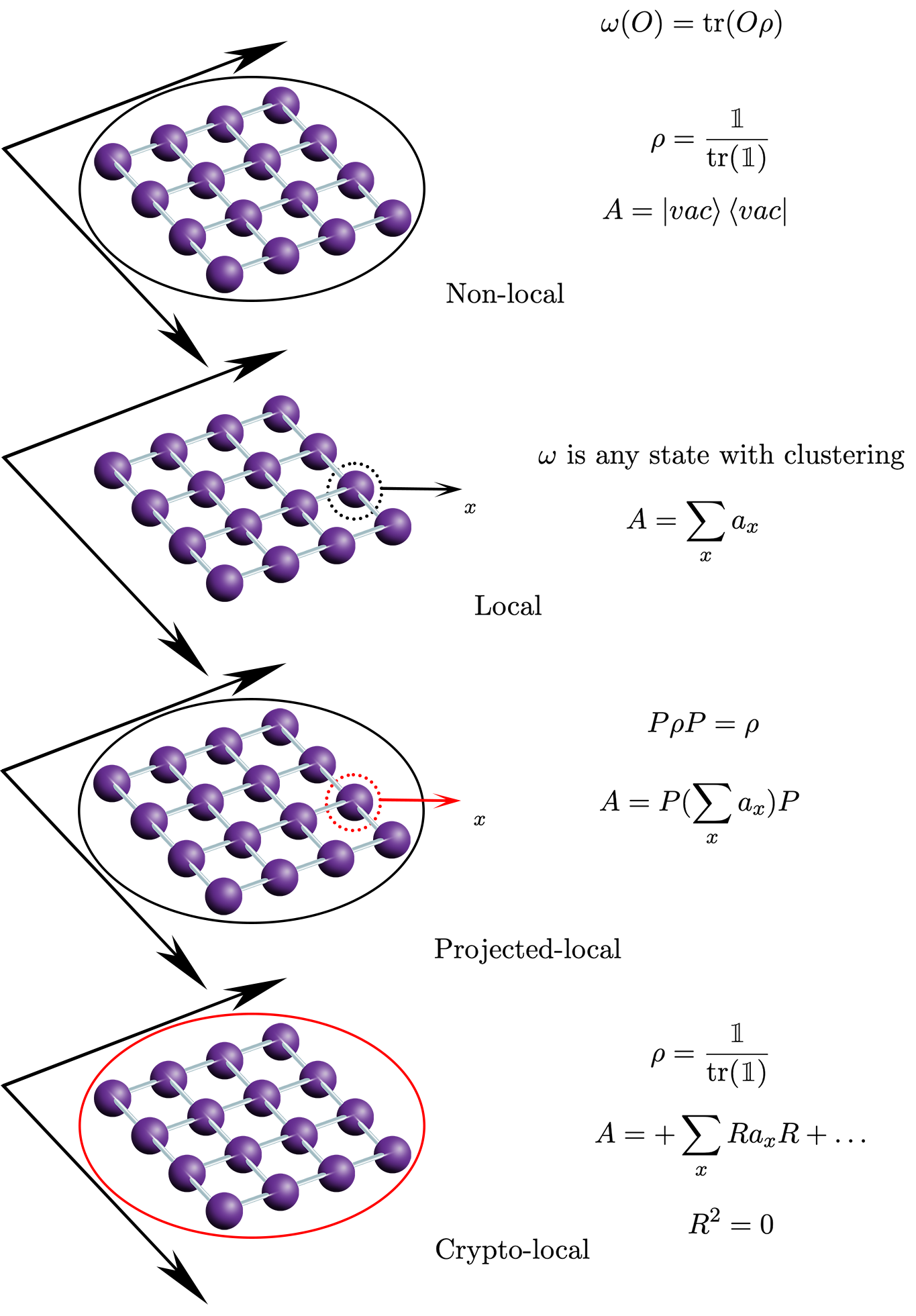}
    \caption{~\label{figureillustration2} An illustration of non-locality vs different kinds of pseudolocality. The shown subsystem is of arbitrary size $\Lambda$. First case (Non-local): $A$ has support on the entire subsystem for all sizes of the subsystem (e.g. a projector to vacuum state). It is not pseudolocal for the infinite temperature state. Second case (Local): $A$ is a simple sum of operators that act only on one site (translated to site $x$). It is pseudolocal for any state with clustering. Third case (Projected-local): A local operator (second case) is sandwiched between a projector $P$ (that acts on the entire subsystem) and the clustering state is entirely inside the subspace $P$ projects to. It is pseudolocal for such a state because the state does not "see" the projector. Fourth case (Crypto-local): $A$ contains terms (denoted $R$) that act on the entire subsystem for all size of the subsystem, but they cancel and they are not visible in the size of $A$ for e.g. the infinite temperature state.}
\end{figure}

We are now able to fully characterize the long-time dynamics in terms of the finite frequencies of the t-GGE. Let us look at \emph{local} dynamical symmetries, i.e. those whose operators sequences may be written as $A_\Lambda=\sum_{x \in \Lambda} a_x$, where $a_x$ is the translation by $x$ of the the local density $a \in \mathfrak{U}_{loc}$. 

\begin{theorem}[Local t-GGE]
\label{theorem4}
Let the initial state be pseudolocal as before. Assume that there is a countable finite set of local dynamical symmetries parameterized by $k=1,2,\ldots,M$ with sequences $\{(A_k)_\Lambda\}$ with corresponding frequencies $\lambda_l \in \RR$ (i.e. $Z_k(t)=0$) and assume that this set forms a basis for an algebra closed under commutation (i.e. any element of the algebra may written as a linear combination in this basis). Assume that the dynamics $\tau_t$ has no pseudolocal dynamical symmetries except those generated by this set. Then the long-time dynamics is given by a normal state $\omega^{tGGE}_t$ with matrix representation,
\begin{align}
\omega^{tGGE}_t(O)&=\lim_{\Lambda \to \infty} \tr_{\Lambda}\left[(\rho^{tGGE}_t)_\Lambda O\right], \qquad \forall O \in \mathfrak{U}_{\Lambda},\nonumber \\[1em]
(\rho^{tGGE}_t)_\Lambda&=\frac{\exp\left[\sum_k \mu_k e^{\ii \lambda_k t}(A_k)_\Lambda \right]}{\tr_\Lambda[\exp\left[\sum_k \mu_k e^{\ii \lambda_k t}(A_k)_\Lambda \right]]},
\end{align}
which exists, is a t-GGE in the sense above with exponential clustering, and is analytic in $\mu_k,t$ for $|\mu_k|<\mu_{*}$ and $\forall t$. In 1D $\mu_{*}=\infty$
\end{theorem}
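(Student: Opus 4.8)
The plan is to derive the explicit matrix form of the t-GGE from the abstract characterization in Theorem~\ref{theorem3}, and then to establish its analytic and clustering properties by importing Gibbs-state cluster-expansion technology. First I would invoke Theorem~\ref{theorem3}: in the long-time limit the (time-averaged) state is a pseudolocal t-GGE whose flow $\omega_{s,t}$ is generated, for almost every $s$, by pseudolocal dynamical symmetries $\A_{\lambda,s}$ obeying the eigenoperator relations \eqref{hamcond}--\eqref{lindcond}. By hypothesis every pseudolocal dynamical symmetry lies in the finite-dimensional span $\mathfrak{g}=\mathrm{span}\{(A_k)_\Lambda\}$, which closes under commutation; moreover each $A_k$ is \emph{simple} (Definition~\ref{definition1}), so $(\tau_t((A_k)_\Lambda))_\Lambda=e^{-\ii\lambda_k t}(A_k)_\Lambda$ and the frequency-$\lambda$ component of the flow activates only those $A_k$ with $\lambda_k=\lambda$. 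Integrating $\tfrac{d}{ds}\omega_{s,t}(O)=\A_{s,t}(O)$ along the pseudolocal flow, the path-ordering $\mathcal{P}$ in \eqref{tge2} produces a finite product of exponentials of elements of $\mathfrak{g}$ with coefficients proportional to $\mu_k e^{\ii\lambda_k t}$. Because $\mathfrak{g}$ is a finite-dimensional Lie algebra, Baker--Campbell--Hausdorff guarantees that, in a neighbourhood of the identity, such a product equals a single exponential $\exp[\sum_k \tilde\mu_k(\{\mu_j\},t)(A_k)_\Lambda]$ with $\tilde\mu_k$ analytic in the $\mu_j$; reabsorbing the conserved ($\lambda_k=0$) Hamiltonian term with its Lagrange multiplier into one of the slots and renaming $\tilde\mu_k\to\mu_k$ gives the stated form of $(\rho^{tGGE}_t)_\Lambda$. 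Analytic continuation inside the domain $|\mu_k|<\mu_{*}$ extends this beyond the infinitesimal neighbourhood.

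Next I would establish that $(\rho^{tGGE}_t)_\Lambda$ has a thermodynamic limit with exponential clustering, jointly analytic in $(\mu_k,t)$. Write $K_\Lambda(t):=\sum_k \mu_k e^{\ii\lambda_k t}(A_k)_\Lambda=\sum_{x\in\Lambda}\sum_k \mu_k e^{\ii\lambda_k t}a^{(k)}_x$, a finite-range (generally non-Hermitian) ``Hamiltonian'' whose local terms have norm controlled by $\sum_k|\mu_k|$, uniformly in $t$ since $|e^{\ii\lambda_k t}|=1$. For such $K_\Lambda(t)$ the convergence of $\mathrm{tr}_\Lambda\exp(K_\Lambda(t))$ per site, the existence of $\lim_\Lambda\mathrm{tr}_\Lambda[(\rho^{tGGE}_t)_\Lambda\,\cdot\,]$ on $\mathfrak{U}_{loc}$, and exponential decay of connected correlators all follow from the Kliesch--Gogolin--Kastoryano--Riera--Eisert high-temperature cluster (polymer/Ursell-function) expansion, which converges whenever $\sum_k|\mu_k|$ is below a lattice-dependent threshold $\mu_{*}$; the combinatorial bounds depend only on the moduli of the couplings, so everything is uniform in $t$, and uniform convergence on compact subsets of $\{|\mu_k|<\mu_{*}\}\times\RR$ gives the joint analyticity. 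In $D=1$ one replaces this by Araki's theorem on analyticity of one-dimensional quantum Gibbs states for finite-range interactions, whose proof is already carried out by complex-analytic estimates in the interaction strength and therefore extends to complex couplings, yielding $\mu_{*}=\infty$ together with exponential clustering for all $\mu_k$.

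It then remains to check that $(\rho^{tGGE}_t)_\Lambda$ actually realizes the t-GGE of Theorem~\ref{theorem3} and to pin down the $\mu_k$. Since each $(A_k)_\Lambda$ is a simple dynamical symmetry, $\tau_t$ acts on $K_\Lambda(0)$ precisely by advancing the phases $e^{\ii\lambda_k t}$, so $\omega^{tGGE}_t$ is covariant under $\tau_t$ up to these phase advances; differentiating this covariance and using the vanishing commutators with the $L_x(\lambda)$ that define a dynamical symmetry, one reads off \eqref{hamcond}--\eqref{lindcond} (equivalently \eqref{opentgge}) for the frequency components $\omega^{tGGE}_{\lambda,s}$ and the associated $\A_{\lambda,s}$. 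The ``no other pseudolocal dynamical symmetries'' hypothesis forces any admissible t-GGE flow to be built from exactly these generators, so Doyon's bijection $\mathfrak{D}$ between pseudolocal states and such flows makes $\omega^{tGGE}_t$ the \emph{unique} t-GGE, hence the genuine long-time limit. Finally the $\mu_k$ are fixed by matching initial data: $\omega^{tGGE}_0$ must reproduce the time-averaged frequency components $\lim_{T\to\infty}\tfrac1T\int_0^T e^{\ii\lambda_k t}\omega_0(\tau_t(A_k))\,dt$, and the map $\{\mu_k\}\mapsto\{\omega^{tGGE}_0(A_j)\}$ is the gradient of the (strictly convex along real directions) log-partition function, hence a local analytic diffeomorphism near the origin, giving existence and uniqueness of the representation for $|\mu_k|<\mu_{*}$.

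The step I expect to be the main obstacle is the second one: proving existence of the thermodynamic limit and \emph{exponential} clustering \emph{uniformly in $t$} for the \emph{non-Hermitian} operator $K_\Lambda(t)$, and in particular the sharp one-dimensional claim $\mu_{*}=\infty$. Standard Gibbs-state expansions are phrased for Hermitian Hamiltonians at real inverse temperature, whereas here the couplings $\mu_k e^{\ii\lambda_k t}$ are genuinely complex, so spectral and variational arguments are unavailable and one must instead control the analytic continuation of the Ursell/cluster functions, verifying that the absolute convergence radius is governed by $\sum_k|\mu_k|$ alone. The 1D statement further hinges on checking that Araki's transfer-operator analyticity argument survives complexification of the interaction --- plausible since that proof is complex-analytic from the outset, but it is the delicate point on which the global ($\mu_{*}=\infty$) claim rests.
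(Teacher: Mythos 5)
Your proposal follows essentially the same route as the paper's proof: it derives the single-exponential form from the closed finite algebra via Baker--Campbell--Hausdorff applied to the path-ordered flow, establishes existence, analyticity and exponential clustering from Araki's theorem in 1D and the high-temperature cluster-expansion results in higher dimensions, verifies that the simple dynamical symmetries $(A_k)_\Lambda$ generate the t-GGE of Theorem~\ref{theorem3}, and uses the ``no other pseudolocal dynamical symmetries'' hypothesis for uniqueness. The only notable difference is cosmetic: the paper sidesteps your flagged non-Hermiticity obstacle by defining the effective Hamiltonian as $\sum_k \mu_k e^{\ii\lambda_k t}(A_k)_\Lambda + h.c.$, and verifies the eigenoperator property by a Fourier-orthogonality computation rather than your covariance argument.
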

The same holds for open quantum systems if $[L(\eta),A_k]=[L(\eta)^\dagger,A_k]=0$. Note that we view conservation laws as special cases of dynamical symmetries with $\lambda_k=0$. In particular, for any system we always have $A_0=H$, $\lambda_0=0$ and $\mu_0=\beta$. 

Taking $\lambda=0$, which is must be finite for at least some $O$, we arrive at the operator form of the (dynamical) weak ETH immediately both for open and closed systems. This also proves the ETH for open systems \cite{openETH,openETH2,Keeling}. 

\begin{corollary}[Weak eigenoperator thermalization] 
\label{ETHcorollary}
Let the system satisfy Assumptions~\ref{ass1}-\ref{ass4}. Let $\omega$ be a pseudolocal initial state and let the set $\{(A_k)_\Lambda,\mu_k,\lambda_k\}$ be as in Theorem~\ref{theorem4}. Pick a subset with $\lambda_k=0$. Let $\kappa:=\{k|\lambda_k=0\}$ We then have,
\begin{align}
&\lim_{T \to \infty} \frac{1}{T}\int_0^T dt \omega(\tau_t(O))=\nonumber\\
&\lim_{\Lambda \to \infty}\tr_\Lambda\left[\frac{\exp\left[\sum_{k \in \kappa} \mu_k (A_k)_\Lambda \right]}{\tr_\Lambda\left[\exp\left[\sum_{k \in \kappa} \mu_k (A_k)_\Lambda \right]\right]}O\right], \quad \forall O \in \mathfrak{U}_{loc},
\end{align}
where $\mu_0:=\beta$, $(A_0)_\Lambda=H_\Lambda$.  
\end{corollary}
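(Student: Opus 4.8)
\emph{Proof plan.} The strategy is to reduce everything to Theorem~\ref{theorem4} and then extract the zero-frequency part. First I would invoke Theorem~\ref{theorem4}: under Assumptions~\ref{ass1}--\ref{ass4}, with the stated hypotheses on the finite set $\{(A_k)_\Lambda,\mu_k,\lambda_k\}$, the long-time dynamics of $\omega\circ\tau_t$ is the local t-GGE $\omega^{tGGE}_t$, so that the difference $\omega(\tau_t(O))-\omega^{tGGE}_t(O)$ has vanishing Ces\`aro average for every $O\in\mathfrak{U}_{loc}$ (transients decay and wash out). Hence
\begin{equation}
\lim_{T\to\infty}\frac1T\int_0^T dt\,\omega(\tau_t(O))=\lim_{T\to\infty}\frac1T\int_0^T dt\,\omega^{tGGE}_t(O).
\end{equation}
Since $\omega^{tGGE}_t$ is built from finitely many real frequencies $\lambda_k$, the function $t\mapsto\omega^{tGGE}_t(O)$ is almost periodic, so the right-hand side equals exactly its zero-frequency Fourier component, which by construction is the $\lambda=0$ component $\omega_{\lambda=0}$ of the flow appearing in Theorem~\ref{theorem3}.

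Next I would identify $\omega_{\lambda=0}$ with the claimed conserved-charge Gibbs state. The commutator--frequency identity $[H,[A_j,A_k]]=(\lambda_j+\lambda_k)[A_j,A_k]$, which follows from $[H_\Lambda,(A_k)_\Lambda]=\lambda_k(A_k)_\Lambda$ and the Jacobi identity, shows that the subset $\kappa=\{k:\lambda_k=0\}$ spans a Lie subalgebra and that every $A_k$ with $k\in\kappa$ commutes with $H$; in the closed case $\omega_t(A_k)$ is then constant in $t$, and in the open case Theorem~\ref{theorem3} (case~\ref{case2}) additionally forces invariance under the jump operators. By Theorem~\ref{theorem3}, $\omega_{\lambda=0}$ is a pseudolocal state with $\omega_{\lambda=0}([H,O])=0$ for all $O\in\mathfrak{U}_{loc}$ (together with the Lindblad conditions when applicable), i.e. a stationary pseudolocal state whose only pseudolocal dynamical symmetries are those generated by $\{(A_k)_\Lambda:k\in\kappa\}$. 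Applying the representation result behind Theorem~\ref{theorem4} to this purely stationary situation (all finite-frequency multipliers set to zero, which trivially respects $|\mu_k|<\mu_*$) yields the matrix form $\exp[\sum_{k\in\kappa}\mu'_k(A_k)_\Lambda]/\tr_\Lambda[\exp(\cdots)]$, and the uniqueness part of that representation fixes $\mu'_k$ through the conserved-charge values $\omega_t(A_k)$; since those values are time-independent and coincide with the ones that fix the corresponding $\mu_k$ in the full t-GGE, we get $\mu'_k=\mu_k$ for $k\in\kappa$, with $\mu_0=\beta$ and $(A_0)_\Lambda=H_\Lambda$. Passing to $\Lambda\to\infty$ and using the existence and exponential clustering already established in Theorem~\ref{theorem4} gives the stated trace formula.

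\textbf{Main obstacle.} The delicate step is the thermodynamic-limit identification of the zero-frequency component of the \emph{full} t-GGE with the Gibbs state of the conserved subalgebra alone: at finite $\Lambda$, $\frac1T\int_0^T\exp[\sum_k\mu_k e^{\ii\lambda_k t}(A_k)_\Lambda]\,dt$ is \emph{not} $\exp[\sum_{k\in\kappa}\mu_k(A_k)_\Lambda]$, and an $su(2)$-type example (with $A^{\pm}$ of frequency $\pm\lambda$ and $A^z\in\kappa$) shows that the finite-size expectation values genuinely differ. The equality holds only after $\Lambda\to\infty$ and must be argued through the pseudolocality constraints --- extensivity $\omega_t(A_\Lambda^\dagger A_\Lambda)\le\gamma|\Lambda|$ and clustering, including clustering of the time-averaged state --- together with uniqueness of a pseudolocal stationary state with prescribed conserved-charge expectation values, i.e. precisely the maximum-entropy and bijection ($\mathfrak{D}$) machinery underlying Theorems~\ref{theorem3} and~\ref{theorem4}. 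The remaining ingredients (the almost-periodic Fourier extraction, the subalgebra computation, and the final $\Lambda\to\infty$ limit) are routine once those theorems are granted.
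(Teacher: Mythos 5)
Your proposal follows essentially the same route the paper takes: the corollary is presented there as an immediate consequence of Theorem~\ref{theorem4}, obtained by extracting the $\lambda=0$ Fourier component of the t-GGE (the paper's entire justification is the sentence ``Taking $\lambda=0$\ldots we arrive at the operator form of the (dynamical) weak ETH immediately''), and your first two steps --- invoking Theorem~\ref{theorem4} to replace $\omega\circ\tau_t$ by $\omega^{tGGE}_t$ in Ces\`aro mean, then using almost-periodicity of $t\mapsto\omega^{tGGE}_t(O)$ to identify the time average with the zero-frequency component --- are exactly the unpacking of that sentence. Where you go beyond the paper is in the step you flag as the ``main obstacle'': you are right that at finite $\Lambda$ the time average of $\exp[\sum_k\mu_k e^{\ii\lambda_k t}(A_k)_\Lambda]$ is \emph{not} $\exp[\sum_{k\in\kappa}\mu_k(A_k)_\Lambda]$ (your $su(2)$ counterexample is apt), so the claimed trace formula requires a genuine thermodynamic-limit argument identifying the $\lambda=0$ component of the full t-GGE with the Gibbs state of the conserved subalgebra alone. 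The paper supplies no such argument anywhere --- the Appendix proves only Theorems~\ref{theorem1}--\ref{theorem4} and is silent on this corollary --- so your proposal is, if anything, more honest than the source: it correctly names the one nontrivial step and sketches the ingredients (extensivity, clustering of the time-averaged state, uniqueness of the pseudolocal stationary state with prescribed charge values via the bijection $\mathfrak{D}$) that would be needed to close it, though it does not carry that step out. Your subalgebra observation $[H,[A_j,A_k]]=(\lambda_j+\lambda_k)[A_j,A_k]$, ensuring that $\kappa$ indexes a closed set of conserved charges, is a useful addition not made explicit in the paper. In short: same approach, with a real gap that is present in (and inherited from) the paper itself, which you have correctly localized but not filled.
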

 Intuitively, the canonical weak ETH is equivalent to the zero-frequency case of Th.~\ref{theorem3} and hence the corresponding pseudolcoal dynamical symmetries are only pseudolocal conservation laws and the corresponding state is a standard (generalized) Gibbs state.
\begin{remark}
If an additional physically reasonable assumption that $\exists \lim_{t \to \infty}||\tau_t(O)||_{\mathcal{H}_\omega}$ holds, then the $\lim_{t \to \infty} \omega\circ\tau_t=\omega^{tGGE}_t$, i.e. $\omega^{tGGE}_t$ is the asymptotic state in the strong sense in Theorems~\ref{theorem3} and \ref{theorem4}, not just at given frequencies. 
\end{remark}

In order to avoid confusion, let me now remark that in \cite{Diehl_2008} it was shown that local quantum Liouvillians may be used to engineer unique stationary dark states with long-range order (i.e. not pseudolocal states). I emphasize that there is no contradiction with the work presented here because it deals with time-averaged dynamics rather than the stationary state dynamics. As the unique stationary dark state is not clustered, using Lieb-Robinson bounds it follows that the relaxation time of such models cannot be independent of system size \cite{Poulin}. In cases where the relaxation time diverges the long-time average need not equal the exact stationary state average, cf. \cite{ZnidaricRelaxation,HoshoGap}. Moreover, \cite{Diehl_2008} shows uniqueness only for the exact 0 eigenvalue and it would be interesting to check what happens for purely imaginary eigenvalues as they also contribute to the long-time dynamics.

Let us now step away from far-from-equilibrium dynamics for the moment, and see what we can say about dynamics near equilibrium. Let us assume that the system is in a clustering equilibrium state in the sense that $\omega\circ\tau_t=\omega$. A very useful concept in that case is the Mazur bound \cite{Mazur,Ilievski_2012} for the Drude weight \cite{FiniteFreqDrude}, which provides a lower bound for the susceptibilities and ballistic transport. It is given in terms of quasilocal charges and its finite frequency version \cite{Marko2}, which is sufficient for our purposes, reads,
\begin{equation}
\lim_{T \to \infty}\frac{1}{NT}\int_0^T dt e^{-\ii \lambda t}\omega(O^\dagger O(t))\ge\sum_j \frac{|\omega(O^\dagger A_j)|^2}{N \omega(A^\dagger_jA_j)},
\end{equation}
where $A_j=\sum_x a^{(j)}_x$, $O=\sum_x o_x$, $[H,A_j]=\lambda A_j$ and we assume that $A_j$ are orthogonal in the sense $\omega(A^\dagger_j A_k)=0$ if $k\neq j$. It has long been conjectured that the Mazur bound saturates if all the $A_j$ are known \cite{QL4,karrasch2017proposal}. A partial result for finite systems called the Suzuki equality exists \cite{SUZUKI1971277,Dhar_2021}, but this conjecture remained unproven in the thermodynamic limit where it would have deep implications for e.g. superconducting transport in the linear response regime \cite{sirker2020transport}.

In the setup discussed here the Mazur bound becomes a straightforward equality. 

\begin{corollary}[Mazur equality]
\label{Mazur}
Let $\omega=\omega\circ\tau_t$ be a pseudolocal equilibrium state. We have the following identity,
\begin{equation}
\lim_{T \to \infty}\frac{1}{T}\int_0^T dt e^{-\ii \lambda t}\omega(O^\dagger\tau_t(o))=\A_{1,\lambda}(o),
\end{equation}
where $\A_{1,\lambda}$ is a pseudolocal dynamical symmetries, i.e. satisfying $\A_{1,\lambda}([H,q])=\lambda \A_{1,\lambda}(q)$, $\forall q \in \mathfrak{U}_{loc}$, which has maximal overlap with $o$ in the sense that, 
\begin{equation}
\A_{1,\lambda}(q):=\lim_{T \to \infty} \ave{\mathfrak{D}(\frac{1}{T}\int_0^T dt e^{\ii \lambda t}\tau_t(o)),q}{}^c_\omega,
\end{equation}
where $\mathfrak{D}:\mathcal{H}_{\omega} \to \mathfrak{U}_\omega$ is the bijection between the Hilbert space of local observables and the pseudolocal quantities. 
\end{corollary}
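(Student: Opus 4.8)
The plan is to show that the time-averaged, frequency-projected autocorrelation function is itself a (right) pseudolocal dynamical symmetry, and that it is precisely the one produced by projecting the observable onto the space of such symmetries. First I would fix the equilibrium state $\omega = \omega\circ\tau_t$ and form the Cesàro-type average $o_{T,\lambda} := \frac{1}{T}\int_0^T dt\, e^{\ii\lambda t}\tau_t(o)$. By Theorem~\ref{theorem1}, each $\tau_t(o)$ lies in $\mathcal{H}_\omega$ with uniformly bounded norm (here $\omega\circ\tau_t = \omega$ so the induced form is just $\ave{\cdot,\cdot}^c_\omega$), hence $o_{T,\lambda}$ is a bounded sequence in the Hilbert space $\mathcal{H}_\omega$. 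Because bounded sets in a Hilbert space are weakly sequentially precompact, one extracts a weak limit $o_\lambda := \lim_{T\to\infty} o_{T,\lambda}$ along a subsequence (with Assumption~\ref{ass3} and the stationarity of $\omega$ ensuring the full limit exists in the weak sense needed to evaluate against local $q$). This is the step where the frequency projection does its work: the only components of $o$ that survive the $\frac{1}{T}\int e^{\ii\lambda t}$ average are those that oscillate at exactly $\lambda$ under $\tau_t$.

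Next I would verify that $\mathfrak{D}(o_\lambda)$ — equivalently the functional $q \mapsto \ave{o_\lambda, q}^c_\omega$ — satisfies the dynamical-symmetry eigenrelation. The key computation: for any local $q$,
\begin{equation}
\ave{o_\lambda,[H,q]}^c_\omega = \lim_{T\to\infty}\frac{1}{T}\int_0^T dt\, e^{\ii\lambda t}\,\ave{\tau_t(o),[H,q]}^c_\omega,
\end{equation}
and using $\ave{\tau_t(o),[H,q]}^c_\omega = -\ave{[H,\tau_t(o)],q}^c_\omega = \ii\frac{d}{dt}\ave{\tau_t(o),q}^c_\omega$ (stationarity of $\omega$ makes the form invariant under $\adH$ in the appropriate sense), an integration by parts in $t$ turns the boundary terms into $O(1/T)\to 0$ contributions and the bulk term into $\lambda\,\ave{o_\lambda,q}^c_\omega$. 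Hence $\A_{1,\lambda}(q) := \ave{o_\lambda,q}^c_\omega$ obeys $\A_{1,\lambda}([H,q]) = \lambda\A_{1,\lambda}(q)$, so by Theorem~\ref{theorem3} it is a pseudolocal dynamical symmetry. Simultaneously, pairing $o_\lambda$ against $O^\dagger$ itself (taking $q$ built from $o$) and unwinding the GNS inner product gives exactly $\lim_{T\to\infty}\frac{1}{T}\int_0^T dt\, e^{-\ii\lambda t}\omega(O^\dagger\tau_t(o)) = \A_{1,\lambda}(o)$, which is the claimed identity; the $e^{-\ii\lambda t}$ versus $e^{+\ii\lambda t}$ discrepancy is absorbed by the $\dagger$ in the connected correlator. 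The maximal-overlap (i.e. orthogonal-projection) characterization then follows because $o - o_\lambda$ is, by construction, $\ave{\cdot,\cdot}^c_\omega$-orthogonal to every dynamical symmetry at frequency $\lambda$: any such symmetry $\B$ satisfies $\ave{\B, o - o_\lambda}^c_\omega = \ave{\B,o}^c_\omega - \lim_T \frac{1}{T}\int_0^T e^{\ii\lambda t}\ave{\B,\tau_t(o)}^c_\omega\,dt = \ave{\B,o}^c_\omega - \ave{\B,o}^c_\omega = 0$, using that $\tau_{-t}^\dagger$ rotates $\B$ by $e^{-\ii\lambda t}$.

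The main obstacle I expect is justifying the existence of the Cesàro limit $o_\lambda$ in $\mathcal{H}_\omega$ and the interchange of $\lim_T$ with the inner product and with $\adH$ — i.e. that weak convergence of $o_{T,\lambda}$ against local test operators is actually enough, and that no "mass" escapes to infinity in a way that would spoil the eigenrelation or the projection identity. This is where Theorem~\ref{theorem1}'s uniform boundedness of $\|(\tau_t(o))_\Lambda\|_{\mathcal{H}_\omega}$ and the locality bound \eqref{locality} are essential: they give equicontinuity in $\Lambda$ uniformly in $t$, so one can first pass to the $\Lambda\to\infty$ limit and then handle the $t$-average on the fixed Hilbert space $\mathcal{H}_\omega$, where the time average is a genuine contraction (von Neumann–type mean ergodic argument applied to the isometry group $\tau_t$ deformed by $e^{\ii\lambda t}$). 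A secondary subtlety is that for $\lambda\neq 0$ the projected functional need not be positive, so one must be careful to phrase everything purely in terms of the sesquilinear form $\ave{\cdot,\cdot}^c_\omega$ rather than any state-positivity, exactly as flagged in the remark following Theorem~\ref{theorem3}.
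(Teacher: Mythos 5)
Your proposal is correct and follows essentially the same route as the paper: the paper's own justification is a single sentence invoking exactly the ingredients you use --- the connected-correlator inner product on $\mathcal{H}_\omega$, the uniform boundedness and pseudolocality-preservation of Theorem~\ref{theorem1}, the bijection $\mathfrak{D}$, and the definition of pseudolocal dynamical symmetries --- and your mean-ergodic/Ces\`aro argument, the integration by parts yielding the eigenrelation, and the orthogonal-projection reading of ``maximal overlap'' are the natural fleshing-out of that sentence. The only caveat is bookkeeping of the $e^{\pm\ii\lambda t}$ phases through the antilinear first slot of $\ave{\cdot,\cdot}^c_\omega$ and the $\dagger$ in \eqref{quantudef}, which you correctly flag as cancelling; this is no looser than the paper's own conventions.
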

This follows directly from the definition of the inner product on $\mathcal{H}_\omega$, pseudolocal dynamical symmetries, Theorem~\ref{theorem1} and the existence of the bijection. By construction if we know all the pseudolocal dynamical symmetries we also know the ones having maximal overlap with $o$. In this setup normalization is not needed as it is present by construction. 

This is likewise consistent with similar results obtained for fully extensive operators of closed Hamiltonians at zero frequency and finite momentum \cite{Doyon_2022} and finite frequency \cite{Doyon3}.

Before we turn to studying examples let us state the following simple result in anticipation of its use for pseudolocal quantities later on. 

\begin{corollary}
Assuming that a sequence $A_\Lambda \in \mathfrak{U}_{loc}$ is a simple pseudolocal sequence defining \emph{left} pseudolocal quantities $\Aa_{s}(O)$ with respect to some state $\omega$ with flow $\omega_s$, additionally satisfying $\tau_t(A^\dagger_\Lambda A_\Lambda )=\tau_t(A^\dagger_\Lambda) \tau_t(A_\Lambda)$, i.e. a sequence satisfying,
\begin{enumerate}
    \item $\omega_s(A_\Lambda^\dagger A_\Lambda) \le \gamma |\Lambda|^D$, for some $\gamma$ and $\forall \Lambda$ and,
    \item the limit $\Aa_{s}(O):=\lim_{\Lambda \to \infty} \omega(A_\Lambda^\dagger O)$ exists for all $O \in \mathfrak{U}_{loc}$,
\end{enumerate}
then it remains a simple pseudolocal sequence wrt to $\omega \circ \tau_\tau$, $\forall \tau$. The similar holds for two-sided and right pseudolocal sequences. 
\label{corpseudo}
\end{corollary}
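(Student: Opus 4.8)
The plan is to verify the two defining conditions of a simple pseudolocal sequence for $A_\Lambda$ with respect to the time-evolved state $\omega\circ\tau_\tau$, reusing the boundedness machinery of Theorem~\ref{theorem1} rather than reproving it. First I would record the elementary rewriting $\omega_s\circ\tau_\tau(A_\Lambda^\dagger A_\Lambda)=\omega_s(\tau_\tau(A_\Lambda^\dagger A_\Lambda))$, and then invoke the hypothesis $\tau_\tau(A_\Lambda^\dagger A_\Lambda)=\tau_\tau(A_\Lambda^\dagger)\tau_\tau(A_\Lambda)$ (which holds automatically in the closed-system/isometric case, and is imposed as an assumption in the dissipative case) to reduce this to $\omega_s(\tau_\tau(A_\Lambda^\dagger)\tau_\tau(A_\Lambda))$. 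The point is that $\tau_\tau(A_\Lambda)$ is again a local operator (up to exponentially small tails by the Lieb-Robinson bound \eqref{locality}), and Theorem~\ref{theorem1} guarantees $\lim_{\Lambda\to\infty}\|(\tau_\tau(O))_\Lambda\|_{\mathcal{H}_{\omega}}$ exists and is uniformly bounded with induced form $\ave{\cdot,\cdot}^c_{\omega\circ\tau_\tau}$. I would then argue that the extensive bound $\omega_s(A_\Lambda^\dagger A_\Lambda)\le\gamma|\Lambda|^D$ is transported: applying the bound on $\|(\tau_\tau(\cdot))_\Lambda\|_{\mathcal{H}_\omega}$ termwise to $A_\Lambda=\sum_{x}a_x$ (or, more generally, using that $\tau_\tau$ is contractive, $\|\tau_\tau(O)\|\le\|O\|$, together with the connected-correlator structure of \eqref{cc}) yields $\omega\circ\tau_\tau(A_\Lambda^\dagger A_\Lambda)\le\gamma'|\Lambda|^D$ for a new constant $\gamma'$ independent of $\Lambda$, establishing condition~1.

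For condition~2 I would show the limit $\Aa_{s}^{\tau}(O):=\lim_{\Lambda\to\infty}\omega\circ\tau_\tau(A_\Lambda^\dagger O)=\lim_{\Lambda\to\infty}\omega(\tau_\tau(A_\Lambda^\dagger\,\tau_{-\tau}\text{-nothing}))$—more carefully, write $\omega\circ\tau_\tau(A_\Lambda^\dagger O)=\omega_s((\tau_\tau(A_\Lambda^\dagger)) \cdot (\text{something involving } O))$ is not quite the right move; instead use that $\omega\circ\tau_\tau$ is itself a pseudolocal state by the second bullet of Theorem~\ref{theorem1}, so its connected correlator form $\ave{\cdot,\cdot}^c_{\omega\circ\tau_\tau}$ exists, and then by Doyon's bijection \eqref{quantudef} applied to the state $\omega\circ\tau_\tau$, any sequence satisfying the extensive bound of condition~1 whose correlators against local operators converge defines a pseudolocal quantity. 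Convergence of $\omega\circ\tau_\tau(A_\Lambda^\dagger O)$ for fixed $O\in\mathfrak{U}_{loc}$ follows because $\tau_\tau(O)$ is approximated to exponential accuracy by $(\tau_\tau(O))_\Lambda$, reducing to the assumed convergence of $\omega(A_\Lambda^\dagger\cdot)$ against local operators together with the uniform $\mathcal{H}_\omega$-bound; a Cauchy argument in $\Lambda$ closes it. Since by hypothesis there is no $Z_\Lambda(\tau)$ correction, the transported sequence is again \emph{simple} in the sense of Definition~\ref{definition1}.

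The two-sided and right cases follow by the identical argument with $\omega(A_\Lambda^\dagger O)$ replaced by $\frac12\omega(\{A_\Lambda,O\})$ respectively $\omega(O A_\Lambda)$, using that the factorization hypothesis and Theorem~\ref{theorem1} are symmetric under $\dagger$. The main obstacle I anticipate is controlling condition~1 uniformly in $\Lambda$: naively bounding $\omega\circ\tau_\tau(A_\Lambda^\dagger A_\Lambda)$ termwise picks up cross-terms $\omega(\tau_\tau(a_x^\dagger)\tau_\tau(a_y))$ whose number grows like $|\Lambda|^{2D}$, so one must genuinely exploit the connected-correlator (clustering) structure—i.e. that $\omega\circ\tau_\tau$ is p-clustering by Theorem~\ref{theorem1}, forcing a summable decay in $\mathrm{dist}(x,y)$ after the finite Lieb-Robinson light-cone spreading—to recover only the extensive $|\Lambda|^D$ scaling. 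This is precisely the step where the factorization assumption $\tau_\tau(A_\Lambda^\dagger A_\Lambda)=\tau_\tau(A_\Lambda^\dagger)\tau_\tau(A_\Lambda)$ and the pseudolocality (not mere quasi-locality) of the evolved state are indispensable.
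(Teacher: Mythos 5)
Your argument is correct and follows the same route as the paper, which justifies this corollary in a single line as an immediate consequence of Theorem~\ref{theorem1} (pseudolocality and $p$-clustering of $\omega\circ\tau_\tau$, with uniformly bounded induced form) together with Definition~\ref{definition1}, using the factorization hypothesis $\tau_\tau(A_\Lambda^\dagger A_\Lambda)=\tau_\tau(A_\Lambda^\dagger)\tau_\tau(A_\Lambda)$ exactly where you do. Your elaboration of why the extensive bound survives time evolution --- via the clustering of the evolved state rather than a naive termwise estimate --- is a faithful unpacking of what the paper leaves implicit.
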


This immediately follows from Th.~\ref{theorem1} and Def.~\ref{definition1}. Note that the extra requirement $\tau_t(A^\dagger_\Lambda A_\Lambda )=\tau_t(A^\dagger_\Lambda) \tau_t(A_\Lambda)$ is trivially satisfied for closed systems. 

\subsection{Generalizations to time-dependent systems and systems without translational invariance}
\label{generalization}
In the previous section I specialized to time-independent systems at certain points for the sake of clarity of presentation. Likewise, I also specialized to systems that have translational invariance in some sense (e.g. at finite momentum). Let us now discuss how to generalize the main results. 

{\bf Time-dependent systems --}. The first part of Theorem~\ref{theorem3} applies to to time-dependent systems. The second part holds for periodic dynamics provided we work in the extended space representation \cite{howland1974stationary}. To sketch the idea, let $H(t)=H(t+P)$, we extended the equations of motion,
\begin{equation}
\frac{\partial}{\partial t} \psi(\theta,t)= \left(\frac{\partial}{\partial \theta}-\ii H(\theta) \right)\psi(\theta,t),
\end{equation}
then we $\phi(t)=\psi(t,t)$ solves the original time-dependent equations of motion. 
Provided that $\psi(\theta,t)$ is sufficiently well-defined (see \cite{pre-therm} and \cite{howland1974stationary} for details and examples) we may then perform a Fourier transformation, and obtain, 
\begin{equation}
\frac{\partial}{\partial t} \psi(k,t)= \sum_{k'}\left(\tilde{H}(k-k')-\frac{2 \pi k}{P} \right)\psi(k',t).
\end{equation}
The total generator $(H_\omega)_{kk'}:=\tilde{H}(k-k')-k\frac{2 \pi k}{P} $, then may be used in the rest of the theorem. 

{\bf No translation invariance --} Disordered systems are examples of systems that have no translation invariance or any generalized automorphism that may replace it. In order to treat such systems we must reformulate the framework of \cite{Doyon} from the beginning. We begin by introducing an alternative definition of the inner product. We being with the sesquilinear form, 
\begin{equation}
\ave{O,Q}^{loc}_\omega:=\omega(O^\dagger Q)-\omega(O^\dagger)\omega(Q), \qquad O,Q \in \mathfrak{U}_{loc}.
\end{equation}
Note that we do not sum over the sites in the first term. The form is degenerate, i.e. define $\mathcal{N}^{loc}_\omega:=\{\ave{O,O}^{loc}_\omega=0\}$. An example is $O=\one$. We define the quotient $H_\omega=\mathfrak{U}/\mathcal{N}^{loc}_\omega$ that we then Cauchy complete to a Hilbert space $\mathcal{H}_\omega^{loc}$. This is the standard GNS construction \cite{operator1}. 

Define pseudolocalized operator sequences that satisfy $\ave{A_\Lambda,A_\Lambda}_\omega^{loc} \le \gamma $ for some $\gamma >0$. As $A_\Lambda \in \mathfrak{U}_{loc}$ and $\mathfrak{U}_{loc}$ is dense in $\mathcal{H}_\omega^{loc}$ we may always find a pseudolocalized Cauchy sequence (wrt to $||||_{\mathcal{H}_\omega^{loc}}$) $j \mapsto A_j \in \mathfrak{U}$ that converges to $\A^{loc}_\omega(O)=\lim_{j \to \infty}\ave{A_jO}^{loc}_\omega$. The space of such pseudolocalized quantities $\mathfrak{U}^{loc}_\omega$ is in a trivial bijection with $\mathcal{H}_\omega^{loc}$ being its dual. We then may repeat the entire construction of \cite{Doyon}, as well as the present work. 

The only major differences are 1) that the \emph{localized} t-GGE (Theorem~\ref{theorem4}) with pseudolocalized quantities is defined for \emph{any} $\mu_k$ in all dimensions, 2) the Mazur equality is finite for strictly local operators without integrating over space and 3) pseudolocalized states (i.e. pseudolocal states defined via pseudolocalized quantities) are exponentially clustering and this property is preserved under the time evolution. 

Note that we can apply the above pseudolocalized construction to translationally invariant systems, but the construction is not in general useful and the one used in the previous sections is more powerful for such systems. This is because most systems will not have pseudolocalized dynamical symmetries. Two notable exceptions will be, as we shall see, lattice gauge theories and systems with disorder-free localization. 

Note that the local integrals of motion \cite{Serbyn} (or l-bits \cite{Huse}) of many-body localized models fall into the category of pseudolocalized dynamical symmetries. 

\section{Applications}
\label{applications}
The idea for exact solutions will be the following. One needs to identify the pseudolocal quantities the model poses, then one may straightforwardly construct the t-GGE giving the exact solution for the long-time dynamics. 
A practical outline of this procedure, given by the theory here in the previous section, for time-independent closed systems is the following:
\begin{enumerate}
    \item First, identify all potentially pseudolocal operators $A^V_u$ (wrt to a desired initial state $\rho_0$) that satisfy $[H_V,A^V_u]=\lambda_u A^V_u$ for a finite size Hamiltonian $H_V$ of size $V$. 
    \item Propose the solution as $\rho_{tGGE}(t)=\frac{1}{Z} \exp(\sum_u \mu_u e^{\ii \lambda_u t} A_u)$ where $Z=\tr(\exp(\sum_u \mu_u e^{\ii \lambda_u t} A_u))$ is the normalization (or time-dependent partition function) and where we have dropped the size $V$ superscript. If subsets $\{A_u\}$ have the same frequency $\lambda_u$ (i.e. in case of degeneracy), it may be necessary to orthonormalize them. The t-GGE is the correct solution to the long-time dynamics of local observables, as proven in the previous section, provided that one knows all the relevant $A_u$ and 1) that $A_u$ are pseudolocal wrt to the initial state $\rho_0$ and 2) that the state $\rho(t)$ has clustering. This needs to be separately checked as discussed below in steps \ref{pseudolocalcheck} and \ref{clustering}. 
    \item One needs to compute the chemical potentials $\mu_u$ that are fixed by the initial state. This is done by solving for $\mu_u$ the set of equations $\tr(\rho(0)A_u)=\tr(\rho_{tGGE}(0)A_u)$. This is in principle a highly complicated set of non-linear equations requiring numerical solutions, but analytical solutions can be found for wide classes of initial state $\rho_0$. For instance, those that are symmetric or antisymmetric wrt to some discrete symmetry $S$ whereas an $A_u$ is the opposite (antisymmetric or symmetric), e.g. $S\rho(0)S^\dagger=\pm \rho(0)$ and $S A_u S^\dagger=\mp A_u$, then $\mu_u=0$. Similarly, if $[\rho(0),A_u]=\kappa_u A_u$ the solution may be found with some transfer matrix approach.  We will utilize these approaches later in the examples. Likewise, one may perform a high/low $\mu_u$ expansions and then solve a reduced non-linear set.    
    \item  \label{pseudolocalcheck} Define the sequences for increasing $V$, $\tilde{A}^V_u=A^V_u-\ave{A^V_u}_t$ where $\ave{O}_t=\tr(\rho_{tGGE}(t)O)$ for the state at time $t$. We compute how the size of the operators grows with system size $\tr(\rho_{tGGE}(t)(\tilde{A}^V_u)^\dagger \tilde{A}^V_u)$. Due to Cor.~\ref{pseudolocaldyn}, it is sufficient to check only for one value of $t$ or in the initial state $\rho(0)$, which may be a simpler calculation analytically. One approach is: if the set $\{A^V_u\}$ in $\rho_{tGGE}(t)$ is in involution and is simple enough for $Z$ to admit, e.g. a transfer matrix form, then we can compute the relevant expectation values in a way that is standardly done for equilibrium partition functions $\frac{1}{Z}\frac{\partial^2}{\partial\mu_u \partial \mu_u^*} Z =\ave{(A^V_u)^\dagger A^V_u}_t$, etc. In case $\tr(\rho_{tGGE}(t)(\tilde{A}^V_u)^\dagger \tilde{A}^V_u)> C V$ ($C$ does not depend on $V$), then re-scale the sequence $\tilde{A}^V_u \to \tilde{A}^V_u/f(V)$ by some appropriate function $f(V)$ so that $\tr(\rho_{tGGE}(t)(\tilde{A}^V_u)^\dagger \tilde{A}^V_u) \le  C V$. In both cases one needs to check that $\lim_{V \to \infty}\ave{A^V_u O}$ exists for all local observables $O$ (this will likely be the case) and is non-zero for at least some local observables. Due to the re-scaling, overlap with all local observables can be zero and in that case the sequence $A^V_u$ does not correspond to any pseudolocal sequence and must be discarded from the ansatz $\rho_{tGGE}(t)$. It is sufficient to check this using the densities of $A^V_u$ as the local observables, by e.g. using the time-dependent partition function.
    \item \label{clustering} Clustering of $\rho_{tGGE}(t)$ needs to be also checked, i.e. for two local observables on sites $x$($y$) we must have $\lim_{||x-y|| \to \infty}\lim_{V \to \infty}\ave{O_x O_y}_t=\ave{O_x}_t \ave{O_y}_t$. This is again sufficient to do for the local densities of $A_u$ and may be done like in step~\ref{pseudolocalcheck} by computing $Z$. In case $\rho_{tGGE}(t)$ is \emph{not} clustered then it cannot be used as the correct ansatz. This signals formation of long-range order. In that case symmetry breaking of the $A_u$ must be considered as in equilibrium \cite{operator1,operator2}.  
    \item Finally, expectation values of local observables can be computed from the time-dependent partition function similarly to equilibrium, by e.g. adding a small field $\alpha$ corresponding to desired observable $O$, $Z \to Z(\alpha O)$ and then $\ave{O}_t=\frac{d Z(\alpha O)}{d \alpha}|_{\alpha=0}$. This may be done fully analytically provided that $O$ in some sense closes an algebra with the $A_u$ or can be again done in the low/high chemical potential expansion perturbatively in general.   
\end{enumerate}

Fortunately, in certain cases structures known from the previous literature can be used to construct the pseudolocal quantities, and in others they may be straightforwardly identified from the requirements in Cor.~\ref{pseudolocaldyn}. 
\subsection{Projected-local quantities: Quantum many-body scars and embedded Hamiltonians}

Quantum many-body scars \cite{scars,Tom1,Tom2} and embedded Hamiltonians \cite{embed,chandran2022quantum} are two different manifestations of the same underlying pseudolocal algebra as we will now see. First we need a definition. 

\begin{definition}[Projected-local quantity] A projected-local quantity is one satisfying the dynamical symmetry volume growth condition \eqref{volumegrowth} from Cor.~\ref{pseudolocaldyn} for some clustering initial states $\omega$, but not \emph{all} of them. In particular, it does \emph{not} satisfy it for the tracial state $\omega(O)=Tr(O)$, i.e. the infinite temperature state. 
\end{definition}

Specifically, let $A'_\Lambda=\sum_{x \in \Lambda} a_x$ be a pseudolocal sequence and let, 
\begin{equation}
P_\Lambda=\sum_{k,j} \ket{\psi_j}\bra{\psi_k},
\end{equation}
be a generalized projector to the eigenspaces of $H_\Lambda$, i.e.,
\begin{equation}
[H_\Lambda,P_\Lambda]=\nu P_\Lambda. 
\end{equation}
Assume that $A=P_\Lambda^\dagger A' P_\Lambda$ satisfies the condition for a pseudolocal dynamical symmetry from Cor.~\ref{pseudolocaldyn},
with a corresponding pseudolocal initial state with flow $\omega_s$ for which,
\begin{align}
&\omega_s(P_\Lambda O P_\Lambda)=\omega_s(O), \\
&\omega_s(Z(t) O)=\omega_s(O Z(t))=0,  \forall s \in[0,1], O \in \mathfrak{U}_{loc}. \label{scarsinitial}
\end{align}
The long-time dynamics is then given by a t-GGE according to Theorem~\ref{theorem3}. Moreover, the local form of the t-GGE from Theorem~\ref{theorem4} is the unique normal representation provided that,
\begin{equation}
P_\Lambda\rho_t^{tGGE}P_\Lambda=\rho_t^{tGGE}. \label{localproperty}
\end{equation}

{\bf Scarring phase transition --} for the sake of simplicity assume that $H_\Lambda$,$A_\Lambda$,$A^\dagger_\Lambda$ and $[A_\Lambda,A^\dagger_\Lambda]$ generate the $SU(2)$ algebra when acted on by $\omega_s$, i.e. $\omega_s([H_\Lambda,A_\Lambda]-\lambda A_\Lambda)=0$, etc. 
The the local t-GGE has $(A_0)_\Lambda=H_\Lambda$ ($\lambda_0=0$), $(A_1)_\Lambda=A'_\Lambda$ ($\lambda_1=\lambda$), $(A_2)_\Lambda=(A_1^\dagger)_\Lambda$ ($\lambda_2=-\lambda$), $(A_3)_\Lambda=[(A_1)_\Lambda,(A_2)_\Lambda]$ ($\lambda_3=0$). Eq.~\eqref{localproperty} is true only for certain values of $\mu_k$. For other values we cannot use the matrix representation because the state will not have clustering and hence will not be the valid pseudolocal state. 

This is intimately related to the initial state property \eqref{scarsinitial}. Indeed if we modify the flow $\omega_{s(v)}$ continuously such that for some critical $v_{*}$ value of $\omega_{s(v_{*})}$ the property \eqref{scarsinitial} does not hold, then projected-local quantities are no longer pseudolocal wrt to the initial state and are not to be included in the long-time t-GGE. 
For instance, we may vary the inverse temperature of a thermal initial state of another Hamiltonian $H'$ $\omega_{\beta(s)}$ such that for some value of $\beta(s)$ \eqref{scarsinitial} no longer holds. As this can variation of initial temperature can be done continuously and the property \eqref{scarsinitial} is discontinuous, i.e. it either does or does not hold, this demonstrates a novel kind of phase transition between ergodic behaviour and scarrred dynamics. This phase transition is to be contrasted with the standard thermodynamic phase transitions, which occur because the thermal state is no longer the valid representation above some inverse temperature $|\beta|>\beta_{*}$ so that it no longer has a certain symmetry. Here the eigenoperators themselves stop being local. This is consistent with the very recent numerical observation of the dynamical phase diagram of the PXP model \cite{BridgingScars}. Because of the property \eqref{localproperty} the scarring phase is stable to local perturbations as all the terms in the exponent of $\rho_{tGGE}$ are local operators for all values of $t$ and hence the same arguments as for thermodynamic phases can be applied \cite{operator1,operator2}.

In the rest of this subsection, for the sake of notation, we will drop the subscript $\Lambda$ and implicitly deal with the finite system case and its thermodynamic limit. 

{\bf Embedded Hamiltonians --} Let $P_x$ be a set of strictly local projectors, i.e. projectors with finite support $\Lambda_x \subset \Lambda$. Let also $[H',P_x]=0, \forall x$. Embedded Hamiltonians are defined as \cite{embed}, 
\begin{equation}
H=\sum_x P_x h^{0}_x P_x+H', 
\end{equation}
where $h^{0}_x$ is some local Hamiltonian density translated by $x$. 
Clearly, any operator for which $[H',A']=\lambda A'$ may be used to construct a simple projected-local quantity of the form $A=\sum_x (\one-P_x) a_x (\one-P_x)$, $Z(t)=0$. 

{\bf Restricted spectrum generating algebras --} Restricted spectrum generating algebras are formulation of quantum many-body scarred models (Supp of \cite{Buca},\cite{scarsdynsym2,SanjayReview}) for which,
\begin{align}
\begin{split}
&H \ket{\psi_0} = E_0 \ket{\psi_0}, \\
&[H, Q^+] \ket{\psi_k} = \lambda Q^+ \ket{\psi_k}, \forall n, (Q^+)^k \ket{\psi_0} \neq 0.
\end{split}
\end{align}
There are several equivalent formulations \cite{quasisymmetry,scarsdynsym6,scarsPnew} to this one. As discussed in Sec.~\ref{norelation} the existence of such a structure a priori does not imply anything for quantum many-body dynamics in the thermodynamic limit. However, if $Q^+$ is itself pseudolocal, then it clearly defines a restricted local quantity with $P=\sum_{k}\ket{k}\bra{k}$ and $Z(t)=0$. Numerous models studied in the literature satisfy this requirement, e.g. see \cite{SanjayReview} for a review. 

{\bf The PXP model --} Consider the original model of quantum many-body scarring \cite{scars,Olmos_2010}, the one-dimensional PXP model, 
\begin{equation}
H_{PXP} = \sum_x \frac{1}{4}(\one-\sigma^z_{x-1})\sigma^x_x(\one-\sigma^z_{x+1}),   
\end{equation}
where $\sigma^\alpha_x$ is the $\alpha=x,y,z$ Pauli matrix on site $x$. Curiously, this model does not have projected-local dynamical symmetries for all times, but it has them for finite times. To see what this means recall \cite{Tom1} (see also \cite{scarsdynsym1}) that,
\begin{align}
&[H_{PXP},S^{+}_\pi]=S^+_{\pi}+ O_{ZZZ},\\
&S^{+}_\pi=\frac{1}{2}\sum_x  (-1)^x \left[\sigma^z_x-\frac{\ii}{2}(\one-\sigma^z_{x-1})\sigma^y_x (\one-\sigma^z_{x+1})\right], \nonumber\\
&O_{ZZZ}=\sum_x(-1)^x\sigma^z_{x-1}\sigma^z_{x}\sigma^z_{x+1}. \nonumber
\end{align}
Clearly $\tau_t(S^+_\pi)=e^{\ii t}S^+_\pi+Z(t)$, but comparing with the volume growth \eqref{volumegrowth} from Cor.~\eqref{pseudolocaldyn} we get for $t>0$,
\begin{equation}
|\omega(Z^\dagger(t)(e^{\ii t}S^+_\pi+Z(t)))|\le \phi e^{|v| t}|\Lambda|, \label{PXPgrow}
\end{equation}
which we get from the Proof of Theorem~\ref{theorem1} (more specifically, the special case in \cite{Doyon}). In other words, the $\gamma$ in \eqref{volumegrowth} is time-dependent. This does not allow us to define $e^{\ii t}S^+_\pi+Z(t)$ as a pseudolocal dynamical symmetry for all $t$, but fixing some maximal time the conditions are still satisfied. Physically, this reflects the decay of oscillations of local observables \cite{scars}. Moreover, for special initial states, such as the Neel state, the growth in \eqref{PXPgrow} will be smaller than for other initial states. This confirms numerical results on weak ergodicity breaking \cite{scars}. 

Hence the PXP and the other quantum many-body scarred models come from different classes of models, but they both have the projected-local quantities as a common feature explaining physically relevant dynamics of local observables. The existence of such quantities should therefore be taken as defining quantum many-body scars. 

\subsection{Crypto-local quantities: Statistically localized integrals of motion and Hilbert space fragmentation}

In \cite{Fragmentation1,Fragmentation2,fragmentationSanjay,ZnidaricFrag,ArnabFrag2} fragmented models that do not poses (explicitly) quasi-local conservation laws, but do poses finite autocorrelation functions were introduced. This behaviour has been explained through statistically localized integrals of motion (SLIOM) in \cite{SLIOM} and, alternatively, commutant algebras in \cite{SanjayNew}. Using such algebraic structures it was possible to provide Mazur bounds on autocorrelation functions of local observables. However, it remained unclear whether these Mazur bounds saturate and the far-from-equilibrium dynamics of fragmented models remained inaccessible to analytical study. Using the theory developed in the previous section it is possible to give far-from-equilibrium dynamics in terms of the t-GGEs and show that the Mazur bound is saturated. In order to do so we must introduce new types of pseudolocal (pseudolocalized) quantities. 

\begin{definition}[Crypto-locality] Crypto-local quantities are those that satisfy the pseudolocality conditions from Cor.~\ref{pseudolocaldyn}, but cannot be written as manifestly translation invariant sums of local densities (not even with diverging quasi-local support). Likewise, crypto-localized quantities are those that meet the pseudolocalized conditions from Sec.~\ref{generalization}, but cannot be written as manifestly localized objects.
\end{definition}

We will study the prototypical one-dimensional $t-J_z$ model \cite{SanjayNew,SLIOM},
\begin{align}
&H_{t-J_z}=\sum_{x,\sigma\in\{\uparrow, \downarrow\}}{-t_{x, x+1} {\left(d_{x,\sigma} d^\dagger_{x+1,\sigma} + h.c.\right)}} \nonumber \\
&+\sum_{x,\sigma\in\{\uparrow, \downarrow\}} J^z_{x,x+1} S^z_x S^z_{x+1}\nonumber\\
&+ \sum_x{h_x S^z_x + g_x (S^z_x)^2,} \label{tJzmodel}
\end{align}
where $t_{x,x+1}$, $J^z_{x,x+1}$, $h_x$, $g_x$ are arbitrary, and
\begin{align}
&S^z_x=d^\dagger_{x, \uparrow}d_{x, \uparrow} - d^\dagger_{x,\downarrow}d_{x,\downarrow},\\
&d_{x,\sigma} = c_{x,\sigma} \left(1 - c^\dagger_{x,-\sigma} c_{x,-\sigma}\right), 
\end{align}
where $-\sigma:\uparrow(\downarrow) \to \downarrow(\uparrow)$ means taking opposite spin of $\sigma$, and $c^\dagger_{x, \sigma}$ and $c_{x, \sigma}$ are fermionic creation and annihilation operators on site $x$ with spin $\sigma$. 

Consider the "left" and "right" SLIOMs \cite{SLIOM},
\begin{equation}
\mathcal{Q}^{(l,r)}_k = \sum_{x = 1}^{L}\mathcal{P}^{(l,r)}_{k, x} (N^\uparrow_x - N^\downarrow_x), \label{eq:SLIOM}
\end{equation}
where $\mathcal{P}^{(l,r)}_{k, x}$ is the projector onto configurations where the $k$-th charge from the left (right) is on site $x$ and $N_x^\sigma=d^\dagger_{x,\sigma}d_{x,\sigma}$ (see \cite{SLIOM} for details). 

Using these we may construct cryptolocalized and cryptolocal quantities,
\begin{equation} \label{fragsequence}
A_{\vec{\alpha}}=L^\nu\sum_{k,j=l,r} \frac{\alpha^{j}_k}{\left(\sum_{k',j'=l,r}{\alpha^{j'}_{k'}}\right)} \frac{\mathcal{Q}^{(j)}_k}{\omega(\mathcal{Q}^{(j)}_k\mathcal{Q}^{(j)}_k)},
\end{equation}
where $\nu=0,1/4,1/2$. The $\nu=0$ case corresponds to cryptolocalized cases and the other two to cryptolocal. The reader may be surprised that the quantity growing as $\omega(A_{\vec{\alpha}}A_{\vec{\alpha}})\propto L^{1/2}$ is pseudolocal, but it is according to the general definition in Cor.~\ref{pseudolocaldyn}. Note also that we are free to "tune" the sequence \eqref{fragsequence} between a pseudolocal one (extensive) and pseudolocalized by changing $\nu$. If in doing so we promote a sequence that is pseudolocal to a pseudolocalized one, the corresponding pseudolocalized quantity will simply give vanishing functionals $\Aa^{loc}_\omega(O)=0$ for all $O \in \mathfrak{U}_{loc}$. Fragmented models are special because they have crypto-localized quantities wrt to the infinite temperature state, which is directly implied by the present work and by the finite values of the corresponding Mazur bounds identified previously (e.g. Eq. (11) of \cite{SLIOM}). Naturally, the reader may be concerned about the presence of infinitely long (non-local) strings $\mathcal{P}^{(l,r)}_{k, x}$ in the cryptolocal quantities and how their presence will affect the clustering of the corresponding $\rho_{tGGE}$. It will turn out, as discussed in the example below, that most of these strings will cancel and the remaining ones will be subextensive in number (thermodynamically irrelevant) for most initial states. Interestingly, they can be thermodynamically relevant for some initial states with clustering. The corresponding long-time limit will not therefore be a $\rho_{tGGE}(t)$ state containing cryptolocal charges. 

Furthermore these operator sequences may provide Mazur bounds (or equalities according to Cor.~\ref{Mazur}) and t-GGEs thus completing the picture of non-equilibrium dynamics for fragmented models. Other fragmented models (e.g. \cite{PabloNew}) may be treated analogously.

\subsection{Strictly localized quantities: Disorder-free localization and lattice gauge theories}
Now we deal with strictly localized quantities that should be contrasted from cryptolocalized cases associated with fragmentation. 

A prototypical model with disorder-free localization is one with spin-fermion coupling \cite{Dima},
\begin{equation}\label{eq: Hamiltonian}
H_{sf} = -J\sum_{x} \sigma^z_{x,x+1} c^\dag_{x} c_{x}
- h \sum_{x} \sigma^x_{x-1, x} \sigma^x_{x,x+1},
\end{equation}
where the $c_x$ ($c^\dagger_x$) are spinless fermion lowering (raising) operators acting on sites $x$ and $\sigma^\alpha_{x,x+1}$ are spin-1/2 Pauli matrices, as before, acting on the bonds between the sites. 

Very much related to disorder-free localization models are lattice gauge theories \cite{JadDis,LGT2,LGT3}, such as the simple $\mathbb{Z}_2$ lattice gauge theory \cite{LGT1},
\begin{equation}
    H_{\mathbb{Z}_2}=\sum_{x}J\left(a_x^\dagger\sigma^z_{x,x+1}a_{x+1}+h.c.\right)-h\sigma^x_{x,x+1},
\end{equation}
where $a_x$ ($a_x^\dagger$) are \emph{hard-core} bosonic annihilation (creation) operators with $n_x=a_x^\dagger a_x$ representing matter occupation on site $x$. 

Both types of models are characterized by full sets of strictly local (or pseudolocalized) symmetries, $G_x$, where $G_x \in \Lambda_x \subset \Lambda$, i.e. $G_x$ has finite support. 

For instance, the generator of the $\mathbb{Z}_2$ symmetry is
\begin{equation}
G^{\mathbb{Z}_2}_x=(-1)^{n_j}\sigma^x_{x-1,x}\sigma^x_{x,x+1}. 
\end{equation}

Understanding non-equilibrium dynamics of these models has attracted lots of interest recently. Using the theory developed here, exact solutions can be given in terms of t-GGEs given with pseudolocalized (or strictly localized) quantities generated by the corresponding symmetries of the models. More specifically, the full set consists of projectors to the eigenspaces of these generators. Similar holds for non-Abelian lattice gauge theories \cite{JadDis2} in which cases we need to be mindful that the generators close some algebra and then we may use the t-GGE solution. 

Analogous results hold for theories that have fragmentation due to strictly local quantities rather than crypto-localized ones \cite{strictlylocalfrag,LGT3,LGT1,ArnabFragmentation,LOCfrag1,LOCfrag4,LOCfrag5,LOCfrag6}, as well as pseudolocalized ones \cite{Stark1,Thivan,Henrik,Stephen}.

\subsection{Other cases}

Let us briefly discuss other cases for which ansatze similar to the t-GGE have been previously conjectured. The added benefit of the theory from the previous sections is giving the correct forms of the t-GGE and proving that these are the exact solutions. 

{\bf Discrete time crystals in closed systems --} Many-body localized models have been employed for several years for study of discrete time crystals \cite{Sacha_2015,VedikaLazarides,ElseTC}, i.e. many-body systems that display parametric down conversion in the sense of breaking the period of an external drive $T \to nT$. They have been conjectured to go into crypto-equilibrium states \cite{cryptoeq}, that maximize entropy. The present work proves this in the form of the t-GGE state. Moreover, the correct pseudolocalized dynamical symmetries are the l-bits identified in \cite{Floquetlbits1,Floquetlbits2}. 

{\bf Discrete and continuous dissipative time crystals --} Time crystals in locally interacting systems induced or stabilized by dissipation have been studied, both the discrete version (described above) \cite{Chinzei,Chinzei2,Subhajit,Jamir1,Jamir2,Sacha,LazaridesDissipation} and the continuous version for which the time-translation symmetry breaking occurs without any external time-dependent drive, in terms of dynamical symmetries \cite{Buca,BucaJaksch2019,Booker_2020,Carlos,Orazio,liang2020time,MeasurementTC,SymmetryInduced}. The present work shows that the correct form of the long-time limit is the t-GGE containing the dynamical symmetries. 

{\bf Continuous time crystals in isolated systems --} In \cite{Marko1} the t-GGE ansatz has been previously conjectured for the XXZ spin chain containing quasi-local dynamical symmetries. The present work shows that this is the exact solution to the long-time dynamics. 

{\bf Semi-local charges --} Very recently the notion of pseudolocality has been extended to include semi-local operators, i.e. operators whose densities commute with distant operators on one side only \cite{LenartSL1,Mauriziosemilocal} (see also \cite{LenartSL2,HadisehMulti,LenartFolded1,LenardFolded2}). These operators are sums of densities of the form $o^{sl}_x=\lim_{N \to \infty}\prod_{k=-N}^x \sigma^z_k o_x$, where $o_x \in \Lambda_x \subset \Lambda$ is local. Note that the operator contains a \emph{string} of e.g. Pauli $z$ operators. In \cite{LenartSL1} the algebra of quasilocal observables has been extended to include semi-local operators. Although this is perfectly correct, the work above shows that semi-local operators are indeed pseudolocal with respect to specific states. That is they fall into the projecte-local class (an initial state that does not see the Pauli string). 

\section{Examples}
\label{examples}

We will now study explicit examples from the previous section applying the procedure outlined there and go beyond existing techniques by studying general far-from-equilibrium quenches in cases where solutions where only available either from very special initial states for certain observables or near infinite temperature in the linear response regime.
We will compute the (finite frequency) time averaged expectation values of local observables,
\begin{equation}
\overline{\ave{O}}_{\lambda=\kappa}=\lim_{T \to \infty} \frac{1}{T} \int_0^T dt e^{-\ii \kappa t} \ave{O}_t,
\end{equation}
and if not otherwise written $\ave{O}$ will imply the $\lambda=0$ zero frequency case. 

\subsection{Spin-$1$ model with quantum many-body scars}

Here we will look into the scarred spin-$1$ model on a $D$-dimensional hypercubic lattice studied in \cite{Tom2},
\begin{equation}\label{H-XY}
H=\sum_{\langle xy\rangle} \left(S^x_x S^x_y+S^y_x S^y_y\right)+h\sum_x S^z_x+d\sum_x \left(S^{z}_x-2\right)^2,
\end{equation}
where ${\langle xy\rangle}$ means sum over nearest neighbors and $S_x^\alpha$ $(\alpha=x,y,z)$ are spin-$1$ operators on site $x$. The lattice number is $V=N^d$. As shown in \cite{Tom2} the model has quantum many-body scarred eigenstates,
\begin{equation}
\ket{n}= \mathcal{N} \left(J^+\right)^n \ket{-1}, \label{scarredstates}
\end{equation}
where $n=0,\dots,V$, $\ket{-1}$ is the fully polarized down state, and the normalization is $\mathcal{N}=\sqrt{\frac{(V-n)!}{n!V!}}$ and
\begin{equation}\label{eq:Jpm}
J^{\pm}= \frac{1}{2}\sum_x e^{\pm \ii \vec{x}\cdot \vec{\pi}} \left(S^{\pm }_x\right)^2,
\end{equation}
where $\vec{x}$ is the lattice site position vector and $\vec{\pi}$ is a vector of the same dimensions whose all components are $\pi$. The model also has a $U(1)$ symmetry $J^z=\sum_x S^z_x$. Define the projector to the scarred subspace $P=\sum_n \ket{n}\bra{n}$. It is not difficult to see that $[P,J^\alpha]=0$ and that $A_{\pm 1}=f(V) PJ^{\pm}$ fulfill the conditions for simple dynamical symmetries with $H$ from Def.~\ref{definition1}, i.e. $[H,A_{\pm 1}]=\pm 2 h A_{\pm 1}$ and where we anticipate that $f(V)$ is a system size dependent normalization that will be required. Thus, $\lambda_{\pm 1}=\pm 2 h$, $\lambda=2h$. We will consider a simple initial state as an example,
\begin{equation}
\rho(0)=\frac{1}{Z_0} \exp(\mu_0 J^x), \label{scarinitial}
\end{equation}
where $Z_0$ is the normalization. As shown in \cite{Tom2} exact solutions are possible in the case when $\mu_0 \to \pm \infty$ for certain observables. Here we will compute the general case. 
We begin with an ansatz for the t-GGE,
\begin{equation}
\rho_{tGGE}=\frac{1}{Z}\exp(-\beta H+\mu_z J^z+\mu e^{\ii \lambda t} A_1 +h.c.),
\end{equation}
where $Z(t)=\exp(-\beta H+\mu_z J^z+\mu e^{\ii \lambda t} A_1 +h.c.)$ is the time-dependent partition function. For the choice of initial state \eqref{scarinitial} we have $\mu_z=0$. Using the fact that $\rho(0)$ is spin-flip symmetric and $H(d=0)$ is spin-flip antisymmetric we immediately get,
\begin{equation}
\ave{H}_0=\tr(\rho(0)H)=d V \left(\frac{1}{3}-\frac{1}{2 \cosh (2 | \mu_0 | )+1}\right).
\end{equation}
We need to find for what $\mu_0$ the $A_{\pm 1}$ are pseudolocal, however, according to the procedure outlined in the beginning of Sec~\ref{applications}. In order to do so, without loss of generality, but for the purposes of easing orthonormalization we will set $d=0$.  Because $[P,J^\alpha]=0$ and $P^2=P$ we get that $\mu_0=\mu$. The partition function for the initial state can be immediately computed,
\begin{equation}
Z_0=(2 \cosh (\mu_0 )+1)^V.
\end{equation}
Likewise, as discussed in the Appendix, the partition function of the t-GGE can also be computed,
\begin{align}
&Z=\text{csch}(\mu  \cos (2 h t)) \sinh (\mu  (V+1) \cos (2 h t))-V \nonumber\\
&+\sinh (V \log (3))+\cosh (V \log (3))-1
\end{align}
As shown in the Appendix, it is straightforward to compute that in the initial state (note again that due to Cor.~\ref{pseudolocaldyn} it is sufficient to check pseudolocality for the initial state),
\begin{align}
&\ave{A_1A_{-1}}_0= \nonumber\\
&-\frac{f(V)^2}{4Z} \text{csch}^3(\mu_0 ) \left[2 \left(V^2+2 V-1\right) \sinh (\mu_0  [V+1]) \right. \nonumber\\
&\left.-(V+1) \left\{V \sinh (\mu_0  [V+3])+(V+2) \sinh (\mu_0 [V-1])\right\}\right], \\
&\ave{A_1}_0= \nonumber\\
&\frac{f(V)}{Z}\text{csch}^2(\mu_0 ) [V \sinh (\mu_0  (V+2))-(V+2) \sinh (\mu_0  V)],\\
&\ave{J^+_x}_0=\frac{2 \sinh (\mu_0 )}{2 \cosh (\mu_0 )+1}
\end{align}
Our method will be to fix $f(V)$ by demanding that $\lim_{V \to \infty}\ave{\tilde{A}_1 O_x}_0$ is not zero for at least some local $O_x$. The most convenient choice will be the density of $J^\pm$, i.e. $J^\pm_x=e^{\pm \vec{x}\cdot\vec{\pi}}(S^\pm_x)^2$ (because the $\tilde{A}_{\pm1}$ have overlap with it). We find that,
\begin{equation}
\ave{A_1 J^-_x}_0=\frac{1}{V}\ave{A_1A_{-1}}_0-\ave{A_1}_0\ave{J^-_x}_0,
\end{equation}
where we used the fact that the t-GGE and $J^\pm$ are Bloch translationally invariant with momentum $\pi$. Now we fix $f(V)$ by demanding that $\ave{A_1 J^-_x}_0=1$ (the actual value of the constant does not matter, only that it is finite) and compute $\ave{\tilde{A}_1\tilde{A}_{-1}}_0$. For $A_{\pm1}$ to be pseudolocal $\ave{\tilde{A}_1\tilde{A}_{-1}}_0  \le C V$, where $C$ does not depend on $V$. We find, by expanding in $1/V$, that for finite $\mu$ and large $V$, $\ave{\tilde{A}_1\tilde{A}_{-1}}_0 \approx e^{4 |\mu| +V (\log (\cosh (|\mu| )+1)-|\mu| )}$. For $\mu \to \pm \infty$ we find, on the other hand, that the $A_{\pm 1}$ are pseudolocal as $\ave{\tilde{A}_1\tilde{A}_{-1}}_0 =V$ for diverging $\mu$. Therefore, there is no scarring phase transition for the initial state chosen here for finite $\mu$ in the thermodynamic limit. However, as we will see there is similar behaviour to a phase transition for finite system size. Persistent oscillations (i.e. non-stationarity) in local observables is only present for diverging $\mu$. However, even though the theory given in this paper is strictly speaking for thermodynamically large systems, we may still gain insight into finite size behaviour of the models. First note that the growth of $\ave{\tilde{A}_1\tilde{A}_{-1}}_0/V$ can be almost negligible for a given $\mu$ up to some system size after which it grows quickly. To illustrate this we plot $\ave{\tilde{A}_1\tilde{A}_{-1}}_0/V$ in Fig.~\ref{fig:scarsgrowth}.

\begin{figure}[ht]
   \centering
    \includegraphics[width=\columnwidth]{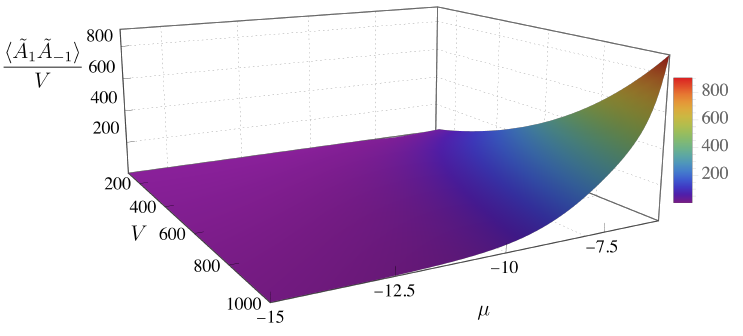}
    \caption{~\label{fig:scarsgrowth} The growth of $\ave{\tilde{A}_1\tilde{A}_{-1}}/V$ with $1/\mu$ and $V$ which can be quite slow indicating that oscillations can persist beyond the scarring phase for very large systems and times even for finite $\mu$. After some system size that depends on $\mu$, $\ave{\tilde{A}_1\tilde{A}_{-1}}/V$ grows exponentially with $V$ and the oscillations are no longer present after that system size. Note that results are the same for $\mu \to -\mu$.}
\end{figure}

It is known that translationally invariant systems with translationally invariant initial states can be expected to reach their asymptotic dynamics in times that $t_{relax}=\mathcal{O}(1)$ \cite{ETHReview}. This dynamics persists at least until finite size effects for local observables in the bulk start at times that are $t^*=\mathcal{O}(V)$. This is due to finite Lieb-Robinsons velocity (i.e. it takes at least time $V$ for quantum information to reach the end of the system and come back to the bulk before an observable there can see that the system is finite). Hence even a finite size system can be expected to be described by a t-GGE for $t_{relax} \ll t \ll t^{*}$. The growth of $\ave{\tilde{A}_1\tilde{A}_{-1}}_0/V$ essentially represents that chemical potentials in the t-GGEs must be rescaled in order for the expectation values of $A_{\pm1}$ to be equal in the initial state $\rho(0)$ and the $\rho_{tGGE}(0)$, i.e. $\mu=\frac{V}{e^{4 |\mu_0| +V (\log (\cosh (|\mu_0| )+1)-|\mu_0| )}} \mu_0$. For a given chemical potential $\mu_0$ this stays almost constant and then after reaching an almost critical system size $V$ decays abruptly and hence so does the contribution of $A_{\pm 1}$ to the expectation values of local observables at finite frequency $\lambda$. Interestingly this decay will not be at all visible for either numerical or experimental simulations up to some large system size $V^*$ and corresponding long time (cf. with the PXP discussions in the previous section). We will show this later explicitly for local observables.

 We may now calculate zero and finite frequency expectation values of local observables that have overlap with $H$ (e.g. $S^z_x$), straightforwardly, and, using the same techniques as before, those that commute with $J^x:=\frac{1}{2}(J^+J^-)$ (e.g. $J^x_x$),
\begin{align}
J^x_x(t)=\frac{1}{Z}\sum_{k=0}^{\lfloor \frac{V}{2} \rfloor} 2(V-k)\sinh[(V-k)\cos(\lambda t) \mu]
\end{align}
In Fig.~\ref{figurescars2} we illustrate finite frequency expectation values and show that we can reproduce the known exact solution for $\mu \to \pm \infty$ of \cite{Tom2}. The decay of the finite frequency amplitudes is doubly exponential with system size for finite $\mu$ (c.f. the growth of the pseudolocal dynamical symmetries in Fig.~\ref{fig:scarsgrowth}). Essentially, the system behaves as if it were in the scarred phase (with finite frequency amplitude close to 1) and then abruptly decays at some almost critical value of system size. Physically, this happens because for a given finite $\mu$ some of the initial state is not contained in the ground state of $J^x$ (which is inside the scarred subspace) and the proportion of the state that is not in the scarred subspace grows with $V$ which at some value is large enough to lead to exponential growth with system size of the (previously) pseudolocal dynamical symmetry. 
\begin{figure}[ht]
   \centering
    \includegraphics[width=0.7\columnwidth]{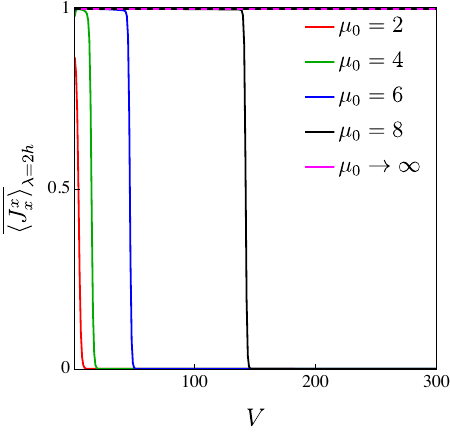}
    \caption{~\label{figurescars2} Finite frequency averages of $J^x_x$. For finite $\mu$ the results are valid at times $ 1 \ll t \ll V$. The magenta line at infinite $\mu_0$ agrees with the exact solution (dashed black line) from previous literature \cite{Tom2}. We see that the oscillation amplitudes display an almost discontinuous dependence on system size $V$ - they are constant up to some "critical" system size after which they decay abruptly to $0$. In other words, for a fixed system size, the system will be effectively in the scarred phase up to some value of the initial chemical potential after which the oscillations abruptly decay.}
\end{figure}

To compute the zero frequency values we will need the inverse temperature $\beta$ for $H$. This can be accomplished for small $\mu_0$ ($\beta$) by means of high temperature expansion (truncating to the second order) we obtain in the thermodynamic limit,
\begin{equation}
\beta=\frac{2 d \sinh ^2\left(\left| \mu _0\right| \right)}{\left(d^2+3 \left(h^2+4\right)\right) \left(2 \cosh \left(2 \left| \mu _0\right| \right)+1\right)}.
\end{equation}
We may now easily compute expectation values of observables that have overlap with $H$. This is done in Fig.~\ref{figurescars3} a) (b)) for $S^z_x$ ($S^y_xS^y_{x+1}$) as an example. The initial chemical potential is $\mu_0=0.1$. Note that the expecation values have a non-linear dependence on $h$ and $d$, which implies that there is preference towards antiferromagnetic ordering even close to infinite temperature in the system, i.e. the induced magnetic field in the system is not maximized by maximizing the external fields.  

\begin{figure}[ht]
   \centering
    \includegraphics[width=\columnwidth]{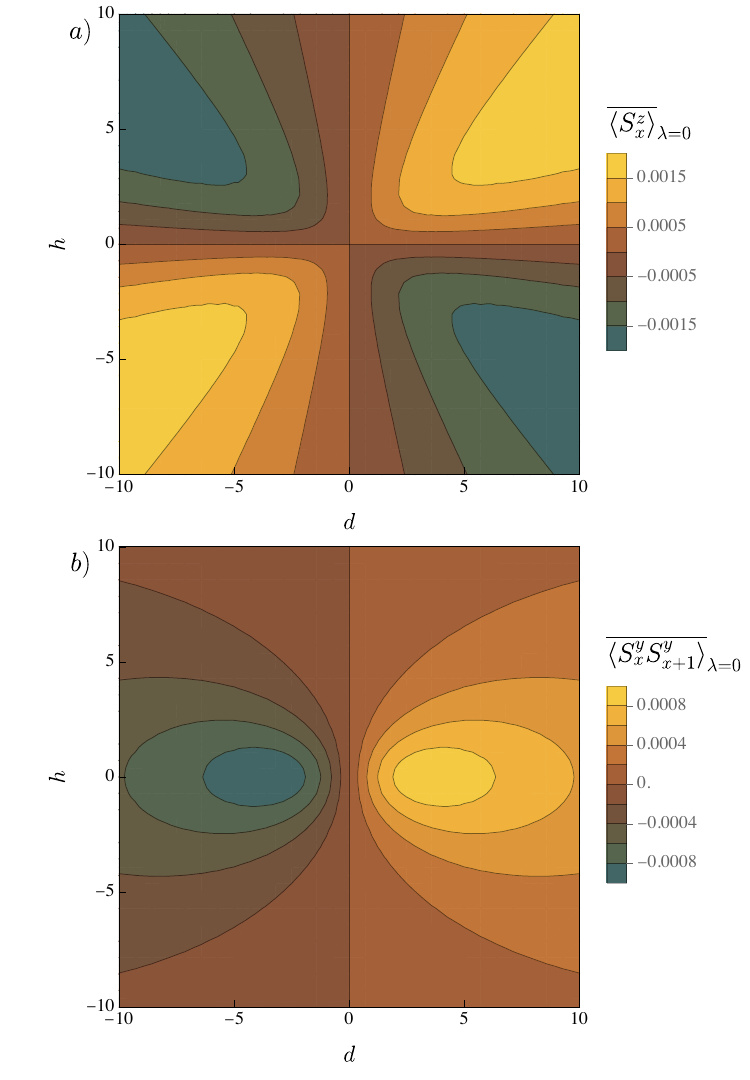}
    \caption{~\label{figurescars3} The time-averaged value of $S^z_x$ ($S^y_xS^y_{x+1}$) are given in subfigure a) (subfigure b)) as a function of the external fields $h,d$. We are close to infinite temperature because the chemical potential in the initial state is $\mu_0=0$. Hence, the values are small, but there is still a manifest non-linear dependence.}
\end{figure}

\subsection{$t-J_z$ model with fragmentation}

Previous approaches \cite{SLIOM,fragmentationSanjay} could only analytically treat the $t-J_z$ model in linear response and at infinite temperature. We will now compute a far-from-equilibrium quench case. 

As discussed in the previous Sec.~\ref{applications} the $t-J_z$ chain has cryptolocalized quantities that may be constructed from the SLIOMs. We will assume that those and the Hamiltonian are the only relevant pseudolocal quantities of the model. Remembering that the t-GGE is only the effective state governing the dynamics of local observables, we will, for the sake of simplicity, focus only on the left half of the chain and hence can only consider the left SLIOMs that we now call $A_k:=\mathcal{Q}^{(l)}_k$, for $k=1,\ldots$. Therefore, the conjectured t-GGE ansatz giving the long-time (equivalently, zero frequency) dynamics contains $A_0=H_{t-J_z}$, the total spin $A_{-1}=S^z=\sum_x S^z_x$ and the SLIOMs for $k=1,\ldots$. Naturally, $\mu_k=0$ for $\lambda_k \neq 0$ as there are no finite frequency pseudolocal dynamical symmetries. 
Let us write the basis for one-site as $\ket{\uparrow\downarrow,\uparrow,\downarrow,vac}$, where the arrows denote the spin of the fermions on that site. As an example let us take the following far-from-equilibrium initial product state,
\begin{align}
&\rho(0)= \nonumber\\
&\mathcal{N}\prod_{\otimes x=1}^{n}\ket{\alpha_{x},\beta_{x},\gamma_{x},0}\bra{\alpha_{x},\beta_{x},\beta_{x},0} \otimes \one_{N-n-1},
\end{align}
where we demand that $\gamma_x=\beta_x$ if $x$ is an even site and $\mathcal{N}$ is the normalization. The state is thus a general pure product state with singlets and doublons on sites $1\ldots n$ and an infinite temperature state (identity) on the rest of the sites. For the sake of convenience, let us set $h_x=g_x=0$ in \eqref{tJzmodel} (the external on-site field does not influence the physics significantly). The initial state $\rho(0)$ is parity antisymmetric wrt to spin-flip while $H_{t-J_z}$ is parity symmetric. Hence, $\ave{H_{t-J_z}}_0=\beta=\lambda_0=0$ and the t-GGE does not contain the Hamiltonian. Moreover, for any finite $n$ the expectation value of the extensive operator $S^z$ is finite, but the expectation value of $S^z$ will be extensive in the t-GGE for any finite $\mu_{-1}$, hence $\mu_{-1}=0$. Thus the t-GGE contains only the cryptolocalized quantities coming from the SLIOMs. The SLIOMs mutually commute and are diagonal in the particle number basis. As discussed in the Appendix, it is thus a matter of straightforward combinatorics to calcuate the partition function,
\begin{equation}
Z=2^N \left(1+\sum _{k=1}^N \binom{N}{k} \prod _{j=1}^k \cosh (\mu_j)\right). \label{tjparitionfunction}
\end{equation}
We may now show that the SLIOMs are pseudolocal by computing their norm wrt to the t-GGE. It is sufficient to check for large system size $N$,
\begin{equation}
\ave{\tilde{A_j}\tilde{A_j}}=\frac{1}{Z}\frac{d^2 }{d \mu_j^2}Z-\left(\frac{1}{Z}\frac{d }{d \mu_j}Z\right)^2,
\end{equation}
which for sufficiently large $N$ is $\ave{\tilde{A_j}\tilde{A_j}} = C_j +\mathcal{O}(1/N)$, where $0<C_j\le 1$ are constants independent of system size. 

By introducing small fields in the t-GGE, we may calculate expectation values of local observables $O_x$ in the t-GGE,
\begin{equation}
Z(\alpha_1,\alpha_2)=\tr(\sum_j \mu_j A_j +\alpha_1 O_x+\alpha_2 O_y),
\end{equation}
and hence,
\begin{align}
&\ave{O_x}=\frac{d}{d\alpha_1} Z(\alpha_1,\alpha_2)|_{\alpha_{1,2}=0} \\
&\ave{O_x O_y}=\frac{\partial^2}{\partial\alpha_1\partial\alpha_2} Z(\alpha)|_{\alpha_{1,2}=0} 
\end{align}
For example, for the diagonal and commuting observables $Z_x:=N^\uparrow_x-N^\downarrow_x$ and $N_x=N^\uparrow_x+N^\downarrow_x-2$ we have,
\begin{align}
&\ave{Z_1}=\frac{2^N}{Z} \sinh (\mu_1) \left(1+\sum _{k=2}^N \binom{N-1}{k-1} \prod _{j=2}^k \cosh (\mu_j)\right) \\
&\ave{N_x}=\frac{2^N}{Z} \left(1+\sum _{k=1}^N \binom{N-1}{k} \prod _{j=1}^k \cosh (\mu_j)\right) \\
&\ave{N_x N_y}=2 \ave{N_x}
\end{align}
In order to finish proof that the SLIOMs are pseudolocalized, we must compute their overlap with local observables and show that it is non-vanishing for at least some. It is sufficient to consider $\ave{Z_1\tilde{A_j}}$ for large enough $N$. It is not difficult to verify that this is finite $\ave{Z_1\tilde{A_j}}$ for finite $j$. Hence, the SLIOMs define crypto-localized quantities and can potentially go into the t-GGE. However, the SLIOMs have strings of operators of the type $\prod_{y=1}^x (N_y+2)Z_x$. But, as is discussed in the Appendix, these strings can contribute sub-extensively to the connected correlator and hence are thermodynamically irrelevant.

Finally, we need to show clustering of the t-GGE state itself. We find that the state is not clustered for all $\mu_k$, even though the SLIOM quantities are pseudolocalized for all $\mu_k$. This signals a fragmentation phase transition for which the pseudolocal quantities are always the same, but the state acquires long-range order and cannot be represented as a matrix exponential. This is similar to thermodynamic phase transitions \cite{operator1}, and should be contrasted to the scarring phase transition above for which the pseudolocal quantities themselves stopped being pseudolocal for certain values of the chemical potentials. As the non-local strings only have overlap with $N_x$ it is sufficient to check the long-range connected correlator for $|\ave{N_xN_y}-\ave{N_x}\ave{N_y}|$. 
We illustrate this with an example by parameterizing the chemical potentials as $\prod _{j=1}^k \cosh (\mu_j)\to x^{a k}+1$, where, for the purposes of keeping the result valid for the simpler case of $\mu_{-1}=0$, the chemical potentials $\mu_j$ should have a cutoff such that $\mu_j=0$ for some very large $j>\kappa$, but with $\kappa$ still being much smaller than $N$ when taking the thermodynamic limit. We then find that (as may be verified by means of e.g. computer algebra),
\begin{align}
&\lim_{N \to \infty}|\ave{N_xN_y}-\ave{N_x}\ave{N_y}|=\nonumber\\
&\begin{cases}
    \frac{\left(x^a-1\right) \left(3 x^a+1\right)}{4 \left(x^a+1\right)^2},& \text{if } \log \left(x^a+1\right)>\log (2)\\
    0,              & \text{otherwise}.
\end{cases}
\end{align}

Note that, curiously, unlike thermodynamic phase transition the order is either completely non-local (the same for all $x,y$) or completely absent. This proves the fragmentation phase transition in the model and the phase diagram is given in Fig.~\ref{figureFRAG} a).
\begin{figure}[ht]
   \centering
    \includegraphics[width=\columnwidth]{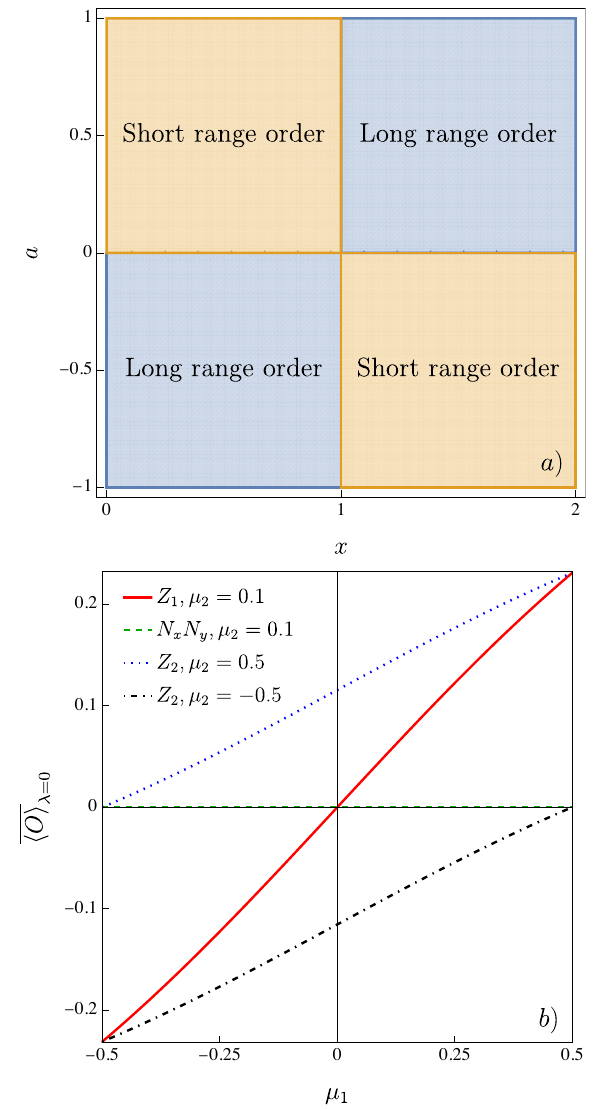}
    \caption{~\label{figureFRAG} a) The fragmentation phase diagram showing the phases where the strings are thermodynamically irrelevant (short-range order) and where they are not. b) Time average of certain local observables for various values of the initial state parameter (see \eqref{chemicalinitial} for the initial state parameters). }
\end{figure}

Finally, of course we must relate the chemical potentials to the initial state. This may be done for arbitrary choices of the initial state parameters for large system sizes numerically, but in order to give closed form expressions, we consider the case when $|\beta_x|^2-|\gamma_x|^2$ is small. In particular for, e.g. $\alpha_x=\beta_x=\gamma_x=0$ for $x>3$, we have that,
\begin{align}
&\mu_1=\frac{| \alpha_1| ^2 \left(| \beta_2| ^2-| \gamma_2| ^2\right)}{\left(| \alpha_1| ^2+2 | \beta_1| ^2\right) \left(| \alpha_2| ^2+| \beta_2| ^2+| \gamma_2| ^2\right)} \nonumber \\
&\mu_2=\frac{2 | \beta_1| ^2}{| \alpha_1| ^2} \mu_1, \label{chemicalinitial}
\end{align}
and $\mu_j=0$ for $j >2$. Thus the time-averages of local observables are quite complicated functions of the initial state parameters even for this simple product state. We illustrate some of them in Fig.~\ref{figureFRAG} b).

\section{Conclusion}
\label{conclusion}
The main goal of non-equilibrium quantum many-body theoretical physics is computing the dynamics of systems out-of-equilibrium. Locality is what crucially unifies dynamical properties of quantum many-body systems providing a framework applicable to isolated, driven and dissipative quantum many-body systems. The theory presented here allows for exact solutions of quantum many-body dynamics for all locally interacting systems with finite local degrees of freedom on hypercubic lattices of arbitrary dimensions. This constitutes a very wide class of quantum many-body systems and includes paradigmatic models such as spin models and fermionic lattice models. The theory provides the solution in terms of a time-dependent generalized Gibbs ensemble (t-GGE) and does not rely on integrability. 

The basic "recipe" is the following. If one finds evidence of non-ergodicity in a quantum many-body system (either at finite or zero frequency), the present work shows that it must be due to pseudolocal dynamical symmetries. Provided one can then identify these pseudolocal dynamical symmetries, one may find the solution to the dynamics of local observables immediately as a t-GGE. The chemical potentials in the t-GGE are set by the initial states. Conversely, proving absence of any such symmetry (apart from e.g. the Hamiltonian or some $U(1)$ charge) immediately proves ergodicity at zero frequency. The theory thus proves both weak eigenstate thermalization hypothesis in dynamical form and saturation of the Mazur bound. 

More generally, the theory is an important step towards solving the main goal of computing non-equilibrium quantum many-body dynamics because it does so for wide classes of locally interacting systems. In the future it can be applied to all such wide classes of systems and has the potential to provide analytical solutions were there were previously none. 

\subsection{Open problems}
The work presented here opens numerous possible research directions. I list only a few below.  
\begin{itemize}
\item {\bf New forms of non-ergodicity--} The complete theory presented here allows not only for study of known forms of non-ergodicity, it can help in classifying and generating models with novel types of non-ergodicity. For instance, Hilbert space fragmentation is identified with the existence of crypto-local conservation laws in such models and quantum many-body scars with projected-local dynamical symmetries. Can one have a model with quantities that are of both types, i.e. a projected crypto-local dynamical symmetry? This would imply a fragmented scar, i.e. dynamics that has local oscillating memory for certain initial states. 

\item {\bf Constructing transient dynamical symmetries --} Transient dynamical symmetries, identified here, may play an important role in the type of transport a system has, e.g. diffusive, super-diffusive, etc. To the best of my knowledge transient dynamical symmetries dictating the finite time dynamics have not been identified. However, superficially similar structures are known. For instance, in few-body bosonic models one may have quasi-normal modes (e.g. \cite{CatQUBIT}), i.e. metastable decaying eigenmodes of the Hamiltonian with complex energy (that is not self-adjoint due to being unbounded) \cite{spectraltheory}. One may attempt to find transient dynamical symmetries by adapting the procedure of \cite{Prosennumerical,Thivan,SarkarBuca} to imaginary frequency. Likewise, in \cite{Zala} and \cite{Fagotti_2014} prethermalization was studied by having a prethermal Hamiltonian as a transient conservation law. This could be a starting point for a theory of prethermalization based on transient dynamical symmetries. 

\item {\bf Scarring phase transitions --} The existence of a novel phase transition between weak ergodicity breaking (non-stationary dynamics) and ergodicity has been proven here. It is distinct from thermodynamic phase transitions because it happens because projected-local operators stop being pseudolocal when one smoothly various the chemical potential rather than being a discontinuity in the equilibrium state itself. What is the nature of this phase transition in terms of e.g. universality classes? 

 \item {\bf Long-range order in the fragmentation phase transitions --} The fragmentation phase transition introduced here and shown for the $t-J_z$ chain has only been studied in the short-range correlated phase. Studying the long-range correlated phase will require going beyond the matrix representation of the t-GGE and will entail generalizing techniques that are used for symmetry breaking in equilibrium \cite{operator1}, as it remains unclear how to treat the crypto-localized quantities responsible for the fragmentation phase transition due to their non-local strings. 

\item {\bf Generalized hydrodynamics for scars and fragmentation --} Identifying scars and fragmentation in terms of pseudolocal quantities opens the possibility of constructing a generalized hydrodynamics theory for the integrable forms of these models. For instance, recent work of \cite{Doyon2,Doyon3} using dynamical symmetries for hydrodynamics could be combined here with the integrable limit of a constrained scarred Hubbard model \cite{scarsdynsym2}. 

\item {\bf Implications for quantum information processing --} Many quantum algorithms can understood as local many body dynamics acting on a system (i.e. a collection of qubits). For instance,  the quantum Fourier transform is a local algorithm with the end result of the computation being stored locally. The present work fully classifies the long-time limit of such systems. Can it be used to identify possible quantum error correction algorithms? It could also be conceivably used to strengthen the quantum threshold theorem \cite{threshold}. Indeed, locality in the form of the Lieb-Robinson bound (crucial for the theory here) has been recently employed to study error correction and entanglement generation \cite{LocalityEnt,LocalityEnt2}. 

\item {\bf t-GGEs for deep thermalization --} As the theory here applies to dissipative systems and the long-time limit is given in terms of t-GGEs, they could offer an alternative way to study recently introduced \emph{deep thermalization} i.e. thermalization induced by projective measurements \cite{DeepTherm1,DeepTherm2,Claeys2022emergentquantum,Claeys2023universalityin}. What is the connection between t-GGEs in that case and the recently introduced deep GGE? 

\item {\bf Including unbounded densities --} The theory presented here works when the local degrees of freedom are bounded. What happens for e.g. bosons on a lattice, for which the densities can be infinite? Lieb-Robinson bounds hold for these systems, too \cite{LiebRobinsonBosons} and hence, one may conceivably upgrade the theory from this paper to account for them. 

\item {\bf Long-range interactions --} The theory here crucially relies on Lieb-Robinson bounds for local dynamics. These bounds have been extended for long-range interacting systems \cite{Matsuta_2016}. Could these bounds be used to define dynamically relevant long-range quantities instead of pseudolocal ones? Alternatively, the above discussed connections with lattice gauge theories offers another possible way to treat long-range interactions (e.g. \cite{Passarelli_2022}) based on the theory in this paper - one may introduce unphysical gauge degrees of freedom in a local model and then, assuming that they are very fast, adiabatically eliminate them \cite{paulsen_2003} reducing the local problem (which is treatable) to a long-range model (which one wants to study). This would allow for treatment of non-ergodic dynamics in long-range models (e.g. \cite{Esslinger,Esslinger2,dissipativeTCobs,Fazio,PTDisTC,Lesanovsky,Federico1,Federico2,Esslinger3,Francois1,Francois2,Francois3})

\item {\bf Pseudolocalized quantitites for proving many-body localization --} Many-body localization (MBL) in disordered systems has been proven under certain assumption on the spectrum of these systems \cite{Imbrie}. It is curious, as shown in this paper, that if one drops any notion of translational invariance (i.e. including disorder), then the many-body dynamics must be based on pseudolocalized quantities. Of course, these quantities contain precisely the l-bits of MBL \cite{PapicMBL}. Can this approach be formalized and used to prove MBL without any assumptions? 

\item {\bf Towards a non-equilibrium Landau theory --} The t-GGE introduced here is a time-dependent version of the Gibbs ensemble. Can this similarity be exploited to formulate a non-equilibrium Landau theory, and the corresponding free energies, for strongly interacting systems out-of-equilibrium? 

\item {\bf Entropy oscillations --} The present work deals purely with the dynamics of local observables.  Can the same framework be upgraded to study entanglement entropy dynamics and other quantities \cite{LOCfrag2,strictlylocalfrag,Olalla1,Olalla2,JamirDPT,Kormos_2016}? 

\item {\bf Quantum hydrodynamics --} The transient pseudolocal quantities could be a starting point for a rigorous framework of quantum hydrodynamics beyond integrable models. Once these quantities are identified the kinds of transport they imply should follow immediately. 
\end{itemize}

\begin{acknowledgments}
I thank W. De Roeck for in depth discussions and useful remarks and V. Juki\'{c} Bu\v{c}a for assistance with Fig. 1 and 2. I am grateful to T. Iadecola, H. Katsura,  H. Moriya for useful feedback on the manuscript.  This work was supported by a research grant (42085) from VILLUM FONDEN, by the EPSRC programme grant EP/P009565/1, and the EPSRC National Quantum Technology Hub in Networked Quantum Information Technology (EP/M013243/1).  
\end{acknowledgments}

\appendix
\section{Proofs}
Here I give proofs of the statements in the main text. 
\begin{proof}[Proof of Theorem~\ref{theorem1}]
We begin by noting that the Lieb-Robinson locality relation \eqref{locality} can be weakened. Namely for some $\mu>0$ and $\phi>0$,
\begin{equation}
||\tau_t(O)-(\tau_t(O))_\Lambda||\le \phi |O|||O|| \exp(-\mu \Delta+v |t|), \label{weakerlocality1}
\end{equation}
which follows directly from faster exponential growth than polynomial growth and $|O|$ is the size of the support of the operator $O$.  

Moreover, we can use the contractivity $||\tau_t(O)|| \le ||O||$ to get that
\begin{equation}
||(\tau_t(O))_\Lambda||\le (1+\phi |O| \exp(v|t|))||A||, \label{weakerlocality2}
\end{equation}
By pseudolocality of $\omega$ it is $p-$clustering \cite{Doyon}, i.e. there exist $\nu,a>0$ such that for every $\ell$, 
\begin{equation}
|(O,Q)_\omega|:=|\omega(O,Q)-\omega(O)\omega(Q)|\le\frac{\nu\ell^a||O||||Q||}{{\rm dist}(O,Q)^{p}},
\end{equation}
for some $p>D$. Likewise, the same holds for the flow $\omega_s$ for the same parameters $p,\nu,\ell,a$. Using this, \eqref{weakerlocality1} and \eqref{weakerlocality2} we can proceed along the same lines as the Proof of Theorem 6.3 of \cite{Doyon} to conclude that there exist some $a_1,\nu_1$ such that,
\begin{equation}
|(\tau_t(O)_\Lambda,Q)_\omega| \le \frac{\nu_1\ell^{a_1}||O||||Q||}{{\rm dist}(O,Q)^{q}}, 
\end{equation}
 and some some $a_2,\nu_2$
\begin{equation}
|(\tau_t(O)_\Lambda,\tau_t(Q)_{\Lambda'})_\omega| \le \frac{\nu_2\ell^{a_2}||O||||Q||}{{\rm dist}(O,Q)^{q}},
\end{equation}
$\forall q<p$. However, this by itself is not enough to show that $
\omega_t$ is $q$-clustering because the map is in general dissipative, 
\begin{equation}
\tau_t(O^\dagger Q) \neq \tau_t(O^\dagger)\tau_t(Q). \label{dissipative} 
\end{equation}
However, we may use another result based on the Lieb-Robinson bound obtained in Corollary 1 of \cite{KastoryanoEisert} again in weaker form. Namely, there exist some $C'>0$ and $\mu'>0$ such that, 
\begin{equation}
||\tau_t(O^\dagger Q)-\tau_t(O^\dagger)\tau_t(Q)||\le C' ||O||||Q||e^{v|t|-\mu'{\rm dist}(O,Q)}. \label{boundKE}
\end{equation}
We also have, 
\begin{equation}
(O,Q)_{\omega\circ\tau_t}=\omega(\tau_t(OQ)-\tau_t(O)\tau_t(Q))-(\tau_t(O),\tau_t(Q))_\omega. \label{relation}
\end{equation}
Therefore, 
\begin{equation}
|(O,Q)_{\omega\circ\tau_t}|\le ||\tau_t(OQ)-\tau_t(O)\tau_t(Q)||+|(\tau_t(O),\tau_t(Q))_\omega|.
\end{equation}
Clearly we can bound the decaying exponential in \eqref{boundKE} by some (time-dependent constant $C''$) for any finite $q>0$,
\begin{equation}
||\tau_t(OQ)-\tau_t(O)\tau_t(Q)||\le \frac{C''\ell^{a_2}||O||||Q||}{{\rm dist}(O,Q)^{q}}.
\end{equation}
From this the q-clustering clustering of $|(O,Q)_{\omega\circ\tau_t}|$ immediately follows. This proves the first point of the theorem. 
The second follows from the Proof of Theorem 6.5 of \cite{Doyon} if we observe that, 
\begin{equation}
||O||_{\mathcal{H}_{\omega \circ \tau_t}}^2 =(O^\dagger,O)_{\omega\circ\tau_t}\ge |(\tau_t(O^\dagger),\tau_t(O))_\omega|-2 ||O||^2,
\end{equation}
which we find from the dissipative property of the map and \eqref{relation}. 
\end{proof}

\begin{proof}[Proof of Theorem~\ref{theorem2}]
From the Theorem~\ref{theorem1} the state is of the form, 
\begin{equation}
\omega_{s,t}(O)=\omega_{0,t}(O)+\int_0^s du \Aa_{u,t}(O), \forall t. \label{pseudolocaltime}
\end{equation}
Hence we may write using the equations of motion \eqref{eqmotion}, 
\begin{equation}
\frac{d}{dt}\omega_{s,t}(O)=\int_0^s du \Aa_{u,t}(\mathcal{L}(O))=\int_0^s du \frac{d}{dt}\Aa_{u,t}(O), 
\end{equation}
where we used $\omega_0(\mathcal{L}(O))=0$ for the tracial state $\omega_{0,t}$. This implies, 
\begin{equation}
\int_I du \left(\Aa_{u,t}(O)-\frac{d}{dt}\Aa_{u,t}(\mathcal{L}(O))\right)=0, \label{eqmotionpseudo1}
\end{equation}
for every open interval $I \subset [0,1]$. By similar arguments as in the proof of Theorem 6.6 of \cite{Doyon} we conclude that, 
\begin{equation}
\frac{d}{dt}\Aa_{u,t}(O)=\Aa_{u,t}(\mathcal{L}(O)). \label{eqmotionpseudo}
\end{equation}
for almost all $u$. 

Consider now the map proven in Theorem~\ref{theorem1} $\tau_t:\mathfrak{U}_{loc} \to \mathcal{H}_u$ to be bounded. Using a straightforward generalization of Appendix of Doyon we find that $\lim_{t \to 0^+}||\tau_t O-\one O||_{\mathcal{H}_u}=0$, hence $\tau_t$ is strongly continuous. Moreover, as the proof Theorem~\ref{theorem1} works if we replace $t \to e^{o\ii \phi} t$ and, using Theorem 4.6 ((a) and (b)) of \cite{booksemigroup}, $\tau_t$ it is also analytic. Define the dual map $\tau_t^\oplus:\mathcal{H}^\dagger_u \to \mathcal{U}^\dagger_{loc}$. Hence $\lim_{t \to 0^+}||\tau^\oplus_t \Aa_u-\Aa_u||_{\mathfrak{U}_{loc}} = \lim_{t \to 0^+}||\tau_t O-O||_{\mathcal{H}_u}$ by a well-known result for bounded operators \cite{spectraltheory}. As $\mathfrak{U}_{loc}$ generates a (dense) subset of $\mathcal{H}_{u,t}$ we may "dilate" $\tau_t^\oplus$ to the operator $\mathfrak{T}'_t:=(\tau^\oplus_t)_{u,t}:\mathcal{H}^\dagger_{u} \to \mathcal{H}^\dagger_{u,t}$ which is also bounded and strongly continuous. By construction  $\mathfrak{T}'_t$ is a strongly continuous semigroup that solves the Cauchy problem \eqref{eqmotionpseudo} in the dual form,
\begin{equation}
\frac{d}{dt}\Aa_{u,t}=\mathfrak{L'}(\Aa_{u,t})=\Aa_{u,t} \circ \mathcal{L}. 
\end{equation}
The corresponding generator $\mathfrak{L'}$ is densely defined and closed by the Hille-Yosida theorem \cite{booksemigroup}. Moreover, by Proposition 1.4 of \cite{booksemigroup} there exists and $M$ $\mathfrak{T}_t=e^{-M t}\mathfrak{T}'_t:=e^{\mathfrak{L}t}$ is contracting and the claim about the spectral resolution also follows from the Hille-Yosida theorem. Likewise, an application of the Hille-Yosida theorem (in particular Proposition 2.2 of \cite{booksemigroup}) shows that $\Re(\sigma(\mathfrak{L}))\le0$. 

\end{proof}

\begin{proof}[Proof of Theorem~\ref{theorem3}]
From the equations of motion \eqref{eqmotion} by partial integration we immediately get,
\begin{equation}
 \int_0^T dt e^{\ii \lambda t} \omega_{s,t}(\LL(O))= e^{\ii \lambda t}\left.\omega_{s,t}(O)\right\vert_0^T+\ii\lambda\int_0^T dt e^{\ii \lambda t} \omega_{s,t}(O),
\end{equation}
and taking the $T \to \infty$ limit we immediately get the first statement of the theorem as $\omega_{s,t}(O)$ is bounded.

To proceed let us recall two useful definitions from the literature. First the integrated dissipation function \cite{Lindblad,Evans},
\begin{equation}
D_t(A,B):=\tau_t(A^\dagger B)-\tau_t(A^\dagger)\tau_t(B),
\end{equation}
which is sesquilinear and $D_t(A,A) \ge 0$ \cite{Evans}.
Second, the decoherence-free sub-algebra $\mathcal{N}$ \cite{Fagnola1,Frigerio}, which in our $C^*$-algebra case is,
\begin{equation}
\mathcal{N}:=\{O \in \mathfrak{U}_{loc} | D_t(O,O)=0\}.
\end{equation}
Our first step is to generalize a theorem by Frigerio \cite{Frigerio} for fixed points to long-time dynamics. Consider the faithful stationary state $\omega$ of $\tau_t$, i.e. $\omega \circ \tau_t =\omega$. We have (Theorem 3.1 of \cite{Frigerio}),
\begin{equation}
\lim_{s \to \infty} \omega(D_t(\tau_s(O),Q))=0, \qquad \forall O,Q \in \mathfrak{U},
\end{equation}
including $Q=\tau_s(O)$. 
By the Cauchy-Schwarz inequality and using the fact that $D_t$ is sesquilinear and $\omega$ faithful ($\omega(O^\dagger O)=0$ iff $O=0$), i.e. positive, we have,
\begin{equation}
w^*-\lim_{t \to \infty} \tau_t(O) \in \mathcal{N}. \label{limit}
\end{equation}
 Theorem 3.2 by Dhahri and Fagnola \cite{Fagnola1} that says,
\begin{equation}
\tau_t(O)=e^{\ii H t}O e^{-\ii H t}, \qquad \forall O \in \mathcal{N}, \label{Nevo}
\end{equation}
applies in our case because we assumed that $L_x(\eta), \forall x,\eta$ is bounded in the $C^*$-algebra norm and $\tau_t$ is strongly continuous \cite{NachtergaeleOpen}. 

Without loss of generality assume $\lambda \neq 0$ and take a sequence $C=\frac{2\pi n}{\lambda}$,
\begin{align*}
& \frac{1}{T} \int_0^T dt e^{\ii \lambda t} \frac{d}{dt}\omega_{s}(\tau_t(O))=\\
&\frac{1}{T} \int_0^C dt e^{\ii \lambda t} \frac{d}{dt}\omega_{s}(\tau_t(O))+\frac{1}{T} \int_0^{T-C} dt e^{\ii \lambda t} \frac{d}{dt}\omega_{s}  (\tau_{t+C}(O)).
\end{align*}
 As $\omega(\LL(O))$ is bounded, the first term on the r.h.s goes to 0 as $T \to \infty$. The second term may be estimated by \eqref{limit} and \eqref{Nevo} and these say that $\forall \varepsilon>0$, there exists a $n$ such that for every $t>0$,
\begin{equation}
\left|\frac{d}{dt}\omega_{s}(\tau_{t+C}(O))-\ii \omega_{s,t}([H,O])\right|<\varepsilon.
\end{equation}
 Hence, as $\omega_{s,t}$ are bounded, we get using Lebesgue's dominated convergence theorem that, 
 \begin{align*}
&\lim_{T \to \infty}\frac{1}{T} \int_0^T dt e^{\ii \lambda t} \frac{d}{dt}\omega_{s}(\tau_t(O))=\\
&\ii \lim_{T \to \infty}\frac{1}{T} \int_0^T dt e^{\ii \lambda t} \omega_{s,t}([H,O]).
 \end{align*}
 Using this and Theorem 3.3 of \cite{Fagnola1} that says $[L_x(\eta),O]=[L^\dagger_x(\eta),O]=0,$ $\forall O \in \mathcal{N}$ we get the claim in the second case. 
 
We have as before \eqref{eqmotionpseudo} for the pseudolocal quantities for almost all $u$ and,
\begin{equation*}
\omega_{s,\lambda}=\delta_{\lambda,0}\omega_0(A)+\int_0^s du \Aa_{u,\lambda}(O)
\end{equation*}
where we defined Fourier transform $\Aa_{u,\lambda}(O):=\lim_{T \to \infty}\frac{1}{T}\int^T_0 e^{\ii \lambda t} \Aa_{u,t}(O)$, and where we interchanged the order of integration which we can do according to Fubini's theorem because of continuity of the time evolution, i.e. $\frac{d}{dt}\omega_{s,t}(O)=\omega_{s,t}(\LL(O))$, boundedness of the linear functional $\Aa_{u,t}(O)$ and the fact that the functions are Lebesgue integrable (Definition 5.4 of \cite{Doyon}). Thus we arrive to the final statement. 
\end{proof}

\begin{proof}[Proof of Theorem~\ref{theorem4}]
Define $\beta (\mathcal{H})_\Lambda:=\sum_k \mu_k e^{\ii \lambda_k t}(A_k)_\Lambda + h.c.$. A theorem by Araki says that $\forall \beta$ in 1D the state is well-defined in the $\Lambda \to \infty$ limit and is analytic \cite{Araki}. Similar result holds for some critical value of $|\beta|>\beta_{*}$ (and by extension for $|\mu_k|>\mu_{*}$ and $\forall t$) in higher dimensions \cite{EisertLocality}. 

Let us show that $(A_k)_\Lambda:=\sum_{x\in\Lambda} a_x$ are pseudolocal dynamical symmetries with frequencies $\lambda_k$, i.e. that the corresponding $\Aa_{s,\lambda}(O)$ satisfy the conditions in Theorem~\ref{theorem3}.  Without loss of generality assume that $\omega_t(A_k)=0$. Then we can explicitly check that,
\begin{align*}
&\int_0^T dt\left\{e^{\ii \lambda_k t}\tr_{\Lambda}\left[\rho_\Lambda(t) e^{-\ii \lambda_k t}\sum_{x \in \Lambda}(a_{x,k})[H_\Lambda,O]\right]\right\}= \nonumber\\
&\int_0^T dt\left\{\tr_{\Lambda}\left[\left([H_\Lambda,\rho_\Lambda(t)] \sum_{x \in \Lambda}(a_{x,k}) \right.\right.\right.\\ 
&+\left.\left. \left.\rho_\Lambda(t) [H_\Lambda,\sum_{x \in \Lambda}(a_{x,k})]\right)O\right]\right\}.
\end{align*}
Because $\rho(t)$ is analytic we may express it as a unique uniformly converging Fourier series $\rho_\Lambda(t)=\sum_{n \in \ZZ} e^{i t n\frac{\theta}{T}}\rho_n $. Moreover, by the dual equations of motion, 
\begin{equation}
\frac{d}{dt}(\rho_\Lambda(t))_\Lambda=-\ii ( [H_\Lambda,\rho_\Lambda(t)])_\Lambda
\end{equation}

Hence,
\begin{equation*}
\int_0^T dt\left\{\tr_{\Lambda}\left[\left([H_\Lambda,\rho_\Lambda(t)] \sum_{x \in \Lambda}(a_{x,k})\right)O\right]\right\}=0,
\end{equation*}
using orthogonality of the Fourier coefficients. Now note that by assumption $\frac{d}{dt}\tau_t((A_k))_\Lambda= \LL(A_k)_\Lambda=\ii \lambda (A_k)_\Lambda$, which immediately implies the desired result upon taking the thermodynamic limit and taking into account the proof of Theorem~\ref{theorem1}. 

Parameterize the flow with a single parameter $s$, i.e. $\mu_k(s)=\mu_k s$. By analyticity, the equations for the flow of the t-GGE satisfy,
\begin{equation}
\frac{d}{dt}\omega_{s,t}(O)=\mu_k e^{\ii \lambda_k t} \Aa_{s,t=0}(O),
\end{equation}
which in general is solved by a path-ordered exponential. However, as we assumed that the pseudolocal dynamical symmetries form a closed finite algebra under commutation we may rewrite this solution in the form given in the theorem without path ordering by using the Baker–Campbell–Hausdorff formula. The rest of the claims follow directly from Theorems 6.1, 6.2 and Corollary 6.7 of \cite{Doyon} with $\beta H_\Lambda \to \beta (\mathcal{H})_\Lambda$. 
\end{proof}

\section{Details of the examples}

Here we will discuss some details of the calculations done in Sec.~\ref{examples}. 

\subsection{Spin-$1$ model with scars}
As shown in \cite{Tom2} $J^{\pm},J^z$ generate an $su(2)$ algebra and the scarred states in Eq.\eqref{scarredstates} form a representation for this algebra with $J^{\pm}$ being the roots (i.e. raising and lowering operators). The initial state \eqref{scarinitial} is an element of this algebra. The eigenvalues of $J^z$ in the scarring representation are $2n-V$, where $n=0,\ldots,V$. By rotating the algebra with $\exp{\ii \frac{\pi}{2} J^y}$ to the $x-$basis the initial state is diagonal. This allows us to easily find the trace $Z_0$ as given in the main text. For the t-GGE using the scar representation we may directly find that,
\begin{equation}
Z=\frac{e^{-V \mu(t))} \left(e^{\mu(t)  V+\mu(t)  (2 (V+1)-V)}-1\right)}{e^{2 \mu(t) }-1}+3^V-V-1,
\end{equation}
by counting the dimension of the kernel $P$ for the second part in the sum and we abbreviated $\mu(t)=\cos(2ht)\mu$. Likewise, we find for the other results from the main text using the representation in the scarred subspace and rotating $x$ to $z$,
\begin{align}
&\ave{A_1A_{-1}}_0=4 \frac{f(V)^2}{Z_0}\sum_{m=-V/2}^{V/2} e^{2 \mu_0 V} \left[\frac{m^2}{2}+\frac{V}{4}\left(\frac{V}{2}+1\right) \right]\\
&\ave{A_1}_0=\frac{2f(V)}{Z_0}\sum_{m=-V/2}^{V/2} e^{2 \mu_0 V} m
\end{align}
All of these can be simplified into the forms given in the main text. The initial average energy $\ave{H}_0$ is straightforwardly found using the spin flip symmetry discussed in the main text.
Similarly, to find the $\beta$ at small $\mu_0$ we simply expand $\exp(-\beta H)\approx \one + \beta H$ the equation,
\begin{equation}
\ave{H}_0=\ave{H}=\tr(-\beta H^2),
\end{equation}
is easily solved to be,
\begin{equation}
\beta=\frac{2 d N^D \sinh ^2\left(\left| \mu _0\right| \right)}{\left(2 \cosh \left(2 \left| \mu _0\right| \right)+1\right) \left(\left(d^2+3 h^2\right) N^D+12 (N-1)^D\right)}.
\end{equation}
Taking the thermodynamic limit gives the result in the main text.

\subsection{$t-J_z model$}
As shown in \cite{fragmentationSanjay} the SLIOMs can be conveniently written as,
\begin{equation}
A_k =\sum_{j_1 < \ldots < j_k} S_{1,j_1-1} \left(\prod_{m = 1}^{k-1}{P_{j_m} S_{j_m+1,j_{m+1}-1}}\right) Z_{j_k} ,
\end{equation}
where $P_x=N^\uparrow_x+N^\downarrow_x$, $S_{x,y}=\prod_{j=x}^y (\one-P_j)$ and the sum is defined as $\sum_{j_1 < \ldots < j_k}(x):=\sum_{j_1=1}^N \sum_{j_2=j_1+1}^N\ldots \sum_{j_k=j_{k-1}+1}^N (x)$. Useful relations are also, $ Z_x^2 = P_x, P_x Z_x = Z_x$ and it is useful to see that all the operators in the SLIOMs commute and that $P_x$, ($\one-P_x$), $S_{x,y}$ are projectors. The $A_k$ are diagonal in the $Z_x$ basis. Hence, in order to compute the partition function of the t-GGE \eqref{tjparitionfunction} we note that $P_x$, ($\one-P_x$) have eigenvalues $\{1,0\}$ (resp. $\{0,1\}$) on opposite subspaces and $Z_x$ has eigenvalues $\{1,-1,0\}$ ($1,-1$ correspond to the $P_x$ eigenvalue 1 subspace). Hence, it becomes a matter of simple combinatorics and splitting of the expressions $\exp(\mu_k A_k)$ to evaluate,
\begin{equation}
Z=\sum_{j=1}^{4^N} \exp\left(\prod_{m=1}^N \lambda_{j,m}\right),
\end{equation}
where $\lambda_{j,m}$ is the $j-$th diagonal value of the $m$-th operator from the left. Similarly, we can find $Z(\alpha_1,\alpha_2)$ for the examples given in the main text. 

The SLIOMs containt non-local strings $S_{j,m}$. However, since $P_x S_{j,m}=0$ for $j \le x \le m$ and recalling that $ Z_x^2 = P_x, P_x Z_x = Z_x$ and that $P_x$, ($\one-P_x$), $S_{x,y}$ are projectors, it is easy to see that these strings can identically cancel in the expressions $exp(\mu_k A_k)$ when expanded, apart from the leading order in $\mu_k$ the contribution of which in the thermodynamic limit is identically small. However products of the different $A_k$ in the full $\rho_{tGGE}$ can, for different $\mu_k$ render the contribution for the strings thermodynamically finite and this is the origin of the fragmentation phase transition discussed in the main text.

\bibliography{main}

\end{document}